\algrenewcommand\textproc{\textsf}
\newtheorem{theorem}{Theorem}[section]
\newtheorem{lemma}[theorem]{Lemma}
\newtheorem{proposition}[theorem]{Proposition}
\newtheorem{claim}{Claim}
\theoremstyle{plain}
\newtheorem{definition}[theorem]{Definition}
\newcommand{\opt}{\mathrm{OPT}}
\newcommand{\bbN}{\mathbb{N}}
\newcommand{\bbR}{\mathbb{R}}
\newcommand{\bbZ}{\mathbb{Z}}
\newcommand{\optD}{\overline{\mathrm{OPT}}}
\newcommand{\set}[1]{\{#1\}}
\newcommand{\I}{\mathcal{I}}
\newcommand{\Dr}{\Delta_{\mathrm{r}}}
\newcommand{\Db}{\Delta_{\mathrm{b}}}
\DeclareMathOperator*{\argmax}{arg\,max}
\newcommand\bsbox[2]{%
  \raisebox{-1pt}{%
  \stackinset{c}{}{c}{}{\rotatebox{45}{\rule[-.5em]{.2pt}{3em}}}{%
  \def\stacktype{L}%
  \setstackgap{L}{.5\baselineskip}%
  \kern-1pt\makebox[\widthof{\refA}][r]{#1}\stackon{}{\smash{\makebox[\widthof{\refB}]{#2}}}\kern1pt%
  }%
  \kern-5.5pt}%
}
\title{Approximability of Monotone Submodular Function Maximization under Cardinality and Matroid Constraints in the Streaming Model}
\author{
  Chien-Chung Huang  \\ CNRS, DI ENS, PSL \\  \texttt{villars@gmail.com}
  \and
  Naonori Kakimura\thanks{Supported by JST ERATO Grant Number JPMJER1201, Japan, and by JSPS KAKENHI Grant Number JP17K00028.}\\ Keio University\\  \texttt{kakimura@math.keio.ac.jp}
     \and
  Simon Mauras \\     Universit\'{e} de Paris, IRIF, CNRS \\  \texttt{simon.mauras@irif.fr}
    \and
  Yuichi Yoshida \\
  National Institute of Informatics\\
  \texttt{yyoshida@nii.ac.jp}
}
\begin{document}
\maketitle
\begin{abstract}
  Maximizing a monotone submodular function under various constraints is a classical and intensively studied problem.
  However, in the single-pass streaming model, where the elements arrive one by one and an algorithm can store only a small fraction of input elements,
  there is much gap in our knowledge, even though several approximation algorithms have been proposed in the literature.

  In this work, we present the first lower bound on the approximation ratios for cardinality and matroid constraints that beat $1-\frac{1}{e}$ in the single-pass streaming model.
  Let $n$ be the number of elements in the stream.
  Then, we prove that any (randomized) streaming algorithm for a cardinality constraint with approximation ratio $\frac{2}{2+\sqrt{2}}+\varepsilon$ requires $\Omega\left(\frac{n}{K^2}\right)$ space for any $\varepsilon>0$, where $K$ is the size limit of the output set.
  We also prove that any (randomized) streaming algorithm for a (partition) matroid constraint with approximation ratio $\frac{K}{2K-1}+\varepsilon$ requires $\Omega\left(\frac{n}{K}\right)$ space for any $\varepsilon>0$, where $K$ is the rank of the given matroid.

  In addition, we give streaming algorithms when we only have a weak oracle with which we can only evaluate function values on feasible sets.
  Specifically, we show weak-oracle streaming algorithms for cardinality and matroid constraints with approximation ratios $\frac{K}{2K-1}$ and $\frac{1}{2}$, respectively, whose space complexity is exponential in $K$ but is independent of $n$.
  The former one exactly matches the known inapproximability result for a cardinality constraint in the weak oracle model.
  The latter one almost matches our lower bound of $\frac{K}{2K-1}$ for a matroid constraint, which almost settles the approximation ratio for a matroid constraint that can be obtained by a streaming algorithm whose space complexity is independent of $n$.
\end{abstract}

\thispagestyle{empty}
\setcounter{page}{0}
\newpage


\section{Introduction}

A set function $f\colon2^E \rightarrow \mathbb{R}$ on a ground set $E$ is \emph{submodular} if it satisfies the \emph{diminishing marginal return property}, i.e., for any subsets $S \subseteq T \subsetneq E$ and $e\in E \setminus T$,
\[
  f(S \cup \set{e}) - f(S)\geq f(T \cup \set{e}) - f(T).
\]
A function is \emph{monotone} if $f(S)\leq f(T)$ for any $S\subseteq T \subseteq E$.
Submodular functions play a fundamental role in combinatorial optimization, as they capture rank functions of matroids, edge cuts of graphs, and set coverage, just to name a few examples.

In addition to their theoretical interests, submodular functions have attracted much attention from the machine learning community because they can model various practical problems such as online advertising~\cite{Alon:2012em,Kempe:2003iu,Soma:2014tp}, sensor location~\cite{Krause:2008vo}, text summarization~\cite{Lin:2010wpa,Lin:2011wt}, and maximum entropy sampling~\cite{Lee:2006cm}.
Many of these problems can be formulated as non-negative monotone submodular function maximization under a cardinality constraint or a matroid constraint.
Namely,
\begin{align}
  \text{(Cardinality constraint)} & \quad \text{maximize }f(S)  \quad \text{subject to } |S|\leq K, \quad S\subseteq E. \label{eq:card_problem} \\
  \text{(Matroid constraint)} & \quad \text{maximize }f(S) \quad \text{subject to } S\in\mathcal{I}, \quad S\subseteq E, \label{eq:matroid_problem}
\end{align}
where $f\colon 2^E \to \bbR_+$ is a monotone submodular function, $K \in \bbZ_+$ is a non-negative integer, and $\mathcal{M}=(E, \mathcal{I})$ is a matroid with independent family $\mathcal{I}$.
Note that a matroid constraint includes a cardinality constraint as a special case: Choose the matroid in~\eqref{eq:matroid_problem} to be the uniform matroid of rank $K$.


In some applications mentioned before, the amount of input data is much larger than the main memory capacity of individual computers.
Then, it is natural to consider the \emph{streaming model}, where each item in the ground set $E$ arrives sequentially, and we are allowed to use a small amount of memory.
Unless stated otherwise, we always consider \emph{single-pass algorithms}, that is, algorithms that scan the entire stream only once.

Submodular maximization under the streaming model has received much attention recently. Algorithms with various approximation ratios and space requirements
have been proposed for the cardinality constraint~\cite{Badanidiyuru:2014ib,Kazemi2019}, the knapsack constraint~\cite{HuangKakimuraWADS2019,Huang2019,Yu:2016}, and the matroid
constraint~\cite{DBLP:journals/mp/ChakrabartiK15,DBLP:conf/icalp/ChekuriGQ15}.
However, there are only a few inapproximability results.
McGregor and Vu~\cite{McGregor2019} showed that any streaming algorithm for maximizing a coverage function under a cardinality constraint with approximation ratio better than $1-\frac{1}{e}$ requires $\Omega\left(\frac{n}{K^2}\right)$ space, where $n := |E|$ is the number of elements.
Norouzi-Fard~et~al.~\cite{NorouziFard:2018vb} showed that any streaming algorithm for maximizing a monotone submodular function under a cardinality constraint with approximation ratio better than $\frac{K}{2K-1}$ requires $\Omega\left(\frac{n}{K}\right)$ space, assuming that we can only evaluate function values of feasible sets, which we call the \emph{weak oracle model}.
A standard value oracle is called \emph{strong} for comparison.

\subsection{Our contributions}
\begin{table}[t!]
  \centering
  \caption{Summary of results}\label{tab:intro}
  \begin{tabular}{llrrrr}
  \toprule
  Constraint & & Approximation ratio & Space usage & Oracle &  Reference\\
  \midrule \midrule
  \multirow{5}{*}{Cardinality} & \multirow{2}{*}{Algorithm} & $\frac{1}{2}-\varepsilon$ & $O\left(\frac{K }{\varepsilon}\right)$& weak & \cite{Kazemi2019} \\
    & & $\frac{K}{2K-1}- \varepsilon $ & $\widetilde{O}\left(\frac{K2^{2K}}{\varepsilon}\right)$ & weak & Theorem~\ref{thm:alg-weak-oracle-cardinality} \\ \cline{2-6}
  & \multirow{4}{*}{Hardness}  & $1-{\left(1-\frac{1}{K}\right)}^K+\varepsilon, \forall \varepsilon > 0$ & $\Omega\left(\frac{n}{K^2}\right)$ & strong &  \cite{McGregor2019} \\
  & & $\frac{K}{2K-1}+\varepsilon, \forall\varepsilon > 0$& $\Omega\left(\frac{n}{K}\right)$ & weak & \cite{NorouziFard:2018vb} \\
  & & $\frac{2}{2+\sqrt{2}}+\varepsilon, \forall\varepsilon > 0$ & $\Omega\left(\frac{n}{K^2}\right)$ & strong & Theorem~\ref{thm:lower-bound-cardinality}  \\
  \midrule
  \multirow{4}{*}{Matroid}& \multirow{2}{*}{Algorithm}& $\frac{1}{4}$ & $K \log^{O(1)}n$& strong & \cite{DBLP:journals/mp/ChakrabartiK15,DBLP:conf/icalp/ChekuriGQ15} \\
  && $\frac{1}{2}-\varepsilon $ & $\widetilde{O}\left(\frac{K^{5K+1}}{\varepsilon} \right)$ & weak & Theorem~\ref{thm:alg-weak-oracle-matroid} \\ \cline{2-6}
  & Hardness & $\frac{K}{2K-1}+\varepsilon, \forall\varepsilon > 0$ & $\Omega\left(\frac{n}{K}\right)$ & strong & Theorem~\ref{thm:lower-bound-matroid}  \\
  \bottomrule
  \end{tabular}
\end{table}

The first contribution of this work is giving inapproximability results for cardinality and matroid constraints that beat $1-\frac{1}{e}$ in the strong oracle model for the first time.

Before explaining our results, we first note that, in the context of submodular maximization, it is standard to assume that a value oracle of a function $f$ is given and the complexity of algorithms is measured based on the number of oracle calls~\cite{Badanidiyuru:2014ib,DBLP:journals/mp/ChakrabartiK15,DBLP:conf/icalp/ChekuriGQ15,HuangKakimuraWADS2019,Huang2019,Kazemi2019,Yu:2016}.
However, a value oracle of $f$ is too powerful in the streaming setting if we are allowed to put an exponential number of queries.
In fact, if we have a free access to the value oracle, we can maximize $f$ even without seeing the stream by querying about every subset.
This observation leads to the following natural model, which we call the \emph{element-store model}.
\begin{definition}[Element-store model]
  Let $E = \{e_1,\ldots,e_n\}$ be the ground set and $f\colon 2^E \to \mathbb{R}_+$ be a set function.
  A streaming algorithm in the \emph{element-store model} maintains a set of elements $S$, which is initially an empty set, and possess an additional memory $M$.
  At step $t \in \{1,\ldots,n\}$, the item $e_t$ is given to the algorithm, and the algorithm updates $S$ and the content of $M$ using the values of $f(S')$ for $S' \subseteq S \cup \{e_t\}$ and the content of $M$.
  Finally, the algorithm outputs a subset of $S$.
  The space complexity of the algorithm is the sum of the number of words stored in $M$ and the maximum size of $S$ over the $n$ steps.
\end{definition}
The weak oracle model is equivalent to constraining $S$ always to be a feasible set.
We note that all known streaming algorithms for submodular function maximization~\cite{Badanidiyuru:2014ib,DBLP:journals/mp/ChakrabartiK15,DBLP:conf/icalp/ChekuriGQ15,HuangKakimuraWADS2019,Huang2019,Kazemi2019,Yu:2016} lie in the element-store model.
Now, we state our results.
\begin{theorem}\label{thm:lower-bound-cardinality}
  For any $K \in \bbN$ and $\varepsilon > 0$, any (randomized) streaming algorithm for monotone submodular function maximization under a cardinality constraint in the element-store model with approximation ratio $\frac{2}{2+\sqrt{2}}+\varepsilon\approx 0.585 +\varepsilon$ requires $\Omega\left(\frac{n}{K^2}\right)$ space.
\end{theorem}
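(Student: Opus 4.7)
The plan is to prove the lower bound via a reduction from multi-party communication complexity, following the broad framework of McGregor--Vu and Norouzi-Fard~et~al.\ but with a refined hard instance tailored to produce the tighter ratio $\frac{2}{2+\sqrt{2}}$. Specifically, I would use a $K$-party one-way set-disjointness-style problem whose total communication lower bound is $\Omega(n/K)$; because a streaming algorithm of space $s$ can be simulated by $K$ players passing their memory contents sequentially, total communication is $O(Ks)$, which yields the desired $s = \Omega(n/K^2)$ space bound.

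First, I would design a family of monotone submodular functions parameterized by a ``hidden'' set $A \subseteq E$ of $K$ distinguished elements planted by the adversary. Each of the $K$ players holds one ``layer'' of the stream, and the construction is such that an element belongs to $A$ only if it is marked consistently across all $K$ players' layers; this encodes the disjointness promise. The function will be coverage-type so that the value of any feasible set $S$ depends essentially on $|S \cap A|$ and on how the elements in $S \setminus A$ interact with a shared universe whose size is calibrated by a parameter depending on $2-\sqrt{2}$. The intended property is that if $|S \cap A| = \alpha K$, then $\f(S)$ is at most some concave expression $V(\alpha)$ with $V(1) = \opt$, so that the approximation ratio achievable by the algorithm equals $V(\alpha^\ast)/\opt$ for the best $\alpha^\ast$ the algorithm can achieve.

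Second, I would pinpoint the construction so that the worst-case ratio is exactly $\tfrac{2}{2+\sqrt{2}}$. The value $r = \tfrac{2}{2+\sqrt{2}} = 2-\sqrt{2}$ satisfies $r^2 - 4r + 2 = 0$, which suggests a construction with a two-layer coverage structure: a ``primary'' universe fully covered by $A$, and a ``secondary'' universe in which decoy elements partially substitute for missing elements of $A$ but with overlaps governed by the quadratic trade-off above. I would verify submodularity by writing $\f$ explicitly as a weighted sum of two coverage functions and checking diminishing returns. A standard symmetrization argument then shows that, under a uniformly random planting of $A$, any algorithm using space $s = o(n/K^2)$ retrieves in expectation only $o(1)$ elements of $A$, forcing $\alpha^\ast \to 0$ and hence $\f(S)/\opt \leq r + \varepsilon$.

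Third, to convert this into the advertised communication lower bound, I would invoke Yao's principle to reduce randomized algorithms to deterministic algorithms against a hard input distribution, and then exhibit a reduction mapping any $K$-party set-disjointness instance into a stream of the above form, ensuring that approximation ratio strictly better than $\tfrac{2}{2+\sqrt{2}}$ solves the disjointness instance. The main obstacle, and the most delicate part, is designing the coverage function so that the algebraic trade-off coming from the ``partial catch'' of $A$ gives exactly the constant $2-\sqrt{2}$ rather than the weaker $1-(1-1/K)^K$ of McGregor--Vu; this seems to require two interacting layers of hidden structure (rather than a single one), plus a careful choice of weights balancing the primary and secondary universes. Ensuring that the reduction works in the element-store model (where the algorithm may store elements it ultimately discards) and against a strong oracle---rather than only the weak oracle as in Norouzi-Fard~et~al.---is another nontrivial technical point that the construction must accommodate.
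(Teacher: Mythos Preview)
Your proposal has a genuine structural gap. You plan to reduce from $K$-party one-way communication complexity, but the element-store model gives the algorithm a \emph{global} value oracle for $f$: at every step it may query $f$ on any subset of its stored elements together with the incoming element. In a communication simulation, each player would have to answer those oracle queries, yet the value of $f$ encodes the hidden set $A$ that is distributed across all players. Unless the function is presented explicitly (as in the coverage setting of McGregor--Vu, where each streamed element \emph{is} a set and no separate oracle is needed), the players simply cannot simulate the oracle without extra communication, so the reduction does not go through. You flag this as ``another nontrivial technical point,'' but it is in fact the central obstruction, and the paper states explicitly that this is why it abandons the communication-complexity route.

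The paper's actual argument is direct and avoids communication complexity altogether. It builds a single colorwise-symmetric monotone submodular function on $n-K$ blue elements, $K-1$ red elements, and one purple element, engineered so that $f(b+1,0,0)=f(b,1,0)$ for all $b$: without the purple element, a red element is indistinguishable from a blue one via any oracle query the algorithm can pose. Blue and red elements are streamed in uniformly random order, purple last. A coupling argument (the ``canonical process'') shows that an algorithm storing $s=o(n/K^2)$ elements follows, with probability $1-o(1)$, the same trajectory it would follow on an all-blue stream, and therefore finishes holding only blue elements plus the purple one. The achievable value is then $\max\{f(K,0,0),f(K-1,1,0),f(K-1,0,1)\}$, while $\opt=f(0,K-1,1)$; optimizing an auxiliary integer parameter $h\approx\sqrt{2}(K-1)$ in the definition of $f$ makes this ratio exactly $\frac{2}{2+\sqrt{2}}$. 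No set-disjointness, no multi-party protocol, and the $\Omega(n/K^2)$ bound falls out of the union bound over the $n$ steps, each contributing a probability $O(K^2 s/n)$ that two red elements are simultaneously visible. Your two-layer coverage idea and the quadratic $r^2-4r+2=0$ heuristic are suggestive, but you never actually construct a function or verify the trade-off; the paper's construction is concrete and the optimization of $h$ is where the constant $2-\sqrt{2}$ genuinely appears.
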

\begin{theorem}\label{thm:lower-bound-matroid}
  For any $K \in \bbN$ and $\varepsilon > 0$, any (randomized) streaming algorithm for monotone submodular function maximization under a partition matroid constraint in the element-store model with approximation ratio $\frac{K}{2K-1} + \varepsilon$ requires $\Omega\left(\frac{n}{K}\right)$ space.
\end{theorem}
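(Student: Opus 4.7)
The plan is to prove Theorem~\ref{thm:lower-bound-matroid} by a reduction from a one-way communication complexity problem of hardness $\Omega(n/K)$, generalizing the technique of Norouzi-Fard et al.~\cite{NorouziFard:2018vb} from the weak-oracle cardinality setting to the strong-oracle partition-matroid setting.

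I would construct a distribution over hard instances as follows. Take a partition matroid $\mathcal{M}$ with $K$ classes $E_1,\ldots,E_K$ of size $n/K$ each, presented to the stream in order. In each class $E_j$, designate a single \emph{special} element $e_j^*$ placed at a uniformly random position; call the remaining elements of $E_j$ \emph{generic}. Define a monotone submodular function $f$ (for example, a weighted coverage function over a universe of size $2K-1$) engineered so that $f(\{e_1^*,\ldots,e_K^*\})=2K-1$, every basis of $\mathcal{M}$ containing no special has value exactly $K$, and the marginal contributions are calibrated so that the approximation ratio crosses the threshold $K/(2K-1)$ precisely when a suitable $\Theta(\varepsilon)$-dependent number of specials are captured in the output.

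By Yao's minimax principle it suffices to analyze deterministic algorithms against the uniform distribution on special locations. Provided the strong oracle cannot identify unseen specials (the hiding property addressed below), a symmetry argument yields $\Pr[e_j^*\in S_j]\le K\cdot\mathbb{E}|S_j|/n$ where $S_j$ is the stored subset of class $j$. Summing over $j$ bounds the expected number of captured specials by $sK/n$, and the coverage-gap calibration then forces $s=\Omega(\varepsilon n/K)$ for any algorithm achieving approximation ratio $K/(2K-1)+\varepsilon$.

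The main obstacle is handling the strong oracle: a value-oracle query on any subset of stored elements must not leak the identity of an unseen special. I would design $f$ so that every individual element has the same statistical footprint under the oracle, i.e., $f(\{e\})$ is distributed identically across generic and special $e\in E_j$, with specials distinguished only by a \emph{collective} structural property that the algorithm can detect only by storing sufficiently many elements from several classes; concretely, one may associate each element with a random-looking subset $T_e$ of the universe and take $f(S)=|\bigcup_{e\in S}T_e|$, ensuring the specials' $T$'s are jointly disjoint and exhaust the shared universe elements while generics' $T$'s overlap heavily. A coupling/indistinguishability argument then establishes that adaptive oracle queries give no asymptotic advantage over blindly storing uniformly random subsets of each class, reducing the strong-oracle analysis to the same combinatorial argument as in the weak-oracle case. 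Carrying out this construction so that $f$ remains monotone submodular, the approximation gap is exactly $K/(2K-1)$, and the hiding property is robust to all adaptive multi-element queries constitutes the heart of the proof.
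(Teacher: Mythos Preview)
Your proposal has a genuine gap at exactly the point you flag as ``the heart of the proof.'' The paper explicitly remarks that it \emph{avoids} the communication-complexity route because it is unclear how to integrate a strong value oracle into that framework; your plan walks directly into that difficulty without resolving it. Concretely, your coverage construction---specials' sets $T_{e_j^*}$ pairwise disjoint, generics' sets overlapping heavily---does not survive strong-oracle queries. If two generics $e,e'$ from the same class have heavily overlapping $T_e,T_{e'}$ while the special's $T_{e_j^*}$ does not overlap with them in the same way, then the pair query $f(\{e,e'\})$ versus $f(\{e_j^*,e\})$ already distinguishes the special. Cross-class pair queries are even worse: two specials give $|T_{e_i^*}|+|T_{e_j^*}|$ (disjoint) while two generics give strictly less, so an algorithm storing one element per class and querying pairs detects specials. ``Same statistical footprint on singletons'' is far from sufficient; you need indistinguishability under \emph{all} subset queries on stored elements, and your sketch does not provide it.

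The paper's solution is structurally different. It does not use coverage functions or a statistical hiding argument at all. Instead it builds, via an explicit recursion (Lemma~\ref{lem:hard-function-partition}), a \emph{colorwise-symmetric} monotone submodular function whose value depends only on the counts $(r_1,\ldots,r_K;b_1,\ldots,b_K)$ of red/blue elements per class, and which satisfies the \emph{exact} identity
\[
f(r_1,\ldots,r_{i-1},1,0,\ldots,0;\,b_1,\ldots,b_{i-1},b_i,0,\ldots,0)=f(r_1,\ldots,r_{i-1},0,0,\ldots,0;\,b_1,\ldots,b_{i-1},b_i+1,0,\ldots,0).
\]
Thus, while class $i$ is streaming (later classes unseen), the red element of class $i$ is \emph{literally indistinguishable} from a blue one under every possible oracle query---no coupling or randomness needed. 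The function is engineered so that $f(1,\ldots,1;0,\ldots,0)=(2K-1)!$ while the best any canonical-process run can reach is $f(0,\ldots,0,1;1,\ldots,1,0)=K\cdot(2K-2)!$, yielding the $K/(2K-1)$ gap. Note also the asymmetric partition ($|C_K|=1$): the last class is a single revealing element, not another size-$n/K$ block as in your setup. Achieving monotonicity and submodularity simultaneously with the indistinguishability identity and the exact $K/(2K-1)$ gap is delicate; the paper's recursion (Section~\ref{sec:def_func_matroid}) and its verification (Lemmas~\ref{lem:matroidFunctionMonotonicity}--\ref{lem:ManyColorsIndistiguishability}) are where the real work lies, and your proposal contains no analogue of this construction.
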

Indeed, the same inapproximability results hold for any streaming algorithm in the element-store model with unbounded computational power and memory space for $M$, as long as the number of elements stored in $S$ is bounded.
The proof techniques can be found in Section~\ref{subsec:technique}.

Next, we complement the previous and obtained inapproximability results by showing (weak-oracle) streaming algorithms for cardinality and matroid constraints.
We first present a weak-oracle streaming algorithm for a cardinality constraint with approximation ratio $\frac{K}{2K-1}-\varepsilon$, which is slightly better than the previous best approximation ratio of $\frac{1}{2}$~\cite{Badanidiyuru:2014ib,Kazemi2019} and exactly matches the known inapproximability for the weak oracle model~\cite{NorouziFard:2018vb}.
Although the space usage is exponential in $K$, it does not depend on the number of elements $n$.
\begin{theorem}\label{thm:alg-weak-oracle-cardinality}
  There exists a weak-oracle $\left(\frac{K}{2K-1}-\varepsilon\right)$-approximation streaming algorithm for monotone submodular function maximization under a cardinality constraint with $O\Bigl(\frac{K 2^{2K}\log(K/\varepsilon)}{\varepsilon}\Bigr)$ space.
\end{theorem}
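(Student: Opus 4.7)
The plan is to combine three ingredients: (a) threshold discretization for guessing $\opt$, (b) maintaining a size-$O(K)$ candidate pool per threshold using only feasible-set oracle queries, and (c) brute-force enumeration over all feasible subsets of each pool at the end of the stream.

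\paragraph{Threshold discretization.} Following the lazy-update device of SieveStreaming~\cite{Badanidiyuru:2014ib}, I approximate $\opt$ adaptively from the running maximum singleton value, keeping a geometric grid of thresholds $\tau = (1+\varepsilon)^i$ so that some $\tau^*$ in the grid lies within a $(1\pm\varepsilon)$ factor of the ``ideal'' critical value $\opt/(2K-1)$. This step contributes the factor $O(\log(K/\varepsilon)/\varepsilon)$ to the space.

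\paragraph{Candidate pool per threshold.} For each guess $\tau$ I maintain a pool $P_\tau$ of size at most $2K$ built only from feasible queries, as the disjoint union of a primary set $A_\tau$ and a secondary set $B_\tau$. In \emph{Phase~A}, while $|A_\tau|<K$, an incoming $e$ is added to $A_\tau$ whenever the marginal $f(A_\tau\cup\{e\})-f(A_\tau)\ge\tau$; since $|A_\tau\cup\{e\}|\le K$ this is feasible. Once $|A_\tau|=K$, \emph{Phase~B} begins: an incoming $e$ is placed into $B_\tau$ whenever there exists $e'\in A_\tau$ with $f((A_\tau\setminus\{e'\})\cup\{e\})-f(A_\tau\setminus\{e'\})\ge\tau$, which again only queries $K$-element sets. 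Phase~B stops once $|B_\tau|=K$. At the end of the stream, for every $\tau$ I enumerate all $T\subseteq P_\tau=A_\tau\cup B_\tau$ with $|T|\le K$, evaluate $f(T)$, and return the overall maximizer. Since $|P_\tau|\le 2K$, each enumeration costs at most $2^{2K}$ feasible calls, and the total space is $O\bigl(K\cdot 2^{2K}\cdot\log(K/\varepsilon)/\varepsilon\bigr)$.

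\paragraph{Approximation analysis and main obstacle.} Fix the critical guess $\tau^*\approx\opt/(2K-1)$. The easy regime is $|A_{\tau^*}|=K$ with $f(A_{\tau^*})\ge K\tau^*\ge\frac{K}{2K-1}\opt$ already hitting the target. The delicate regimes are (i) $|A_{\tau^*}|<K$, where the standard rejected-element bound $\opt\le f(A_{\tau^*})+K\tau^*$ yields only $f(A_{\tau^*})\ge\frac{K-1}{2K-1}\opt$, falling short by a single marginal step, and (ii) $|A_{\tau^*}|=K$ but with $f(A_{\tau^*})$ only slightly below $\frac{K}{2K-1}\opt$. In both delicate regimes the remaining $\frac{1}{2K-1}$ of value has to be extracted by the brute-force enumeration exploiting the elements in $B_{\tau^*}$, using that any $o\in\opt\setminus P_{\tau^*}$ which arrived during Phase~B failed every feasible swap test at that time, while any $o$ arriving during Phase~A had marginal below $\tau^*$ then. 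The main obstacle is constructing the submodular exchange/averaging argument that, given these constraints together with the Phase~A accumulation $f(A_{\tau^*})\ge|A_{\tau^*}|\tau^*$ and the Phase~B recorded swap gains, certifies the existence of some feasible $T\subseteq P_{\tau^*}$ with $f(T)\ge\bigl(\tfrac{K}{2K-1}-\varepsilon\bigr)\opt$. The charging must telescope over the in-stream history of $A_{\tau^*}$ and must align with the tight inapproximability of Norouzi-Fard~et~al.~\cite{NorouziFard:2018vb}; I expect this non-routine exchange argument to be the bulk of the technical work.
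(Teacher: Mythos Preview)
Your proposal is not the paper's approach, and it has a concrete gap rather than merely an unfinished analysis.

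The paper does not build a pool and enumerate. It defines a recursive branching procedure $\textsc{Cardinality}(k,s,v,g)$, where $k$ upper-bounds $|\optD|$ and $s$ upper-bounds the output size, and proves by a short double induction that the output has value at least $\frac{s}{k+s-1}v$. At each level two branches are run in parallel: Branch~1 grabs the first element $e$ with $g(e)\ge \frac{v}{k+s-1}$ (so $e$ precedes or equals $o_1$) and recurses on $(k,\,s-1,\,v-g(e),\,g(\cdot\mid e))$; Branch~2 assumes $g(o_1)<\frac{v}{k+s-1}$, so $g(\optD-o_1)\ge\frac{k+s-2}{k+s-1}v$, and recurses on $(k-1,\,s,\,\frac{k+s-2}{k+s-1}v,\,g)$. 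Setting $k=s=K$ gives the ratio $\frac{K}{2K-1}$. The space recurrence $\Gamma(k,s)\le\Gamma(k-1,s)+\Gamma(k,s-1)+O(1)$ yields $O(K2^{2K})$ coexisting branches per guess of $v$; the $2^{2K}$ is genuinely memory (parallel branches each holding an element), not an enumeration count.

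Your scheme fails in your own regime~(i). When $|A_{\tau^*}|<K$ at the end of the stream, Phase~B is never triggered by your description, so $B_{\tau^*}=\emptyset$ and $P_{\tau^*}=A_{\tau^*}$. Brute-force enumeration over subsets of $A_{\tau^*}$ returns at most $f(A_{\tau^*})$, which your rejected-element bound caps at $\frac{K-1}{2K-1}\opt$; there is nothing in $B_{\tau^*}$ for an exchange argument to exploit. This is exactly the threshold trade-off that pins SieveStreaming at $\frac12$, and no single-threshold greedy pool will cross it to $\frac{K}{2K-1}$. (A side remark: your space accounting also conflates oracle calls with stored elements --- your pools hold only $O(K)$ elements per threshold, so your actual memory is $O\bigl(K\log(K/\varepsilon)/\varepsilon\bigr)$, which would be much \emph{smaller} than the claimed bound if the algorithm worked.) The paper's branching is what supplies the missing degree of freedom: instead of committing to one threshold, it recursively hedges between ``take a good-enough element now'' and ``pretend $o_1$ was cheap and shrink $k$,'' and the ratio $\frac{s}{k+s-1}$ tracks this trade-off exactly.
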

Then, we extend the algorithm given in Theorem~\ref{thm:alg-weak-oracle-cardinality} to a weak-oracle streaming algorithm for a matroid constraint with approximation ratio $\frac{1}{2}-\varepsilon$, which almost matches the inapproximability of $\frac{K}{2K-1}+\varepsilon$ given in Theorem~\ref{thm:lower-bound-matroid}.
This almost settles the the approximation ratio for a matroid constraint that can be achieved by a streaming algorithm with space complexity independent of $n$, for both the weak and strong oracle models.
\begin{theorem}\label{thm:alg-weak-oracle-matroid}
  There exists a weak-oracle $\left(\frac{1}{2} -\varepsilon\right)$-approximation streaming algorithm for monotone submodular function maximization under a matroid constraint with $O\left(\frac{K^{5K+1} \log(K/\varepsilon)}{\varepsilon}\right)$ space.
\end{theorem}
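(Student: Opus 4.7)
The plan is to generalize the weak-oracle cardinality algorithm from Theorem~\ref{thm:alg-weak-oracle-cardinality} to matroids. As is standard, I first reduce to the case of a fixed $(1+\varepsilon)$-approximate guess $\tau$ of $f(\opt)$ by running $O(\log(K/\varepsilon)/\varepsilon)$ geometrically spaced copies in parallel; this accounts for the $\log(K/\varepsilon)/\varepsilon$ factor in the final space bound. For the fixed $\tau$, I maintain a collection $\mathcal{F}$ of candidate independent sets of size at most $K$, each corresponding to a distinct sequence of decisions made about incoming elements. When an element $e$ arrives, each $S \in \mathcal{F}$ is replaced by one or more children, corresponding to (i) keeping $S$ unchanged, (ii) adding $e$ when $S \cup \{e\} \in \mathcal{I}$ and the marginal gain exceeds a threshold proportional to $\tau/K$, or (iii) swapping $e$ for some $s$ in the fundamental circuit $C(e,S)$, subject to the same threshold. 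The weak-oracle restriction is respected because every maintained set, and every successor it spawns, is independent.

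For the space bound, each candidate $S$ undergoes at most $K$ modifications over its lifetime (only $K$ elements ever enter), and each modification admits at most $O(K^5)$ branching choices (here the $K^5$ covers the circuit size, a discretization of the threshold, and a constant for the keep/add/swap trichotomy); hence $|\mathcal{F}| \leq K^{5K}$ at every moment, for a total of $O(K^{5K+1} \log(K/\varepsilon)/\varepsilon)$ words when one accounts for storing $K$ elements per candidate and the $\log(K/\varepsilon)/\varepsilon$ parallel guesses. The final output is $\argmax_{S \in \mathcal{F}} f(S)$.

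The analysis fixes an optimal basis $\opt = \{o_1,\ldots,o_K\}$ listed by arrival time and the correct guess $\tau$, then identifies a specific trajectory $S^\star$ in the enumeration. By the strong matroid exchange theorem, I obtain a bijection $\pi \colon \opt \to S^\star$ such that for every $o_i$, either $o_i \in S^\star$ or $\pi(o_i)$ was exchanged-in at an earlier arrival with marginal gain comparable to the contribution of $o_i$ to the current partial basis. Combined with monotonicity and submodularity, a charging argument analogous to the one for the cardinality case of Theorem~\ref{thm:alg-weak-oracle-cardinality} yields $f(S^\star) \geq \bigl(\tfrac{1}{2} - \varepsilon\bigr)\tau$. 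The main obstacle is precisely showing that this canonical trajectory lies in the enumeration: unlike the cardinality case, where any $s \in S$ may be evicted and a tight exchange inequality yields the stronger $\tfrac{K}{2K-1}$ bound, the matroid setting restricts swaps to the fundamental circuit and the correct swap partner can depend on the suffix of the stream. The $K^5$ branching factor is designed to hedge over these swap choices, and proving via a structural exchange lemma that this hedge always contains a near-optimal trajectory — while losing a small factor compared with the cardinality analysis — is where the bulk of the work lies.
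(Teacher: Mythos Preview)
Your plan diverges from the paper's and, as written, has two concrete gaps.

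\textbf{Space bound.} You assert that each candidate $S$ undergoes at most $K$ modifications because ``only $K$ elements ever enter,'' but option~(iii) is a swap: it removes one element and inserts another, so $|S|$ does not grow and nothing in your scheme prevents a single trajectory from performing arbitrarily many swaps. That would let $|\mathcal{F}|$ grow with the stream length $n$ rather than stay bounded by $K^{5K}$. The paper never swaps; its recursion tree has depth exactly $K$ because each recursive call to $\Call{Matroid}{}$ decrements the parameter $k$ by one, and this is what pins the space at $O(K^{5K+1})$.

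\textbf{Correctness.} You invoke a strong-exchange bijection $\pi\colon\opt\to S^\star$ after the fact and say the $K^5$ hedge is ``designed'' to contain the right swap at each step, but then defer the argument to an unstated structural lemma. The difficulty you yourself name --- that the correct swap partner can depend on the suffix of the stream --- is precisely what the paper's mechanism avoids. For each discretized guess $b$ of $g(o_1)$ (there are $O(K^4)$ of them), the paper greedily accumulates a set $T$ of at most $K$ elements, each of which clears the threshold $bv/K^4$ and keeps $I\cup T$ independent, and branches once on every $e\in T$ by recursing with $I+e$. The key step (Lemma~\ref{lem:matroidFirst}, a short circuit-elimination argument from the strong circuit axiom) shows that either some $e\in T$ already satisfies $(I+e)\cup(\optD-o_1)\in\I$, or else $o_1$ itself is eligible to enter $T$ when it arrives; either way one branch is good. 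No swap and no post-hoc bijection are needed, and this $T$-collection idea is what is missing from your outline.
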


All the previous and obtained results are summarized in Table~\ref{tab:intro}. Here, $\widetilde{O}(\cdot)$ hides a polylogarithmic factor in $\frac{K}{\varepsilon}$.

\subsection{Our techniques}\label{subsec:technique}


\paragraph{Lower bound construction}
We first describe the intuition behind our proof of Theorem~\ref{thm:lower-bound-cardinality}.
The elements of the ground set $E$ are colored either blue, red, or purple, and we have a large number of $n-K$ blue elements, $K-1$ red elements, and one purple element.
Let $B$, $R$, and $P$ be the set of blue, red, and purple elements, respectively, that is, $|B|\;=n-K$, $|R|\;= K-1$, and $|P|\;=1$.
We note that the colors of elements are not revealed to algorithms.
We design our monotone submodular function $f\colon2^E \to \bbZ_+$ so that it is \emph{colorwise-symmetric} meaning that the value of $f(S)$ is uniquely determined by the number of blue, red, and purple elements in a subset $S$.
We write $f(b,r,p)$ to denote the value of $f(S)$ when there are $b$ blue elements, $r$ red elements, and $p$ purple elements in $S$.
We will assume that $f(0, K-1, 1)$ gives the optimal value.

In the input stream, blue and red elements arrive in a random order, and then the purple element arrives at the end.
We design $f$ so that it is hard to distinguish blue and red elements~(without using the purple element).
More precisely, $f$ satisfies the property $f(b+1, 0, 0)=f(b, 1, 0)$ for any non-negative integer $b$.
As the number of blue elements is much larger than that of red elements and the space is limited, with high probability, we must immediately throw away red elements from the memory right after they arrive.
Thus, with high probability, we obtain the values $f(K,0,0)$, $f(K-1,1,0)$ or $f(K-1,0,1)$, i.e., most of the time the algorithm ends up with at least $K-1$ blue elements.
On the basis of some ideas suggested by computer simulations, we construct $f$ so that the maximum of the three values is small.

The proof outline of Theorem~\ref{thm:lower-bound-matroid} is similar, but is more involved.
We first regard that the ground set $E$ is partitioned into classes $C_1,\ldots,C_K$ such that $|C_1| = \cdots = |C_{K-1}|=:m$ and $|C_K| = 1$, and we constrain that the output set takes at most one element from each class, which is a partition matroid constraint.
For each class $i$, there is a unique ``right'' element, referred to as the red element of the class, and for the first $K-1$ classes, there are a large number of ``wrong'' elements, referred to as blue elements of the class.
We will define a monotone submodular function $f\colon 2^E \to \bbR_+$ whose value is determined by (1) the presence/absence of the red element of class from 1 to $K$, and (2) the number of blue elements of class from 1 to $K-1$.
More precisely, given a set $S$, we denote by $r_i$ and $b_i$ the numbers of red and blue elements of class $i$ in $S$ for $1 \leq i \leq K$, respectively, and then $f(S)$ takes the form of $f(r_1, r_2,\ldots, r_K; b_1, b_2,\ldots, b_K)$.
We note that $r_i \in \{0,1\}$ for all $i$, and $b_K$ should always be $0$.
We call such a function \emph{colorwise-symmetric with respect to the partition $\{C_1,\ldots,C_K\}$.}
We will assume that $f(1,\ldots,1; 0,\ldots,0)$ gives the optimal value.

In the input stream, for each $i \in \{1,\ldots,K-1\}$ in this order, the blue and red elements of class $i$ arrive in a random order, and then the unique red element of class $K$ arrives.
We design $f$ so that it is hard to distinguish blue and red elements in each class.
More precisely, $f$ satisfies the property
\begin{align*}
& f(r_1,\ldots, r_{i-1}, 1,0,\ldots,0; b_1,\ldots, b_{i-1}, b_{i}\phantom{+1},0,\ldots,0) \\
= & f(r_1,\ldots, r_{i-1}, 0,0,\ldots,0; b_1,\ldots, b_{i-1}, b_{i}+1,0,\ldots,0)
\end{align*}
for any $1\leq i \leq K-1$, $r_1,\ldots,r_{i-1} \in \{0,1\}$, and $b_1,\ldots,b_i \in \{0,1,\ldots,m\}$.
Combined with the monotonicity of $f$, we can show that  the maximum value we can obtain via any algorithm is $f(0,\ldots, 0, 1; 1,\ldots,1,0)$ with high probability.
Again on the basis of some ideas suggested by computer simulations, we can construct $f$ so that this value is small.

We note that we took a different approach from the information-theoretic argument based on communication complexity used to show existing lower bounds~\cite{McGregor2019,NorouziFard:2018vb}, because we wanted to show lower bounds when the value oracle for a submodular function is available, and it is not clear how we can integrate it in the communication complexity setting.
In~\cite{McGregor2019}, coverage functions were explicitly constructed from instances of a communication complexity problem, and hence we can regard that the sets used to define the coverage functions are given one by one in a streaming fashion, and we do not need the value oracle.
In~\cite{NorouziFard:2018vb}, the issue was avoided by assuming that the value oracle is weak.

\paragraph{Our algorithms}
Our algorithms for cardinality and matroid constraints, given in Theorems~\ref{thm:alg-weak-oracle-cardinality} and~\ref{thm:alg-weak-oracle-matroid}, all use branching, depending on the
property of the first element $o_1$ of the optimal solution $\opt$ in the stream.
Here we explain the simplest case of cardinality constraint to highlight the basic ideas.
We devise a general procedure which takes two parameters $k$ and $s$.
The former is the upper bound on the size of the optimal solution while the latter is the allowed size of the solution.
Such a procedure would guarantee that the returned solution achieves the approximation ratio of $\frac{s}{k+s-1}$, where we observe that the ratio improves when $s$ is large relatively to $k$.

In the first branch, we assume that the value of $o_1$ is sufficiently large and we simply take the first element $e$ whose value is above a certain threshold and then recurse on all the elements after $e$ (with parameters $k-1$ and $s-1$).
Doing this guarantees that the element we first take is of large value (``bang for the buck''),
and more critically, $o_1$ (and hence the rest of $\opt$) is not ``missed'' in the recursion.
In the second branch, we assume that the value of $o_1$ is too small and we can as well just focus on $\opt-o_1$, by recursing directly on all remaining elements (with parameters $k-1$ and $s$).
Even though the value of $\opt-o_1$
is slightly smaller than $\opt$, the approximation ratio for the recursion is improved, as the available space $s$ grows relatively to the optimal solution size.

For the case of matroid constraint, the above branching strategy need more careful handling.
It is based on the idea of taking the first element that \emph{resembles} $o_1$ and use it to recurse on $\opt-o_1$.
There is an extra issue that the element $e$ resembling $o_1$ may not
form an independent set together with $\opt-o_1$. This issue is circumvented by using an extra set of candidates
of $o_1$, based on a known fact in matroid theory.

\subsection{Related work}

Maximizing a monotone submodular function subject to various constraints is a subject that has been extensively studied in the literature.
Although the problem is NP-hard even for a cardinality constraint, it can be approximated in polynomial time within a factor of $1-\frac{1}{e}$.
See e.g.,~\cite{Badanidiyuru:2013jc,FNS_cardinality,FisherNemhauserWolsey,Wolsey:1982}.
On the other hand, even for a cardinality constraint, we need an exponential number of function evaluations to obtain approximation ratio better than $1-\frac{1}{e}$~\cite{Nemhauser:1978dm,Vondrak:2013ia}.
Also, even when the submodular function is explicitly given (as a coverage function), Feige~\cite{Feige:1998gx} proved that the problem with a cardinality constraint cannot be approximated in polynomial time within a factor of $1-\frac{1}{e}+\varepsilon$ for any constant $\varepsilon >0$ unless P is equal to NP\@.
Besides a cardinality constraint, the problem has also been studied under
(multiple) matroid constraint(s), $p$-system constraint, multiple knapsack constraints.
See~\cite{Calinescu:2011ju,ChanSODA2017,Chan2017,DBLP:journals/siamcomp/ChekuriVZ14,ene_et_al:LIPIcs:2019:10629,ene_et_al:LIPIcs:2019:10630,Filmus:2014,Kulik:2013ix,Lee:2010,yoshida_2018} and the references therein.

\textit{Multi-pass streaming algorithms}, where we are allowed to read a stream of the input multiple times, have also been studied~\cite{Badanidiyuru:2013jc,DBLP:journals/mp/ChakrabartiK15,HuangKakimuraMultiPass2018,Huang2019}.
In particular, Chakrabarti and Kale~\cite{DBLP:journals/mp/ChakrabartiK15} gave an $O(\varepsilon^{-3})$-pass streaming algorithms for a generalization of the maximum matching problem and the submodular maximization problem with cardinality constraint.
Huang and Kakimura~\cite{HuangKakimuraMultiPass2018} designed an $O(\varepsilon^{-1})$-pass streaming algorithm with approximation guarantee $1/2-\varepsilon$ for the knapsack-constrained problem.
Other than the streaming setting, recent applications of submodular function maximization to large data sets have motivated new directions of research on other computational models including parallel computation model such as the MapReduce model~\cite{Barbosa2016,Barbosa2015,Kumar:2015} and the adaptivity analysis~\cite{BalkanskiRS19,Balkanski2018,ChekuriQ19,EneN19}.

The maximum coverage problem is a special case of monotone submodular maximization under a cardinality constraint where the function is a set-covering function.
For the special case, McGregor and Vu~\cite{McGregor2019} and Batani~et~al.~\cite{Bateni:2017} gave a $(1-e^{-1}-\varepsilon)$-approximation algorithm in the multi-pass streaming setting.

\subsection{Organization}
We prove our lower bound for strong-oracle algorithms for a cardinality constraint (Theorem~\ref{thm:lower-bound-cardinality}) and a matroid constraint (Theorem~\ref{thm:lower-bound-matroid}) in Sections~\ref{sec:cardinality} and~\ref{sec:matroid}, respectively.
We explain our weak-oracle algorithms and analyze them (Theorems~\ref{thm:alg-weak-oracle-cardinality} and~\ref{thm:alg-weak-oracle-matroid}) in Section~\ref{sec:algo}.


\section{Lower Bounds for Cardinality Constraints}\label{sec:cardinality}

In this section, we prove Theorem~\ref{thm:lower-bound-cardinality}.
As described in the introduction,
the ground set $E$ is partitioned into a blue set $B$, a red set $R$, and a purple set $P$, where $|B|\;=n-K$, $|R|\;= K-1$, and $|P|\;=1$.
We design a colorwise-symmetric function $f\colon2^E \to \bbZ_+$ such that $f(b+1, 0, 0)=f(b, 1, 0)$ for any non-negative integer $b\leq |B|-1$ and the values $f(K,0,0)$, $f(K-1,1,0)$ and $f(K-1,0,1)$ are small.
More specifically, we show the following:

\begin{lemma}\label{lem:hard-function-cardinality}
  For any large enough integer $n$ and any integer $h \geq K$, there exists a colorwise-symmetric function $f\colon2^E \to \bbZ_+$ with $|E| = n$ that satisfies the following conditions.
  \begin{enumerate}
    \itemsep=0pt
    \item[(i)] $f$ is monotone submodular.
    \item[(ii)] [Indistinguishability] $f(b+1, 0,0) = f(b,1,0)$ holds for all  $0 \leq b \leq n-K-1$.
    \item[(iii)] [Output value] $f(K, 0, 0) = f(K-1, 1, 0) = hK +  \frac{(K-1)K}{2}$ and $f(K-1, 0,1)  = {(K-1)}^2 +\frac{h(h+1)}{2}$ hold.
    \item[(iv)] [Optimal value] $f(0, K-1, 1) = (K-1)(h+K-1) +\frac{h(h+1)}{2}$ holds.
  \end{enumerate}
\end{lemma}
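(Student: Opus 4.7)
Since $f$ is required to be colorwise-symmetric, it is equivalent to specify a function $F\colon \mathbb{Z}_{\ge 0}^{3} \to \mathbb{Z}_{\ge 0}$ of the counts $(b, r, p)$ and verify that the induced set function is monotone and submodular on the domain $\{0, \ldots, n - K\} \times \{0, \ldots, K - 1\} \times \{0, 1\}$. My plan is to define $F$ in closed form and then check the four conditions.

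For (ii) and the first line of (iii), I would set $F(b, r, 0) := G(b + r)$, where $G$ is the concave integer-valued function whose marginals on the first $K$ arguments are $h + K - 1, h + K - 2, \ldots, h$ and $0$ afterward. This makes (ii) automatic (since $F(\cdot, \cdot, 0)$ depends only on $b + r$), gives $F(K, 0, 0) = F(K - 1, 1, 0) = hK + K(K - 1)/2$, and yields monotone submodularity on the slice $p = 0$ by concavity of $G$.

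For $p = 1$, I would write $F(b, r, 1) = G(b + r) + H(b, r)$ with a purple bonus $H \ge 0$. Matching the remaining target values in (iii) and (iv) pins down
\[
  H(K - 1, 0) = \tfrac{(K - 1)(K - 2)}{2} + \tfrac{h(h + 1)}{2} - (K - 1)h, \qquad H(0, K - 1) = \tfrac{(K - 1)(K - 2)}{2} + \tfrac{h(h + 1)}{2},
\]
both of which are non-negative for $h \ge K$ by the factorization $h^{2} - (2K - 3)h + (K - 1)(K - 2) = (h - (K - 1))(h - (K - 2)) \ge 0$. Observe that $H(0, K - 1) - H(K - 1, 0) = (K - 1)h$, quantifying the intended complementarity between reds and the purple element. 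I would then extend $H$ to the whole domain $\{0, \ldots, n - K\} \times \{0, \ldots, K - 1\}$ by an explicit piecewise closed form so that $H$ is non-negative and non-increasing in each coordinate, and verify that $F$ satisfies every mixed discrete second difference inequality $F(x + e_i + e_j) + F(x) - F(x + e_i) - F(x + e_j) \le 0$ for $i, j \in \{B, R, P\}$, which on the count lattice is equivalent to submodularity of the induced set function.

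The main obstacle I expect is the blue--purple case at the boundary $b + r = K$, where $G$'s marginal drops abruptly from $h$ to $0$: naive linear choices of $H$ either go negative for large $b$ or create a spike in the blue marginal that violates submodularity at this transition. The resolution is to choose $H$ piecewise so that its slope in $b$ saturates exactly at $b = K - 1$, aligning with the $G$-plateau, so that the jumps in the $G$-marginal and the $H$-marginal cancel out and the blue marginal in the presence of purple remains non-increasing. Once $F$ is fixed with this alignment, monotonicity and the remaining mixed second differences reduce to concavity of $G$ and the monotonicity of $H$, and the lemma follows since (i) is the submodularity verification, (ii) is built in, and (iii), (iv) hold by direct substitution.
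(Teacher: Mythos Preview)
Your plan has a genuine gap: with your specific $G$ (marginals $h{+}K{-}1,\ldots,h$, then $0$), no admissible purple bonus $H$ exists once $K\ge 3$. Here is why. Because $G(b{+}r{+}1)-G(b{+}r)=0$ for $b{+}r\ge K$, the blue (and red) marginal on the $p=0$ slice vanishes in that region. Submodularity says the blue marginal cannot increase when $p$ passes from $0$ to $1$, and monotonicity says it stays non-negative; together these force the blue marginal on the $p=1$ slice to be exactly $0$ whenever $b{+}r\ge K$, i.e.\ $H(b{+}1,r)=H(b,r)$ there, and the same argument in the red direction gives $H(b,r{+}1)=H(b,r)$. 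Hence $H$ is a single constant $c$ on $\{(b,r):b{+}r\ge K\}$, so in particular $H(1,K{-}1)=H(K,0)=c$. Now $H$ non-increasing in $b$ gives $c\le H(K{-}1,0)$, while monotonicity of the blue marginal at $(0,K{-}1)$ with $p=1$ reads
\[
\bigl[G(K)-G(K{-}1)\bigr]+\bigl[H(1,K{-}1)-H(0,K{-}1)\bigr]\;=\;h+c-H(0,K{-}1)\ \ge\ 0.
\]
Combining, $H(0,K{-}1)-h\le c\le H(K{-}1,0)$, hence $H(0,K{-}1)-H(K{-}1,0)\le h$. But you computed $H(0,K{-}1)-H(K{-}1,0)=(K{-}1)h$, so $(K{-}1)h\le h$, which fails for every $K\ge 3$. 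The obstacle you flag (the blue--purple mixed second difference at the plateau edge) is not the binding one; what breaks is \emph{monotonicity} of the blue marginal at the corner $(0,K{-}1,1)$, and no piecewise tailoring of $H$ can avoid it as long as $G$ flattens at $b{+}r=K$.

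The paper's construction is genuinely different: it does \emph{not} make $f(\cdot,\cdot,0)$ depend only on $b{+}r$. Instead it prescribes the marginals directly (Definition~\ref{def:hard-function}): the blue marginal $\Db(b,0,0)$ does not reach $0$ until $b>h{+}2(K{-}2)$, and the red marginal $\Dr(b,r)$ genuinely depends on $r$ (for instance $\Dr(0,r)=h{+}K{-}1$ for every $r$, so red elements added at $b=0$ retain the maximal marginal). Only the single identity $\Dr(b,0)=\Db(b,0,0)$ is imposed, which is exactly what~(ii) needs. This built-in red/blue asymmetry on the $p=0$ slice is what creates enough room for the gap between the optimal corner $(0,K{-}1,1)$ and the output corners without triggering the collapse above.
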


We defer the construction of our hard function and its analysis to Sections~\ref{subsec:hard-function-cardinality-construction}--\ref{subsec:hard-function-cardinality-proof}.

Below we prove Theorem~\ref{thm:lower-bound-cardinality} using Lemma~\ref{lem:hard-function-cardinality}.
We will use the following bound in the proof.
\begin{proposition}\label{pro:binomial}
  We have
  \[
    \left( 1 - \frac{k^2}{n}\right)  \frac{n^k}{k!} \leq \binom{n}{k} \leq \frac{n^k}{k!}.
  \]
\end{proposition}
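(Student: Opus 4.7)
The plan is to expand $\binom{n}{k} = \frac{1}{k!}\prod_{i=0}^{k-1}(n-i)$ and bound the product against $n^k$ on both sides. The upper bound is immediate: every factor satisfies $n - i \leq n$, so $\prod_{i=0}^{k-1}(n-i) \leq n^k$, giving $\binom{n}{k} \leq \frac{n^k}{k!}$.

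For the lower bound, I would factor out $n^k$ and reduce the claim to
\[
  \prod_{i=0}^{k-1}\left(1-\frac{i}{n}\right) \geq 1 - \frac{k^2}{n}.
\]
The key tool is the Weierstrass product inequality $\prod_{i=1}^{m}(1-x_i) \geq 1 - \sum_{i=1}^{m} x_i$ for $x_i \in [0,1]$, which I would establish by a short induction on $m$: the base case $m=1$ is trivial, and in the inductive step one expands $(1-x_m)\prod_{i<m}(1-x_i) \geq (1-x_m)\bigl(1 - \sum_{i<m} x_i\bigr) = 1 - \sum_{i \leq m} x_i + x_m \sum_{i<m} x_i$ and drops the nonnegative cross term. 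Specializing to $x_i = i/n$ (which lie in $[0,1]$ whenever $n \geq k-1$) then yields
\[
  \prod_{i=0}^{k-1}\left(1-\frac{i}{n}\right) \geq 1 - \sum_{i=0}^{k-1}\frac{i}{n} = 1 - \frac{k(k-1)}{2n} \geq 1 - \frac{k^2}{n},
\]
which is exactly the desired bound.

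There is no real obstacle here; the proposition is a textbook estimate and the full write-up should take only a few lines. The only point that merits an explicit remark is the degenerate regime $k > n$: there the binomial coefficient is either zero or undefined under the standard convention, but the lower bound is also vacuous because $1 - k^2/n$ is nonpositive, so the inequality holds trivially.
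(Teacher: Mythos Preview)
Your argument is correct and follows the same overall structure as the paper's proof: expand $\binom{n}{k}=\frac{1}{k!}\prod_{i=0}^{k-1}(n-i)$ and bound the product above and below by $n^k$. The only minor difference is in the lower bound, where the paper weakens each factor to $n-k$ and applies Bernoulli's inequality $(1-k/n)^k \ge 1-k^2/n$, whereas you apply the Weierstrass product inequality directly; both are equally elementary and yield the same conclusion.
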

\begin{proof}
  The claim holds from
  \[
    \binom{n}{k}=\frac{\prod_{i=0}^{k-1}(n-i)}{k!}
  \]
  and
  \[
    n^k \geq \prod_{i=0}^{k-1} (n-i) \geq {(n-k)}^k = {\left(1-\frac{k}{n}\right)}^k n^k \geq \left(1-\frac{k^2}{n}\right)n^k.
    \qedhere
  \]
\end{proof}

\begin{proof}[Proof of Theorem~\ref{thm:lower-bound-cardinality}]
  Let $h \geq K$ be an integer determined later, and let $f\colon 2^E \to \bbZ_+$ with $|E|=n$ be the colorwise-symmetric function as in Lemma~\ref{lem:hard-function-cardinality}.

  Let $\mathcal{D}$ be the uniform distribution over orderings $(e_1,\ldots,e_n)$ of elements of $E$, conditioned on that $e_1,\ldots,e_{n-1}$ include all the red and blue elements.
  Note that $e_n$ is the (unique) purple element.
  By Yao's minimax principle, to prove Theorem~\ref{thm:lower-bound-cardinality}, it suffices to show that any deterministic streaming algorithm $A$ with $o\left(\frac{n}{K^2}\right)$ space on an input sampled from $\mathcal{D}$ does not achieve approximation ratio more than $\frac{K}{2K-1}$ in expectation.

  Let $(e_1,\ldots,e_n)$ denote a sequence of elements sampled from $\mathcal{D}$.
  Let $S_t$ be the set of elements that $A$ holds after the $t$-th step, that is, the $t$-th element $e_t$ has arrived and $A$ has updated the set of elements it holds (by adding $e_t$ and/or discarding elements already in the set).
  We define $S_0 = \emptyset$ for convenience, and note that an algorithm chooses a subset of $S_n$ as the output of $A$.
  Note also that, for each $1 \leq t \leq n$, the set $S_{t}$ is completely determined by $S_{t-1}$ and the values of $f(S)$ and $f(S \cup \{e_t\})\;(S \subseteq S_{t-1})$, as $A$ is deterministic.

  For a set of indices $I \subseteq \set{1,\ldots,n}$, we define $S_I = \{e_i \mid i\in I \}$.
  Then, for $t \in \{0,1,\ldots,n-1\}$, we iteratively define a \emph{canonical set} $I^*_t$ of indices (not elements)  as follows.
  First, we set $I^*_0 = \emptyset$.
  Then, for each $1 \leq t \leq n-1$, we define $I^*_{t}$ as the set of indices of elements in $S_t$ when $A$ had $S_{I^*_{t-1}}$ after the $(t-1)$-th step and all but at most one element in $S_{I^*_{t-1}} \cup \{e_t\}$ are blue.
  Note that $I^*_t$ is uniquely determined because $A$ is deterministic, and by Property~(ii) of Lemma~\ref{lem:hard-function-cardinality}, the value of $f(S_I \cup \{e_t\})$ for $I \subseteq I^*_{t-1}$ is uniquely determined from the size of $I$.

  We say that $A$ followed the \emph{canonical process} if $A$ holds the set $S_{I^*_{t}}$ after the $t$-th step for each $1 \leq t \leq n-1$.
  For $1 \leq t \leq n-1$, let $X_t$ be the event that $S_{I^*_{t-1}} $ has one or more red elements and $e_t$ is red.
  Then, the probability that $A$ does not follow the canonical process is bounded by the probability that $\bigvee_{t=1}^{n-1} X_t$ happens.
  First, we have
  \[
    \Pr[X_t] \leq \frac{\sum_{r=1}^{K-2}\binom{s}{r} \binom{n-s-2}{K-2-r}}{\binom{n-1}{K-1}},
  \]
  where $s$ is the space usage of the algorithm, because the probability that $S_{I^*_{t-1}}$ has $r$ red balls and $e_t$ is red is at most $\binom{s}{r} \binom{n-s-2}{K-2-r}/\binom{n-1}{K-1}$.
  Then by a union bound, we have
  \begin{align*}
    & \Pr\Bigl[\bigvee_{t=1}^{n-1} X_t \Bigr] \leq \sum_{t =1}^{n-1}\Pr[X_t] \leq (n-1) \frac{\sum_{r=1}^{K-2}\binom{s}{r} \binom{n-s-2}{K-2-r}}{\binom{n-1}{K-1}} \\
    & \leq (n-1) \frac{\sum_{r=1}^{K-2} \frac{s^r}{r!} \frac{{(n-s-2)}^{K-2-r}}{(K-2-r)!} }{\left(1-\frac{K^2}{n}\right) \frac{{(n-1)}^{K-1}}{(K-1)!}} \tag{By Proposition~\ref{pro:binomial}} \\
    & = \frac{K-1}{\left(1-\frac{K^2}{n}\right) {(n-1)}^{K-2}} \sum_{r=1}^{K-2} \binom{K-2}{r} s^{r} {(n-s-2)}^{K-2-r} \\
    & = \frac{K-1}{\left(1-\frac{K^2}{n}\right) {(n-1)}^{K-2}} \Bigl({(n-2)}^{K-2} - {(n-s-2)}^{K-2}\Bigr) \\
    & = \frac{K-1}{1-\frac{K^2}{n} } {\left(\frac{n-2}{n-1}\right)}^{K-2} \left(1 - {\left(1-\frac{s}{n-2}\right)}^{K-2}\right) \\
    & \leq \frac{K-1}{1-\frac{K^2}{n} }  \frac{s(K-2)}{n-2} \tag{By ${(1 - x)}^d \geq 1- dx$}\\
    & = \frac{1}{1-\frac{K^2}{n} }O\left(\frac{K^2s}{n}\right).
  \end{align*}
  Let $Y$ be the event that $S_{I^*_{n-1}}$ has one or more red elements.
  We have
  \begin{align*}
    & \Pr[Y]
    \leq 1 - \frac{\binom{n-s-1}{K-1}}{\binom{n-1}{K-1}}
    \leq 1 - \frac{\left(1-\frac{K^2}{n-s}\right)\frac{{(n-s-1)}^{K-1}}{(K-1)!}}{\frac{{(n-1)}^{K-1}}{(K-1)!}} \tag{By Proposition~\ref{pro:binomial}}  \\
    & = 1 - \left(1-\frac{K^2}{n-s}\right) {\Bigl(1-\frac{s}{n-1}\Bigr)}^{K-1} \\
    & \leq 1 - \left(1-\frac{K^2}{n-s}\right) \Bigl(1-\frac{s(K-1)}{n-1}\Bigr) \tag{By ${(1 - x)}^d \geq 1- dx$}\\
    & = \frac{K^2}{n-s} + \left(1-\frac{K^2}{n-s}\right)\frac{s(K-1)}{n-1}.
  \end{align*}

  As long as $K=o(\sqrt{n})$ and $s = o\left(\frac{n}{K^2}\right)$, the probability that none of $X_1,\ldots,X_{n-1}$, and $Y$ happens is at least $1-o(1)$ by setting the hidden constant in $s$ to be small enough.
  If none of the events has happened, the algorithm $A$ can only obtain values for sets $S$ with $S \subseteq S_{I^*_{t-1}} \cup \set{e_t}$ for some $1 \leq t \leq n$ and $|S|\; \leq K$.
  As $S_{I^*_{t-1}} \cup \set{e_t}$ for any $1 \leq t \leq n-1$ contains at most one red element and $S_{I^*_{n-1}}$ contains no red element, the value of $f(S)$ is upper-bounded by $\max\set{f(K,0,0),f(K-1,1,0),f(K-1,0,1)}$, which is given by Property~(iii) of Lemma~\ref{lem:hard-function-cardinality}.
  Recall that the optimal value is given by Property~(iv) of Lemma~\ref{lem:hard-function-cardinality}.
  Therefore, the approximation ratio (in expectation over $\mathcal{D}$) is
  \[
    (1-o(1)) \cdot \frac{\max\left\{hK +  \frac{(K-1)K}{2}, {(K-1)}^2 +\frac{h(h+1)}{2} \right\}}
    {(K-1)(h+K-1) +\frac{h(h+1)}{2}} + o(1) \cdot 1.
  \]
  The ratio is minimized when $h$ is $\lfloor\sqrt{2}(K-1)\rfloor$ or $\lceil\sqrt{2}(K-1)\rceil$.
  When $K$ approaches to infinity, the ratio is
  \[
    (1-o(1)) \cdot \frac{2}{2+\sqrt{2}} + o(1) > 0.585,
  \]
  as desired.
\end{proof}

\subsection{Construction of Hard Functions}\label{subsec:hard-function-cardinality-construction}

We first define our function, and then describe the intuition behind the construction.
The next two subsections give its correctness proof.

\begin{definition}\label{def:hard-function}
  We define a colorwise-symmetric function $f\colon 2^E \to \bbZ_+$ recursively by its marginal return:
  Define
  \[
   f(0,0,0)=0 \quad\text{and}\quad f(0,0,1)=\frac{h(h+1)}{2}.
  \]
  We denote the marginal returns of $f$ by
  \begin{align*}
  {\Dr}(b,r) & = f(b, r+1, 0) - f(b, r, 0) = f(b, r+1, 1) - f(b, r, 1),\\
  {\Db}(b,0,0) & = f(b+1, 0, 0) - f(b, 0, 0),\\
  {\Db}(b,0,1) & = f(b+1, 0, 1) - f(b, 0, 1),
  \end{align*}
  where they are defined by
  \begin{align*}
  \Dr (b, r) &=
  \begin{cases}
  K-1 + h - b & \text{if}\quad 0\leq b\leq h+r,\\
  K-1 - \left\lceil \frac{r+b-h}{2}\right\rceil & \text{if}\quad h+r+1\leq b\leq h+2(K-2)-r,\\
  0 & \text{if}\quad h+2(K-2)-r+1\leq b,
  \end{cases}\\
  \Db (b, 0, 0) & (= \Dr(b, 0))
  =
  \begin{cases}
  K-1 + h - b & \text{if}\quad 0\leq b\leq h,\\
  K-1 - \left\lceil \frac{b-h}{2}\right\rceil & \text{if}\quad h+1\leq b\leq h+2(K-2),\\
  0 & \text{if}\quad h+2(K-2)+1\leq b,
  \end{cases}\\
  \Db (b, 0, 1) &=
  \begin{cases}
  K-1  & \text{if}\quad 0\leq b\leq h,\\
  K-1 - \left\lceil \frac{b-h}{2}\right\rceil & \text{if}\quad h+1\leq b\leq h+2(K-2),\\
  0 & \text{if}\quad h+2(K-2)+1\leq b.
  \end{cases}
  \end{align*}
\end{definition}

The value $f(b,r,p)$ is determined in the following way: we start with
the ``base value'', $f(0,0,p)$ (the presence of the purple element), then add the blue elements one by one
until there are $b$ of them (each increasing the marginal value by $\Db(i,0,p)$ for $0 \leq i \leq b-1$), and then add the red elements one by one
until there are $r$ of them (each increasing the marginal value by $\Dr(b,i)$ for $0 \leq i \leq r-1$). In other words, we have

\begin{equation}\label{eq:f}
  f(b,r,p) =f(0,0,p) + \sum_{j=0}^{b-1}\Db(j, 0, p) +  \sum_{i=0}^{r-1} {\Dr}(b,i).
\end{equation}

\subsubsection{Ideas behind the function}\label{sec:idea}

We start with several simple observations. 
The marginal values $\Db(b,0,0)$, $\Db(b,0,1)$ and $\Dr(b,r)$ are \emph{monotonically non-increasing} as the parameters $b$ and $r$
increase\footnote{To see $\Dr(b,r)\geq \Dr(b+1,r)$, it suffices to show when $b=h+r$ and when $b=h+2(K-2)-r$.
They follow because $\Dr(h+r, r) = K-r-1 \geq K-1 - \lceil r + \frac{1}{2} \rceil = \Dr(h+r+1,r)$ and $\Dr(h+2(K-2)-r, r) = K-1- \lceil K-1 \rceil \geq 0 = \Dr(h+2(K-2)-r+1, r)$.
To see $\Dr(b,r)\geq \Dr(b,r+1)$, the only tricky cases are when $b=h+r+1$ and when $b=h+2(K-2)-r$.
It holds that $\Dr(h+r+1, r) = K-1-r-1 = \Dr(h+r+1,r+1)$ and $\Dr(h+2(K-2)-r, r) = 1\geq 0 =\Dr(h+2(K-2)-r, r+1)$.}.
This conforms with the property of diminishing marginal return of a submodular function.
The next thing to notice is that, by design, the values $\Db(b,0,0)$ are exactly identical to $\Dr(b,0)$. So $f(b+1,0,0) = f(b,0,0) + \Db(b,0,0)=
 f(b,0,0) + \Dr(b,0) = f(b,1,0)$ and this is required by Lemma~\ref{lem:hard-function-cardinality}(ii).

 We next explain why the marginal values $\Db(b,0,0)$, $\Db(b,0,1)$ and $\Dr(b,r)$ are chosen in such a manner.
 A concrete example can be very useful to reveal the patterns generated by them.
 Assume that $K=4$ (thus 3 red elements and 1 purple element, and a large number of blue elements) and we choose $h=4$.
 Table~\ref{tab:example} gives the function values when the purple element is absent or present,
 along with $\Dr(b,r)$, the marginal value of adding a red element. We observe that for every fixed $r$, $\Dr(b,r)$ is \emph{strictly} monotonically
 decreasing in $b$ until $b = h+r$ (we color it differently); when $b$ is beyond $h+r$, the same value $\Dr(b,r)$ appears twice before decreasing.
Notice that the smaller the $r=i$, $\Dr(b,i)$ decreases in $b$ more slowly (conforming with the submodularity).


\begin{table}[htbp]
  \caption{The function values along with the marginal value $\Dr (b,r)$ when $K=h=4$.
  After the value $b=h+r$ (we color it specially), the same $\Dr (b,r)$'s appear twice before decreasing.
  Also notice that the values $f(b+1,0,0) = f(b, 1,0)$ for all $0 \leq b \leq n-K-1$ (Lemma~\ref{lem:hard-function-cardinality}(ii)).}\label{tab:example}
  \begin{center}
    \begin{tabular}{c}
      \begin{minipage}{0.5\hsize}
  \centering
  \begin{tabular}{l|l|l|l|l|l|l|l|l}
    \hline
    \multicolumn{1}{|l|}{$p=0$} & \multicolumn{2}{l|}{$r=0$} & \multicolumn{2}{l|}{$r=1$} & \multicolumn{2}{l|}{$r=2$} & $r=3$   \\ \hline
    \multicolumn{1}{|l|}{$b$} & $f$ & $\Dr$ & $f$ & $\Dr$ & $f$ & $\Dr$ & $f$   \\ \hline \hline
    \multicolumn{1}{|l|}{$0$} &    $0$       &   $7$         &    $7$         &   $7$         &    $14$        &      $7$      &     $21$             \\ \hline
    \multicolumn{1}{|l|}{$1$} &    $7$       &   $6$         &      $13$      &   $6$        &    $19$       &   $6$         & $25$                       \\ \hline
    \multicolumn{1}{|l|}{$2$} &     $13$      &    $5$       &     $18$      &   $5$         &     $23$       &   $5$         &   $28$                     \\ \hline
    \multicolumn{1}{|l|}{$3$} &     $18$      &   $4$         &     $22$      &   $4$         &     $26$       &  $4$          &   $30$                     \\ \hline
    \multicolumn{1}{|l|}{$4$} &     $22$      &    \color{red}{$3$}     &    $25$        &     $3$      &   $28$        &  $3$          &   $31$                     \\ \hline
    \multicolumn{1}{|l|}{$5$} &       $25$    &    $2$          &     $27$       &  \color{red}{$2$ }       &    $29$        &  $2$          &     $31$                     \\ \hline
    \multicolumn{1}{|l|}{$6$} &    $27$       &   $2$          &     $29$      &     $1$        &   $30$         &  \color{red}{$1$ }         &   $31$                       \\ \hline
    \multicolumn{1}{|l|}{$7$} &    $29$        &  $1$              &     $30$      &    $1$         &   $31$         &   $0$         &   $31$                       \\ \hline
    \multicolumn{1}{|l|}{$8$} &     $30$       & $1$           &       $31$      &    $0$         &  $31$          &  $0$          &     $31$                     \\ \hline
    \multicolumn{1}{|l|}{$9$} &     $31$       &   $0$          &    $31$       &     $0$        &   $31$         &   $0$        &     $31$                  \\ \hline
    \multicolumn{1}{|l|}{$10$} &     $31$       &   ---       &    $31$       &    ---        &    $31$        &  ---         &     $31$      \\ \hline
  \end{tabular}
      \end{minipage}

      \begin{minipage}{0.5\hsize}
  \centering
  \label{tab:withPurple}
  \begin{tabular}{l|l|l|l|l|l|l|l|l}
    \hline
    \multicolumn{1}{|l|}{$p=1$} & \multicolumn{2}{l|}{$r=0$} & \multicolumn{2}{l|}{$r=1$} & \multicolumn{2}{l|}{$r=2$} & $r=3$   \\ \hline
    \multicolumn{1}{|l|}{$b$} & $f$ & $\Dr$ & $f$ & $\Dr$ & $f$ & $\Dr$ & $f$   \\ \hline \hline
    \multicolumn{1}{|l|}{$0$} &    $13$       &   $7$         &    $20$         &   $7$         &    $27$        &      $7$      &     $34$              \\ \hline
    \multicolumn{1}{|l|}{$1$} &    $16$       &   $6$         &      $22$      &   $6$        &    $28$       &   $6$         & $34$                       \\ \hline
    \multicolumn{1}{|l|}{$2$} &     $19$      &    $5$       &     $24$      &   $5$         &     $29$       &   $5$         &   $34$                     \\ \hline
    \multicolumn{1}{|l|}{$3$} &     $22$      &   $4$         &     $26$      &   $4$         &     $30$       &  $4$          &   $34$                     \\ \hline
    \multicolumn{1}{|l|}{$4$} &     $25$      &     \color{red}{$3$}     &    $28$        &     $3$      &   $31$        &  $3$          &   $34$                     \\ \hline
    \multicolumn{1}{|l|}{$5$} &       $28$    &    $2$          &     $30$       &    \color{red}{$2$}        &    $32$        &  $2$          &     $34$                     \\ \hline
    \multicolumn{1}{|l|}{$6$} &    $30$       &   $2$          &     $32$      &     $1$        &   $33$         &  \color{red}{$1$}       &   $34$                       \\ \hline
    \multicolumn{1}{|l|}{$7$} &    $32$        &  $1$              &     $33$      &    $1$         &   $34$         &   $0$         &   $34$                       \\ \hline
    \multicolumn{1}{|l|}{$8$} &     $33$       & $1$           &       $34$      &    $0$         &  $34$          &  $0$          &     $34$                     \\ \hline
    \multicolumn{1}{|l|}{$9$} &     $34$       &   $0$          &    $34$       &     $0$        &   $34$         &   $0$        &     $34$                  \\ \hline
    \multicolumn{1}{|l|}{$10$} &     $34$       &     ---       &    $34$       &    ---        &    $34$        &  ---         &     $34$      \\ \hline
  \end{tabular}
      \end{minipage}

    \end{tabular}
  \end{center}
\end{table}

\begin{table}[t!]
\end{table}

Indeed we choose all marginal values $\Dr(b,r)$, $\Db(b,0,0)$ and $\Db(b,0,1)$ with two objectives.
(1) The indistinguishability of red and blue elements (i.e., Lemma~\ref{lem:hard-function-cardinality}(ii)), which forces $\Dr(b,0)=\Db(b,0,0)$,
and (2) creating a particular pattern of $f(b,K-1,0)$ and $f(b,K-1,1)$. As can be observed in this example,
$f(b, K-1,1)$ is simply the optimal value for all $b$, while $f(b, K-1,0)$ increases
by the amount of $h$, $h-1$, down to 1 when $b$ increases from $0$ to $h$ and stop growing
when $b$ is beyond $h$. The fact that we want $f(b, K-1,0)$ to stop growing when $b$ is beyond $h$
explains why $\Dr(b,r)$ behaves somehow differently when $b$ is beyond $h+r$.

\subsection{Correctness of Function Values}\label{subsec:hard-function-cardinality-fvalue}

In this section, we show that the function $f$ defined in Definition~\ref{def:hard-function} satisfies Lemma~\ref{lem:hard-function-cardinality}~(ii)--(iv).

Since $\Db (b, 0, 0) = \Dr(b, 0)$ for each $b$, we immediately have~(ii) of Lemma~\ref{lem:hard-function-cardinality}.
Moreover, it follows from~\eqref{eq:f} that
\begin{align*}
  f(K, 0, 0) &= \sum_{j=0}^{K-1}\Db(j, 0, 0) = hK +  \frac{(K-1)K}{2},\\
  f(K-1, 1, 0) & = \Dr (K-1, 0) + \sum_{j=0}^{K-2}\Db(j, 0, 0) = \sum_{j=0}^{K-1}\Db(j, 0, 0) = hK +  \frac{(K-1)K}{2},\\
  f(K-1, 0,1) & = \sum_{j=0}^{K-2}\Db(j, 0, 1) + f(0,0,1) = (K-1)(K-1) +\frac{h(h+1)}{2}.
\end{align*}
It also holds that
\begin{align*}
  f(0, K-1, 1) = \sum_{i=0}^{K-2}\Dr(0, i) + f(0,0,1) = (K-1)(h+K-1) +\frac{h(h+1)}{2}.
\end{align*}
Thus (iii) and (iv) of Lemma~\ref{lem:hard-function-cardinality} follow.

\subsection{Monotonicity and Submodularity}\label{subsec:hard-function-cardinality-proof}
Below we prove Lemma~\ref{lem:hard-function-cardinality}~(i), that is, the function $f$ defined in Definition~\ref{def:hard-function} is monotone and submodular.
To this end, it suffices to show the following lemma.
\begin{lemma}\label{lem:subm}
  Let $f\colon 2^E \to \bbZ_+$ be the colorwise-symmetric function defined in Definition~\ref{def:hard-function}.
  Then we have
  \begin{enumerate}
  \itemsep=0pt
  \item $f(b_1,r_1, 1)- f(b_1,r_1, 0) \geq f(b_2,r_2, 1)- f(b_2,r_2, 0)\geq 0$ for $0\leq b_1 \leq b_2 \leq |B|$ and $0 \leq r_1 \leq r_2 \leq |R|$.
  \item $f(b_1,r_1+1, p_1)- f(b_1,r_1, p_1) \geq f(b_2,r_2+1, p_2)- f(b_2,r_2, p_2)\geq 0$ for $0\leq b_1 \leq b_2 \leq |B|$, $0 \leq r_1 \leq r_2 \leq |R|-1$, and $0 \leq p_1 \leq p_2 \leq 1$.
  \item $f(b_1+1,r_1, p_1)- f(b_1,r_1, p_1) \geq f(b_2+1,r_2, p_2)- f(b_2,r_2, p_2)\geq 0$ for $0\leq b_1 \leq b_2 \leq |B|-1$, $0 \leq r_1 \leq r_2 \leq |R|$, and $0 \leq p_1 \leq p_2 \leq 1$.
  \end{enumerate}
\end{lemma}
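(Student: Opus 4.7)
The plan is to reduce each of the three statements to a property of the marginal returns $\Dr$, $\Db(\cdot,0,0)$, and $\Db(\cdot,0,1)$ via the explicit formula~\eqref{eq:f}, and then verify those properties by case analysis over the three piecewise regions in Definition~\ref{def:hard-function}.

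For Part~1, the $\Dr$-contributions cancel in~\eqref{eq:f}, giving
\[
  f(b,r,1) - f(b,r,0) \;=\; \tfrac{h(h+1)}{2} + \sum_{j=0}^{b-1}\bigl[\Db(j,0,1) - \Db(j,0,0)\bigr],
\]
which is independent of $r$. A direct check of the definitions yields $\Db(j,0,1) - \Db(j,0,0) = j - h$ for $0 \le j \le h$ and $0$ for $j > h$, so each summand is non-positive and the partial sum is non-increasing in $b$. Summing in closed form produces the value $\tfrac{(h-b)(h-b+1)}{2}$ for $b \le h$ and $0$ otherwise, which is non-negative and non-increasing in $b$, establishing Part~1.

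For Part~2, \eqref{eq:f} yields directly $f(b,r+1,p) - f(b,r,p) = \Dr(b,r)$, which is independent of $p$. Non-negativity is immediate from the piecewise formula (the second piece attains its minimum value $1$ at $b = h+2(K-2)-r$, and the third piece is $0$). Non-increasingness in $b$ and in $r$ is exactly the case analysis already sketched in the footnote of Section~\ref{sec:idea}: within any single region it is obvious, and one only needs to verify the jumps at the two transitions $b = h+r$ and $b = h+2(K-2)-r$, and analogously when varying $r$.

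For Part~3, the blue marginal can be rewritten via~\eqref{eq:f} as
\[
  \Db^*(b,r,p) \;:=\; f(b+1,r,p) - f(b,r,p) \;=\; \Db(b,0,p) + \sum_{i=0}^{r-1}\bigl[\Dr(b+1,i) - \Dr(b,i)\bigr].
\]
The cross-monotonicity in $r$ reduces to $\Dr(b+1,r) - \Dr(b,r) \le 0$, which is Part~2; the cross-monotonicity in $p$ reduces to $\Db(b,0,1) - \Db(b,0,0) \le 0$, shown in Part~1. Non-negativity will follow once monotonicity in $b$ is established, because $\Db^*(b,r,p) = 0$ for $b \ge h + 2(K-2) + 1$, where every involved marginal vanishes. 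The main obstacle is therefore monotonicity in $b$, equivalent to the discrete-concavity inequality
\[
  \bigl[\Db(b+1,0,p) - \Db(b,0,p)\bigr] + \sum_{i=0}^{r-1}\bigl[\Dr(b+2,i) - 2\Dr(b+1,i) + \Dr(b,i)\bigr] \;\le\; 0.
\]
The subtlety is that each second difference of $\Dr$ vanishes when $b,b+1,b+2$ lie in a common region for index $i$, but equals $+1$ at the transition $b = h+i$ and then alternates between $+1$ and $-1$ inside the middle region according to the parity of $b+i-h$. I plan to pair consecutive indices $i,i+1$, whose middle-region contributions cancel to $0$, and to verify by case analysis on the position of $b$ relative to the boundary values $h,h+1,\ldots,h+K-2$ that any residual $+1$ arising from a boundary transition is absorbed by the negative drop $\Db(b+1,0,p) - \Db(b,0,p) \le -1$ that occurs at the same boundary; this finite case analysis closes the argument.
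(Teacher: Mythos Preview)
Your Parts~1 and~2, and the cross-monotonicity in $r$ and in $p$ for Part~3, coincide with the paper's argument (your reduction of the $p$-monotonicity to $\Db(b,0,1)\le\Db(b,0,0)$ is actually more direct than the paper's Claim~3, which proves the same thing by an unnecessary induction on $r$; the point is simply that the $\Dr$-terms are $p$-free).

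For the concavity in $b$ in Part~3, you and the paper diverge. The paper avoids any parity bookkeeping by isolating two structural identities (Claim~1): $\Dr(\cdot,r)$ is affine on $[0,h+r]$, and $\Dr(b,r)=\Dr(b+1,r-1)$ for $b\ge h+r$. The first gives the induction step on $r$ for the range $b\le h+r-2$; the second makes the sum $\sum_{i}[\Dr(b+1,i)-\Dr(b,i)]$ telescope to $\Dr(b+1,r-1)-\Dr(b,0)$, and since $\Db(b,0,p)=\Dr(b,0)$ for $b\ge h$ one obtains the closed form $f(b+1,r,p)-f(b,r,p)=\Dr(b+1,r-1)$ on $b\ge h+r-1$. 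Concavity and non-negativity then drop out of the monotonicity of $\Dr$. This is cleaner and exposes why the construction works.

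Your parity route is also valid, but two points in the sketch need tightening. First, the second difference of $\Dr(\cdot,i)$ does \emph{not} vanish when $b,b+1,b+2$ all lie in the middle region; that is exactly where the alternation $\pm1$ lives (it vanishes only in regions~1 and~3). Second, the mechanism is not ``boundary transitions'' per se but a parity match: one checks that $D_i(b):=\Dr(b+2,i)-2\Dr(b+1,i)+\Dr(b,i)$ equals $(-1)^{b-h-i}$ on $h+i\le b\le h+2(K-2)-i$ and $0$ elsewhere, so $\sum_{i=0}^{r-1}D_i(b)\in\{-1,0,+1\}$ with the value $+1$ occurring only when $b-h$ is even; and for $b\ge h$ (either $p$) one has $\Db(b+1,0,p)-\Db(b,0,p)=-1$ precisely when $b-h$ is even. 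For $p=1$ and $b<h$ the $\Db$-drop is $0$, but there every $D_i$ vanishes. Stated this way the ``finite case analysis'' collapses to a single parity check.
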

\begin{proof}
  \textbf{(1)}
  By definition, it holds that, for $\ell=1,2$,
  \[
    f(b_\ell,r_\ell, 1)- f(b_\ell,r_\ell, 0) = \sum_{j=0}^{b_\ell-1}\Bigl(\Db(j, 0, 1) - \Db(j, 0, 0)\Bigr) + f(0,0,1) - f(0,0,0).
  \]
  Since $\Db (j, 0, 0) = \Db(j, 0, 1)$ if $j\geq h+1$, the RHS is equal to
  \[
    \sum_{j=0}^{\min\set{b_\ell-1, h}}\Bigl(K-1 - (K-1+h-j) \Bigr) + \frac{h(h+1)}{2}
    \geq -\sum_{j=0}^{h}(h-j)+  \frac{h(h+1)}{2} = 0.
  \]
  Hence the marginal return with respect to $p$ is non-negative.
  Moreover,
  \begin{align*}
  \Bigl(f(b_1,r_1, 1)- f(b_1,r_1, 0) \Bigr) - \Bigl(f(b_2,r_2, 1)- f(b_2,r_2, 0) \Bigr)
    &= \sum_{j=\min\set{h, b_1-1}}^{\min\set{h, b_2-1}}(h-j) \geq 0,
  \end{align*}
  since $h-j\geq 0$ for $j\leq h$.
  Thus (1) holds.

  \noindent
  \textbf{(2)}
  We observe that $\Dr(b, r)$ is a monotonically non-increasing function with respect to $b$, and a monotonically non-increasing function with respect to $r$~(See the footnote in Section~\ref{sec:idea}).
  Hence it holds that
  \[
    \Dr(b_1, r_1) \geq \Dr (b_1, r_2) \geq \Dr (b_2, r_2)\geq 0.
  \]
  This implies that
  \[
    f(b_1,r_1+1, p_1)- f(b_1,r_1, p_1) = \Dr (b_1, r_1) \geq \Dr (b_2, r_2) = f(b_2,r_2+1, p_2)- f(b_2,r_2, p_2)\geq 0.
  \]
  Thus (2) holds.

  \noindent
  \textbf{(3)}
  We first observe the following claims.

  \begin{claim}\label{clm:Dr}
  Let $0 \leq r \leq |R|$.
  \begin{enumerate}
    \item For $0\leq b \leq h+ r-1$, it holds that $2\Dr(b+1, r) = \Dr(b, r) + \Dr(b+2, r)$.
    \item For $h +r\leq b$, it holds that $\Dr(b, r) = \Dr(b+1, r-1)$.
  \end{enumerate}
  \end{claim}
  \begin{proof}
  They follow from the definition of $\Dr$.
  (1) clearly holds when $b+2\leq h+r$, because $\Dr$ is a linear function with respect to $b$ .
  When $b = h+r-1$, (1) also holds since $2\Dr(b+1, r) = 2(K-1-r)$, $\Dr(b, r)=K-r$, and $\Dr(b+2, r)=K-1-(r+1)$.
  For (2), the equality holds when $h +r+1\leq b$, since $\Dr$ depends on $b+r$.
  Moreover, when $h+r=b$, the equality holds since $\Dr(b, r) = \Dr(b+1, r-1)=K-1-r$.
  \end{proof}

  \begin{claim}\label{clm:subm2}
  For $0\leq b \leq |B|-2$, $0 \leq r \leq |R|$, and $0 \leq p \leq 1$, it holds that
  \[
    f(b+1,r, p)- f(b,r, p) \geq f(b+2,r, p)- f(b+1,r, p)\geq 0.
  \]
  \end{claim}
  \begin{proof}
    We first consider the case when $b\leq h+r-2$.
    We will show that the first inequality holds by induction on $r$.
    The inequality holds when $r=0$ as $\Db (b, 0, p)\geq \Db (b+1, 0, p)$.
    Suppose that $r>0$.
    Then we have
    \begin{align*}
      f(b+1,r, p)- f(b,r, p) &= \Bigl(f(b+1,r-1, p)+\Dr(b+1, r-1)\Bigr) - \Bigl(f(b,r-1, p)+\Dr(b, r-1)\Bigr)\\
      &\geq f(b+2,r-1, p)+\Dr(b+1, r-1) - f(b+1,r-1, p)-\Dr(b, r-1)\\
      & = f(b+2,r, p) - f(b+1,r, p) \\
      & \quad\quad +2\Dr(b+1, r-1) - \Dr(b, r-1)- \Dr(b+2, r-1),
    \end{align*}
    where the inequality holds by the induction hypothesis.
    Therefore, it follows from Claim~\ref{clm:Dr}~(1) that for $0\leq b \leq h+ r-2$,
    \[
      f(b+1,r, p)- f(b,r, p) \geq  f(b+2,r, p) - f(b+1,r, p).
    \]
    Thus, the claimed inequality holds when $b\leq h+r-2$.
    This implies that, for $0\leq b \leq h+ r-2$, we have $f(b+1,r, p)- f(b,r, p)\geq f(h+r+1, r, p)-f(h+r, r, p)$.

    Next suppose that $h+r-1\leq b$.
    We may assume that $r>0$, as the case when $r=0$ easily follows.
    By definition, it holds that
    \[
      f(b+1,r, p)- f(b,r, p) = \sum_{i=0}^{r-1}\Bigl(\Dr(b+1, i) - \Dr(b,i)\Bigr) + \Db(b,0,p).
    \]
    Since $\Dr(b, i) = \Dr(b+1, i-1)$ for $i=1,\dots, r-1$ by Claim~\ref{clm:Dr}~(2), we have
    \[
      f(b+1,r, p)- f(b,r, p) = \Dr(b+1, r-1) - \Dr(b,0) + \Db(b,0,p) = \Dr(b+1, r-1),
    \]
    where the last equality follows from $\Dr(b,0) = \Db(b,0,p)$ for $b\geq h$.
    Similarly, $f(b+2,r, p)- f(b+1,r, p) = \Dr(b+2, r-1)$.
    Hence
    \[
      f(b+1,r, p)- f(b,r, p) = \Dr(b+1, r-1) \geq \Dr(b+2, r-1) = f(b+2,r, p)- f(b+1,r, p).
    \]
    Thus the claimed inequality holds.

    Moreover, for $h+r-1\leq b \leq |B|-2$, we have $f(b+1,r, p)- f(b,r, p)\geq f(|B|, r, p)-f(|B|-1, r, p)=\Dr(|B|, r-1)\geq 0$.
    Thus the monotonicity also holds.
  \end{proof}

  \begin{claim}\label{clm:subm3}
  For $0\leq b \leq |B|-1$, $0 \leq r \leq |R|$, and $0 \leq p \leq 1$, it holds that
  \[
    f(b+1,r, 0)- f(b,r, 0) \geq f(b+1,r, 1)- f(b,r, 1).
  \]
  \end{claim}
  \begin{proof}
    We will show the claim by induction on $r$.
    The inequality holds when $r=0$ as $\Db (b, 0, 0) \geq \Db (b, 0, 1)$.
    Suppose that $r>0$.
    Then
    \begin{align*}
      f(b+1,r, 0)- f(b,r, 0) &= \left(f(b+1,r-1, 0)+\Dr(b+1, r-1)\right) - \left(f(b,r-1, 0)+\Dr(b, r-1)\right)\\
      &\geq f(b+1,r-1, 1)+\Dr(b+1, r-1) - f(b,r-1, 1)-\Dr(b, r-1)\\
      &= f(b+1,r, 1) - f(b,r, 1),
    \end{align*}
    where the inequality holds by the induction hypothesis.
    Thus the claim holds.
  \end{proof}

  It holds that,
  for $0\leq b \leq |B|-1$, $0 \leq r_1\leq r_2 \leq |R|$, and $0 \leq p \leq 1$,
  \begin{equation}\label{eq:subm4}
    f(b+1,r_1, p)- f(b,r_1, p) \geq f(b+1,r_2, p)- f(b,r_2, p),
  \end{equation}
  since $f(b+1,r_\ell, p)- f(b,r_\ell, p) = \sum_{i=0}^{r_\ell-1}\left(\Dr(b+1, i) - \Dr(b,i)\right) + \Db(b,0,p)$ for $\ell=1,2$.

  Therefore, applying Claims~\ref{clm:subm2} and~\ref{clm:subm3} with~\eqref{eq:subm4}, we have
  \begin{align*}
    f(b_1+1,r_1, p_1)- f(b_1,r_1, p_1) &\geq f(b_1+1,r_2, p_1)- f(b_1,r_2, p_1)\\
    &\geq f(b_2+1,r_2, p_1)- f(b_2,r_2, p_1) \geq f(b_2+1,r_2, p_2)- f(b_2,r_2, p_2).
  \end{align*}
  Since the monotonicity follows by Claim~\ref{clm:subm2},
  we complete the proof of (3).
\end{proof}

\section{Lower Bounds for Matroid Constraints}\label{sec:matroid}

In this section, we prove Theorem~\ref{thm:lower-bound-matroid}.
As described in the introduction, we assume that the ground set $E$ is partitioned into classes $C_1,\ldots,C_K$ such that $|C_1| = \cdots = |C_{K-1}|=: m$ and $|C_K| = 1$.
Note that the number of elements $n := |E|$ is $(K-1)m+1$.
We design a monotone submodular function $f\colon 2^E \to \mathbb{Z}_+$ that is colorwise-symmetric with respect to the partition $\{C_1,\ldots,C_K\}$ such that it is hard to distinguish blue and red elements in each class whereas we need to hit all the red elements to get the optimal value.
We will specify the exact values of $f$ in the next section.
Here, we summarize the critical properties of $f$ and use them to prove Theorem~\ref{thm:lower-bound-matroid}.

\begin{lemma}\label{lem:hard-function-partition}
  For any large enough integer $m$ and any positive integer $K$, there exists a colorwise-symmetric function $f\colon2^E \to \bbZ_+$ with respect to a partition $E = \bigcup_{i=1}^K C_i$ with $|C_1|=\cdots =|C_{K-1}| = m$ and $|C_K|$ = 1 such that
\begin{enumerate}
    \itemsep=0pt
    \item[(i)] $f$ is monotone submodular.
    \item[(ii)] [Optimal value] $f(1,\ldots,1; 0,\ldots,0) =(2K-1)!$.
    \item[(iii)] [Output value] $f(0,\ldots, 0, 1; 1,\ldots,1,0) = K(2K-2)!$.
     \item[(iv)] [Indistinguishability] For any $ 1\leq i \leq K-1$ and $r_1,\ldots,r_{i-1},b_1,\ldots,b_i$, we have
     \begin{align*}
     & f(r_1,\ldots, r_{i-1}, 1,0,\ldots,0; b_1,\ldots, b_{i-1}, b_{i}\phantom{+1},0,\ldots,0) \\
     = & f(r_1,\ldots, r_{i-1}, 0,0,\ldots,0; b_1,\ldots, b_{i-1}, b_{i}+1,0,\ldots,0).
     \end{align*}
  \end{enumerate}
\end{lemma}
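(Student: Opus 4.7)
The plan is to mimic the construction in Definition~\ref{def:hard-function} but generalised to a class-by-class structure. For each class $i \in \{1, \ldots, K\}$ I would specify marginal contributions for appending the unique red of class $i$ and for appending one more blue of class $i$, each parametrised by the current colorwise-symmetric state. Condition~(iv) is imposed by setting these two marginals equal at every prefix state (i.e., whenever $r_j = b_j = 0$ for all $j > i$), directly mirroring the identity $\Db(b, 0, 0) = \Dr(b, 0)$ from the cardinality construction. The remaining freedom---on non-prefix states, and in particular on operations involving the red of class $K$---is used to grant a ``bonus'' once the red of class $K$ is added on top of reds from earlier classes. The numerical magnitudes are tuned so that telescoping along the all-reds path $\emptyset \to \{r_1\} \to \cdots \to \{r_1, \ldots, r_K\}$ produces $(2K-1)!$, whereas telescoping along the path that takes one blue from each of $C_1, \ldots, C_{K-1}$ and then adds the red of class $K$ produces $K(2K-2)!$.

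Conditions~(ii) and~(iii) then hold by this tuning, and monotonicity follows from non-negativity of every marginal. The core difficulty, exactly as in Lemma~\ref{lem:subm}, is submodularity: every marginal must be a non-increasing function of the current state. I would organise the case analysis into three layers. First, intra-class comparisons (two elements from the same $C_i$) reduce essentially to the cardinality argument of Claim~\ref{clm:subm2}. Second, inter-class comparisons among $C_1, \ldots, C_{K-1}$ follow by induction on the class index, in the spirit of Claim~\ref{clm:subm3}. Third, the interaction between the red of class $K$ and any element of an earlier class requires calibrating the bonus: the marginal of the red of class $K$ must shrink, as more prior reds accumulate, by at least as much as those prior reds' marginals are boosted.

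The third layer is the main obstacle. The bonus has to be large enough that $f$ on the all-reds state reaches $(2K-1)!$, yet small enough---in a precisely state-dependent manner---that submodularity holds across every pair of elements. This tight balance is what forces the specific factorial constants in (ii) and (iii) and recovers the $\frac{K}{2K-1}$ ratio advertised in Theorem~\ref{thm:lower-bound-matroid}. I expect the bulk of the work to be a meticulous case analysis verifying that the chosen marginals are monotone non-increasing in every direction of the state vector, paralleling the chain of claims leading to Lemma~\ref{lem:subm} in the cardinality proof.
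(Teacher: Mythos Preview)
Your plan diverges substantially from the paper's construction, and the intuition guiding it points in the wrong direction.

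The paper does \emph{not} extend the marginal-return recipe of Definition~\ref{def:hard-function}. Instead it defines a family $f_1,\ldots,f_K$ by a closed-form recurrence (found via computer-aided search): starting from $f_1(r_K;b_K)=r_K$, one sets
\[
f_t \;=\; m_t-\bigl(2m_{t-1}s_t+\delta_{t-1}(d_t-1)\bigr)d_t,
\]
where $m_t=(2t-1)!$, $s_t=1-r_{K-(t-1)}$, $d_t=2(t-1)-\hat b_{K-(t-1)}$ with $\hat b_i=\min\{b_i,2(K-i)\}$, and $\delta_{t-1}=m_{t-1}-f_{t-1}$. The recursion runs from class $K$ outward, and each $f_t$ is a polynomial of degree at most $2$ in every $d_j$ and at most $1$ in every $s_j$. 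This algebraic form is what drives all four verification lemmas: submodularity (Lemma~\ref{lem:matroidFunctionSubmodularity}), for instance, reduces to the monotonicity of $x\mapsto x(x-1)$ on non-negative integers, and indistinguishability (Lemma~\ref{lem:oneColorIndistiguishability}) to the identity $(d-1)d=(d'+1)d'$ when $d=d'+1$.

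Your ``bonus'' picture is backwards. On the all-reds set the paper's function is \emph{modular}: $f(1,\ldots,1;0,\ldots,0)$ equals the sum of the singleton red values (Lemma~\ref{lem:obs2}), so the red of class $K$ receives no boost from earlier reds---its marginal is the same $(2K-2)!$ whether added to the empty set or to $\{r_1,\ldots,r_{K-1}\}$. What creates the gap between $(2K-1)!$ and $K(2K-2)!$ is the opposite effect: a blue element in class $j$ \emph{depresses} the marginals of elements in every later class, whereas a red in class $j$ does not. Your third-layer calibration, in which the marginal of the class-$K$ red is higher in the presence of prior reds, would if taken literally be supermodular.

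More fundamentally, you have not exhibited a function: the proposal defers the numerics (``tuned so that'', ``I expect the bulk of the work to be'') exactly where the content lies. The authors report that the recurrence above emerged only after imposing, during computer search, the extra constraint that the discrete second derivative in each $b_i$ be constant; a class-by-class marginal construction in the style of Lemma~\ref{lem:subm}, without that structural insight or the ceiling $\hat b_i$, is unlikely to land on the exact factorials or to close the submodularity case analysis.
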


\begin{proof}[Proof of Theorem~\ref{thm:lower-bound-matroid}]
  We modify the proof of Theorem~\ref{thm:lower-bound-cardinality}.

  Let $f\colon 2^E \to \mathbb{Z}_+$ be the function as in Lemma~\ref{lem:hard-function-partition}.
  For $1 \leq i \leq K -1$, let $T_i = \{(i-1)m+1,\ldots,im\}$ and let $T_K = \{n\}$.
  Let $\mathcal{D}$ be the uniform distribution over orderings $(e_1,\ldots,e_n)$ of elements of $E$, conditioned on that for each $1 \leq i \leq K$, the set $\{e_t \mid t \in T_i\}$ consists of all the elements of class $i$.
  By Yao's minimax principle, to prove Theorem~\ref{thm:lower-bound-matroid}, it suffices to show that any deterministic streaming algorithm $A$ with $o\left(\frac{n}{K}\right)$ space on an input sampled from $\mathcal{D}$ does not achieve approximation ratio more than $\frac{K}{2K-1}$ in expectation.

  We define $S_t$ for $t \in \{0,1,\ldots,n\}$ and $S_I$ for $I \subseteq \{1,\ldots,n\}$  as in the proof of Theorem~\ref{thm:lower-bound-cardinality}.
  For $t \in \{0,1,\ldots,n-1\}$, we iteratively define a \emph{canonical set} $I_t^*$ of indices (not elements) as follows.
  First, we set $I_0^* = \emptyset$.
  Then, for each $1 \leq t \leq n-1$, we define $I_t^*$ as the set of indices of elements in $S_t$ when $A$ had $S_{I^*_{t-1}}$ after the $(t - 1)$-th step and all but at most one element in $S_{I^*} \cup \{e_t\}$ are blue.
  Note that $I_t^*$ is uniquely determined because $A$ is deterministic, and by Property (iv) of Lemma~\ref{lem:hard-function-partition}, the value of $f(S_I \cup \{e_t\})$ for $I \subseteq I^*_t$ is uniquely determined from the sizes of $I \cap T_1,\ldots,I \cap T_{K-1}$.

  We say that $A$ followed the \emph{canonical process} if $A$ holds the set $S_{I^*_{t}}$ after the $t$-th step for each $1 \leq t \leq n-1$.
  For $1 \leq t \leq n-1$, let $X_t$ be the event that $S_{I^*_{t-1}} $ has one or more red elements and $e_t$ is red.
  Then, the probability that $A$ does not follow the canonical process is bounded by the probability that $\bigvee_{t=1}^{n-1} X_t$ happens.

  To bound $\Pr[X_t]$, we introduce some notations.
  Let $s$ be the space usage of the algorithm.
  For $i \in \{1,\ldots,K-1\}$, let $s_i = |I^*_{t-1} \cap T_i|$.
  For $t \in \{1,\ldots,n-1\}$, let $i_t \in \{1,\ldots,K-1\}$ be the class that the $t$-th element belongs to, that is, the unique integer $i$ with $t \in T_i$.
  Then for any $t \in \{1,\ldots,n-1\}$, we have
  \begin{align*}
    \Pr[X_t]
    & \leq \sum_{r=1}^{K-2}\max_{\substack{r_1,\ldots,r_{K-1} \in \{0,1\}: \\\sum_{i \neq i_t} r_i \leq r}}\frac{s_{i_t}}{m}\prod_{i \neq i_t}\frac{ s_i^{r_i} {(m-s_i)}^{1-r_i}}{m}
    \leq \sum_{r=1}^{K-2} {\left(\frac{s}{r}\right)}^r \frac{m^{K-2-r}}{m^{K-1}} \\
    & = \frac{1}{m}\sum_{r=1}^{K-2} {\left(\frac{s}{rm}\right)}^r
    \leq \frac{s}{m(m-s)} = O\left(\frac{s}{m^2}\right).
  \end{align*}
  Here, we regard $r$ as the number of red elements in $S_{I^*_{t-1}}$ and regard $r_1,\ldots,r_{K-1} \in \{0,1\}$ as the numbers of red elements in $S_{I^*_{t-1} \cap T_1},\ldots,S_{I^*_{t-1} \cap T_{K-1}}$, respectively.
  The first inequality holds because the probability that $e_t$ is red is $\frac{s_{i_t}}{m}$ and the probability that $S_{I^*_{t-1} \cap T_i}$ has $r_i$ red elements is $\frac{s_i^{r_i}{(m-s_i)}^{1-r_i}}{m}$.

  Now by a union bound, we have
  \begin{align*}
    & \Pr\left[\bigvee_{t=1}^{n-1} X_t \right] \leq \sum_{t =1}^{n-1}\Pr[X_t]
    = O\left(\frac{sn}{m^2}\right)
    = O\left(\frac{Ks}{m}\right).
  \end{align*}

  Let $Y$ be the event that $S_{I^*_{n-1}}$ has one or more red elements.
  Then, we have
  \begin{align*}
    & \Pr[Y]
    \leq 1 - \prod_{i=1}^{K-1} \frac{m-s_i}{m}
    \leq 1 - \frac{(m-s)m^{K-2}}{m^{K-1}}
    = \frac{s}{m}.
  \end{align*}

  As long as $Ks=o(n)$, the probability that none of $X_1,\ldots,X_{n-1}$, and $Y$ happens is at least $1-o(1)$ by setting the hidden constant in $s$ to be small enough.
  If none of the events has happened, the algorithm $A$ can only obtain values for sets $S$ with $S \subseteq S_{I^*_{t-1}} \cup \set{e_t}$ for  $1 \leq t \leq n$ and $|S| \leq K$.
  As $S_{I^*_{t-1}} \cup \set{e_t}$ for any $1 \leq t \leq n-1$ contains at most one red element and $S_{I^*_{n-1}}$ contains no red element, the values the algorithm can observe is at most $K(2K-2)!$ by Properties~(iii) and (iv) of Lemma~\ref{lem:hard-function-partition}.
  Recall that the optimal value is $(2K-1)!$ by Property~(ii) of Lemma~\ref{lem:hard-function-partition}.
  Therefore, the approximation ratio (in expectation over $\mathcal{D}$) is
  \[
    (1-o(1)) \cdot \frac{K(2K-2)!}
    {(2K-1)!} + o(1) \cdot 1 = (1-o(1))\frac{K}{2K-1} + o(1). \qedhere
  \]
\end{proof}

\subsection{Construction of the Hard Function}\label{sec:def_func_matroid}

We begin by describing some characteristics of the function we will define. Let $\hat{b}_i = \min \{ b_i, 2(K-i)\}$.
The function will be expressed as a polynomial of $r_i$ and $\hat{b}_i$ for all $1 \leq i \leq K$. In other words, the number of blue elements matter only up to a certain ceiling: For class $i$, if there are more than $2(K-i)$ blue elements in the class $i$, the function value
is the same as if there are exactly $2(K-i)$ blue elements. To be more precise, we decree that
\[
f(r_1, \cdots r_K; b_1, \ldots, b_K) =  f(r_1, \ldots, r_K; \hat{b}_1, \ldots, \hat{b}_K).
\]

We now define the function $f$ recursively.
For $t=1,\dots K$, let $f_t$ be a function on the last $t$ classes $C_{K-(t-1)},\dots, C_K$, that is, $f_t$ takes the form of
\[
f_t(r_{K-(t-1)},\dots, r_K; b_{K-(t-1)},\dots, b_K)
\]
Define
\[
f_1(r_K; b_K) = r_K,
\]
and assume the function $f_{t-1}$ is already defined for some $t \geq 2$.
We then define the function $f_{t}$.
For that purpose, we give some notation.
Let $m_t = (2t-1)!$.
As we will see later, the function value of $f_{t-1}$ is between $0$ and $m_{t-1}$.
Suppose that we are given $r_{K-(t-1)},r_{K-(t-2)},\ldots, r_{K}$ and $b_{K-(t-1)}, b_{K-(t-2)},\ldots, b_{K}$.
Define\footnote{Here we note that the terms $\delta_{t-1}$, $d_t$, and $s_t$ indeed depend on the value of $r_{K-(t-1)},r_{K-(t-2)},\ldots, r_{K}$ and $b_{K-(t-1)}, b_{K-(t-2)},\ldots, b_{K}$. However, we choose to avoid the cumbersome notation of associating the former with the latter.}
\[
\delta_{t-1} = m_{t-1} - f_{t-1}(r_{K-(t-2)},\ldots, r_{K}; \hat{b}_{K-(t-2)},\ldots, \hat{b}_{K}).
\]
We also define
\[
d_t = 2(t-1)-\hat{b}_{K-(t-1)} \mbox{\quad and \quad} s_t = 1 - r_{K-(t-1)}.
\]
The term $d_t$~(resp., $s_t$) is simply the gap between $\hat{b}_{K-(t-1)}$~(resp., $r_{K-(t-1)}$) and its potential maximum.
We remark that both of them are non-negative, and $\hat{b}_{K-(t-1)}+d_t=2(t-1)$ and $r_{K-(t-1)}+s_t=1$.
We can then express $f_t$ as follows:
\begin{align}\label{eq:ft}
f_t(r_{K-(t-1)},\ldots, r_{K}; b_{K-(t-1)},\ldots, b_{K}) & = m_t - a_t \cdot d_t,
\end{align}
where
\[
a_t = 2m_{t-1}s_t + \delta_{t-1}(d_t-1).
\]

The function $f_t$ is set up in such a way so that
$f_t (1,\dots, 1;0,\dots, 0)$ is exactly $m_t=(2t-1)!$ for any $t$.
As we will show, this maximizes $f_t$.
On the other hand, $f_t(0,\dots,0;0,\dots, 0)=0$, which is the minimum of $f_t$.
%
To see the significance of the term $a_t$, recall that the term $\delta_{t-1}$ encodes the difference between the maximum of $f_{t-1}$
 and the actual value attained by the given
$r_{K-(t-2)},\ldots, r_{K}$ and $\hat{b}_{K-(t-2)},\ldots, \hat{b}_{K}$ --- therefore always non-negative.
Then we can regard $a_t$ as a linear combination of $s_t$ and $d_t$~(both are decided by the number of red/blue elements of class $K-(t-1)$), where the coefficients are respectively $2m_{t-1}$ and $\delta_{t-1}$~(both are decided by the number of red/blue elements from later classes $K-(t-1)+1, \ldots, K$).
See Lemma~\ref{lem:matroidFunctionMonotonicity} for a more precise summary of the above
discussion.

\subsection{Concrete Example and Some Observations}\label{sec:MatroidObs}

We present a concrete example to highlight several interesting properties of the function constructed in Section~\ref{sec:def_func_matroid}, and to share some of our experiences in searching for such a function.
Let $K=3$. All the function values when $0 \leq b_1 \leq 2(K-1)=4$ and  $0 \leq b_2  \leq 2(K-2)=2$ are shown
in Tables~\ref{tab:withOutLastElement} and~\ref{tab:withLastElement}.
We remark that having more blue elements does not increase the value further, as mentioned in Section~\ref{sec:def_func_matroid}.

\begin{table}[h!]
\scriptsize
\begin{tabular}{|l|r|r|r|l|r|r|r|l|r|r|r|l|r|r|r|}
\hline
 & \multicolumn{3}{l|}{$r_1=0$, $r_2=0$} &  & \multicolumn{3}{l|}{$r_1=1$, $r_2=0$} &  & \multicolumn{3}{l|}{$r_1=0$, $r_2=1$} &  & \multicolumn{3}{l|}{$r_1=1$, $r_2=1$} \\ \hline
\backslashbox[14mm]{$\hat{b}_1$}{$\hat{b}_2$}&    0   &  1      & 2      &\backslashbox[14mm]{$\hat{b}_1$}{$\hat{b}_2$}  &  0     & 1      &  2     & \backslashbox[14mm]{$\hat{b}_1$}{$\hat{b}_2$} &  0     & 1      &2        & \backslashbox[14mm]{$\hat{b}_1$}{$\hat{b}_2$} & 0      &  1     &  2     \\ \hline
 0&   0    & 48      &  72     &  0 &    48   &  96     & 120      & 0 &  48   &   72    &  72     &  0&   96    & 120      &120      \\ \hline
 1&   48    &  72     &    84   &  1&  84     &  108     &  120     & 1 &  72     &  84     &  84     & 1 &   108    &   120    & 120      \\ \hline
 2&   84    &    92   &   96    & 2 &   108    &  116     & 120      & 2 &  92     &   96    &  96     & 2 &  116     &      120 &   120    \\ \hline
 3&    108   &   108    & 108      & 3 &   120    &   120    &  120     &3  &   108    & 108      &  108     &3  &  120     & 120      &   120    \\ \hline
 4&    120   &   120    &   120    &  4&   120    &  120     &  120     & 4 &   120    &  120     & 120      & 4 &   120    & 120      & 120      \\ \hline
\end{tabular}
\caption{When the unique element of $C_3$ is absent, i.e., $r_3=0$.}\label{tab:withOutLastElement}
\end{table}

\begin{table}[h!]
\scriptsize
\begin{tabular}{|l|r|r|r|l|r|r|r|l|r|r|r|l|r|r|r|}
\hline
 & \multicolumn{3}{l|}{$r_1=0$, $r_2=0$} &  & \multicolumn{3}{l|}{$r_1=1$, $r_2=0$} &  & \multicolumn{3}{l|}{$r_1=0$, $r_2=1$} &  & \multicolumn{3}{l|}{$r_1=1$, $r_2=1$} \\ \hline
\backslashbox[14mm]{$\hat{b}_1$}{$\hat{b}_2$}&    0   &  1      & 2      &\backslashbox[14mm]{$\hat{b}_1$}{$\hat{b}_2$}  &  0     & 1      &  2     & \backslashbox[14mm]{$\hat{b}_1$}{$\hat{b}_2$} &  0     & 1      &2        & \backslashbox[14mm]{$\hat{b}_1$}{$\hat{b}_2$} & 0      &  1     &  2     \\ \hline
 0&   24   & 48      &  72     &  0 &    72   &  96     & 120      & 0 &  72  &   72    &  72     &  0&   120    & 120      &120      \\ \hline
 1&   60   &  72     &    84   &  1&  96     &  108     &  120     & 1 &  84     &  84     &  84     & 1 &   120    &   120    & 120      \\ \hline
 2&   88    &    92   &   96    & 2 &   112    &  116     & 120      & 2 &  96     &   96    &  96     & 2 &  120     &      120 &   120    \\ \hline
 3&    108   &   108    & 108      & 3 &   120    &   120    &  120     &3  &   108    & 108      &  108     &3  &  120     & 120      &   120    \\ \hline
 4&    120   &   120    &   120    &  4&   120    &  120     &  120     & 4 &   120    &  120     & 120      & 4 &   120    & 120      & 120      \\ \hline
\end{tabular}
\caption{When the unique element of $C_3$ is present, i.e., $r_3=1$.}\label{tab:withLastElement}
\end{table}

We can first observe that, individually, the values of a single red/blue element of the first $K-1$ classes are all equal, and the value of the
unique red element in the last class is half of them. In the present example, an element in the first two classes has value 48 while an element in the last class has value 24.
In fact, we have the following observation for the constructed function $f$.
Note that $b_K$ should always be 0.

\begin{lemma}\label{lem:obs1}
The constructed function $f$ satisfies the following.
\begin{align*}
f(0,\dots, 0, 1, 0, \dots, 0; 0, \dots, 0)
&= \begin{cases}
(2K-2)! & \mbox{if $1$ is at the last class $K$}\\
2(2K-2)! & \mbox{otherwise}\\
\end{cases}\\
f(0, \dots, 0;0,\dots, 0, 1, 0, \dots, 0)&= 2(2K-2)!,
\end{align*}
\end{lemma}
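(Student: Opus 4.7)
The plan is to unroll the recursive definition of $f = f_K$ and to track how $(f_t, \delta_t)$ evolve one step at a time. Before tackling either claim I would first establish an all-zero lemma: by induction on $t$, $f_t(0,\dots,0;0,\dots,0) = 0$ and hence $\delta_t = m_t = (2t-1)!$. Indeed, at the inductive step $s_t = 1$, $d_t = 2(t-1)$ and $\delta_{t-1} = m_{t-1}$, so $a_t = 2m_{t-1} + m_{t-1}(2t-3) = (2t-1)\,m_{t-1}$, and hence $a_t d_t = 2(t-1)(2t-1)(2t-3)! = m_t$, which cancels $m_t$ in the expression $f_t = m_t - a_t d_t$.

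For the first formula (red element only in the last class $K$), the base case $f_1(1;0)=1=0!$ holds by definition. I would then prove by induction on $t \geq 2$ that $f_t = (2t-2)!$, equivalently $\delta_t = (2t-2)(2t-2)!$. Plugging $s_t = 1$, $d_t = 2(t-1)$ and $\delta_{t-1} = (2t-4)(2t-4)!$ into the recursion yields $a_t = 2(2t-3)! + (2t-4)(2t-4)!(2t-3) = (2t-2)!$, and hence $f_t = (2t-1)! - (2t-2)!\cdot 2(t-1) = (2t-2)!$. Setting $t = K$ gives $f(0,\dots,0,1;0,\dots,0) = (2K-2)!$.

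For the remaining two cases (a single red or single blue element in class $j \in \{1,\dots,K-1\}$), let $t^* = K-j+1$ and $u = K - j \geq 1$. For $t < t^*$ all arguments of $f_t$ are zero, so by the all-zero lemma $\delta_{t^*-1} = (2u-1)!$. I would then compute $f_{t^*}$ directly in each sub-case: the red case has $(s_{t^*}, d_{t^*}) = (0, 2u)$ and $a_{t^*} = (2u-1)(2u-1)!$, giving $f_{t^*} = (2u+1)! - (2u-1)(2u-1)!\cdot 2u = 2(2u)!$; the blue case has $(s_{t^*}, d_{t^*}) = (1, 2u-1)$ and $a_{t^*} = 2(2u-1)! + (2u-1)!(2u-2) = (2u)!$, giving $f_{t^*} = (2u+1)! - (2u)!(2u-1) = 2(2u)!$. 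The two sub-cases collide at $f_{t^*} = 2(2u)!$, equivalently $\delta_{t^*} = (2u-1)(2u)!$. For $t > t^*$ all subsequent arguments are zero, and an induction entirely analogous to the one above (but starting from this $\delta_{t^*}$) shows $f_t = 2(2(t-1))!$; at $t = K$ this yields $f = 2(2K-2)!$.

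The hard part will be purely algebraic bookkeeping: every inductive step requires gathering several factorial terms into a single telescoping expression using identities like $m_{t-1}(2t-1)(2t-2) = m_t$. The one conceptually interesting point is the collision at step $t^*$: the red and blue sub-cases have genuinely different $(s_{t^*}, d_{t^*})$ and different splits of $a_{t^*}$ between its two summands, yet the totals conspire to agree, after which the recursion proceeds identically in the two cases. Once this collision is in hand, the rest amounts to straightforward factorial algebra.
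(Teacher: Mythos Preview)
Your proposal is correct and follows essentially the same route as the paper: unroll the recursion \eqref{eq:ft}, use that $f_{t-1}(0,\dots,0;0,\dots,0)=0$ so $\delta_{t-1}=m_{t-1}$ before the nonzero entry appears, compute $f_{t^*}$ at the step where the single element sits, and then propagate through the remaining empty classes. The paper packages the propagation step as a single reusable identity $f_{t+1}(0,\ldots;0,\ldots)=(2t)(2t-1)\,f'$ and applies it as a product, whereas you run two parallel inductions; the content is the same.
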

\noindent
The proof will be given in Section~\ref{sec:ProofObservations}.

In the present example, the optimal value is $120$, which is reached by a set of $2(K-1)$ blue elements of class $1$, i.e, $f(0,0,0;4,0,0)=120$, though it is infeasible.
The optimal value is also obtained by a set of all the red elements, which is equal to the sum of the values of each red element.
In the example, we see that $f(1,1,1;0,0,0)=f(1,0,0;0,0,0)+f(0,1,0;0,0,0)+f(0,0,1;0,0,0)=48+48+24=120$.

\begin{lemma}\label{lem:obs2}
The constructed function $f$ satisfies that
\begin{align*}
f(1,\ldots,1; 0,\ldots,0) &= f(1,0,\dots, 0; 0, \dots, 0)+f(0,1,0,\dots, 0; 0, \dots, 0)+ \cdots \\ 
& \hspace{1em}\dots  +f(0,\dots,0,1; 0, \dots, 0) =(2K-1)!,\\
f(0,\dots, 0; 2(K-1),0, \dots, 0) &=(2K-1)!.
\end{align*}
\end{lemma}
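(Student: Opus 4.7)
The plan is to establish both equalities of the lemma by exploiting the recursive structure from Section~\ref{sec:def_func_matroid}. The lemma has three claimed equalities: (a) $f(1,\ldots,1;0,\ldots,0)=(2K-1)!$, (b) the same quantity equals the sum of the $K$ single-red values, and (c) $f(0,\ldots,0; 2(K-1),0,\ldots,0)=(2K-1)!$. Equalities (a) and (c) will come directly from unwinding the recursion and observing that a certain factor collapses to zero, while (b) will follow from Lemma~\ref{lem:obs1} together with an elementary arithmetic identity.

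For (a), I would prove the more general statement $f_t(1,\ldots,1;0,\ldots,0)=m_t=(2t-1)!$ by induction on $t$, and then specialize to $t=K$. The base case $t=1$ holds since $f_1(1;0)=1=m_1$. For the inductive step, observe that when $r_{K-(t-1)}=1$, we have $s_t=0$, so $a_t=\delta_{t-1}(d_t-1)$. By the inductive hypothesis, $f_{t-1}(1,\ldots,1;0,\ldots,0)=m_{t-1}$, hence $\delta_{t-1}=0$, and therefore $a_t=0$. Plugging this into $f_t=m_t-a_t\cdot d_t$ gives $f_t(1,\ldots,1;0,\ldots,0)=m_t$, as desired.

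For (c), I would just evaluate the top-level formula for $f_K$. Since $b_1=2(K-1)$, we have $\hat{b}_1=\min\{2(K-1),2(K-1)\}=2(K-1)$, so $d_K=2(K-1)-\hat{b}_1=0$. Thus $a_K\cdot d_K=0$ regardless of the value of $a_K$, and $f_K=m_K=(2K-1)!$. Note that this is the reason the recursion was designed with the capped quantity $\hat{b}_i$: additional blue elements beyond $2(K-i)$ neither increase nor decrease the value, and at the maximum value $\hat{b}_1=2(K-1)$ the correction term vanishes identically.

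For (b), I would directly combine Lemma~\ref{lem:obs1} with arithmetic. By Lemma~\ref{lem:obs1}, the single-red value in each of classes $1,\ldots,K-1$ is $2(2K-2)!$, while the single-red value in class $K$ is $(2K-2)!$. Summing gives
\[
(K-1)\cdot 2(2K-2)!+(2K-2)!=(2K-2)!\bigl(2(K-1)+1\bigr)=(2K-1)\cdot (2K-2)!=(2K-1)!,
\]
which matches the value from (a). The whole argument is essentially mechanical; the main (minor) obstacle is ensuring the indexing between $f$ and $f_K$, and between the positions $1,\ldots,K$ and the recursion indices $K-(t-1),\ldots,K$, is handled consistently so that the induction in (a) truly specializes to the needed statement.
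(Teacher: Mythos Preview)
Your proposal is correct and follows essentially the same approach as the paper. For (a) the paper simply cites Lemma~\ref{lem:matroidFunctionMonotonicity}(2), whose proof is exactly the induction you spell out; for (b) both you and the paper invoke Lemma~\ref{lem:obs1} and perform the same arithmetic; and for (c) your observation that $d_K=0$ annihilates the correction term is the same as the paper's computation (you even skip the paper's unnecessary evaluation of $\delta_{K-1}$).
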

\noindent
The proof will be given in Section~\ref{sec:ProofObservations}.

By the construction, the function $f(r_1,r_2,r_3; b_1, b_2, b_3)$ can be expressed as the following polynomial:
\begin{align}
f(r_1,r_2,r_3; b_1, b_2, b_3) & = m_3 - \left(2 m_2 s_3 + \left( 2m_1 s_2 + s_1 (d_2 - 1) \right)d_2 (d_3-1) \right) \cdot d_3\nonumber \\
& =  120 - \left(12 s_3 + \left( 2 s_2 + s_1 (d_2 - 1) \right)d_2 (d_3-1) \right) \cdot  d_3,\label{equ:exampleForK3}
\end{align}
where we recall that $m_t = (2t-1)!$, $d_t = 2(t-1)-\hat{b}_{K-(t-1)}$ and $s_t = 1 - r_{K-(t-1)}$ for $t=1,2,3$.
We can observe that $f$ is a polynomial in $s_i$'s and $d_i$'s, where, for each monomial, $s_i$'s have degree at most 1 and $d_i$'s have degree at most 2.
This implies that \emph{the discrete second derivative with respect to $b_i$'s is constant when $b_i$ is in $[0, 2(K-i)]$}.
Take the current example with $r_1=r_2=r_3=b_2=0$.
When $b_1$ increases from 0 to 4, the discrete first derivative is 48, 36, 24, 12 and the discrete second derivative is a constant, which is 12.
It can be verified that the same property holds for all columns and all rows in Tables~\ref{tab:withOutLastElement} and~\ref{tab:withLastElement}.
In fact, we noticed in computer-aided search that imposing the additional constraint of constant second derivative does not change the ratio between the optimal value and the value an algorithm finds, making the found function more structured and generalizable.
Based on computer-aided search imposing this additional constraint for small values of rank $K$, we found the recurrence in the previous section.

\subsection{Correctness of the Function}

%
%

We first show the monotonicity of the constructed function $f_t$, which implies Lemma~\ref{lem:hard-function-partition}~(i) and (ii) when $t=K$.

\begin{lemma}\label{lem:matroidFunctionMonotonicity}
For any $t=1,2,\dots, K$, the constructed function $f_t$ satisfies the following.
\begin{enumerate}



\item $f_t$ is monotone in $r_{K-(t-1)},\ldots, r_K$ and in $b_{K-(t-1)},\ldots, b_K$.

\item $f_t$ reaches the maximum, which is $m_t = (2t-1)!$, when $r_{K-(t-1)}= \cdots =r_K = 1$.

\item $f_t$ reaches the minimum, which is $0$, when $r_{K-(t-1)}=\cdots = r_K=b_{K-(t-1)}=\cdots = b_K=0$.

\end{enumerate}
\end{lemma}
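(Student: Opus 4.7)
The plan is to prove Lemma~\ref{lem:matroidFunctionMonotonicity} by induction on $t$. The base case $t=1$ is immediate since $f_1(r_K;b_K)=r_K$ with $b_K=0$ and $r_K\in\{0,1\}$: monotonicity is trivial, the maximum $m_1=1$ is attained at $r_K=1$, and the minimum $0$ at $r_K=0$. For the inductive step I assume the three statements for $f_{t-1}$ and use the closed form $f_t=m_t-a_t d_t$ with $a_t=2m_{t-1}s_t+\delta_{t-1}(d_t-1)$. A small but important observation I will flag at the start of the inductive step is that the induction hypothesis gives $0\le f_{t-1}\le m_{t-1}$, so $0\le\delta_{t-1}\le m_{t-1}$; this is what lets me bound $a_t d_t$ from both sides.

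For statement (2), if $r_{K-(t-1)}=\cdots=r_K=1$ then $s_t=0$ and, applying the induction hypothesis to $f_{t-1}(1,\ldots,1;\,b_{K-(t-2)},\ldots,b_K)$, we get $\delta_{t-1}=0$, so $a_t=0$ and $f_t=m_t$. For statement (3), at the all-zero configuration $s_t=1$, $d_t=2(t-1)$, and $\delta_{t-1}=m_{t-1}$ by the induction min, so
\[
a_t d_t=\bigl(2m_{t-1}+m_{t-1}(2t-3)\bigr)\cdot 2(t-1)=m_{t-1}(2t-1)(2t-2)=(2t-1)!=m_t,
\]
and hence $f_t=0$.

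The monotonicity in statement (1) I split into three cases according to which variable is increased.

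\emph{Top-level red $r_{K-(t-1)}$.} Only $s_t$ changes (decreases by $1$), while $d_t$ and $\delta_{t-1}$ are unaffected; thus $a_t d_t$ decreases by $2m_{t-1}d_t\ge 0$, and $f_t$ is non-decreasing in $r_{K-(t-1)}$.

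\emph{Top-level blue $b_{K-(t-1)}$.} Since $f_t$ depends on $b_{K-(t-1)}$ only through $\hat b_{K-(t-1)}=\min\{b_{K-(t-1)},2(t-1)\}$, and $d_t=2(t-1)-\hat b_{K-(t-1)}$, it suffices to show that $a_t d_t=2m_{t-1}s_t d_t+\delta_{t-1}d_t(d_t-1)$ is non-decreasing in $d_t$ over the integer range $\{0,1,\ldots,2(t-1)\}$; this is clear because the first term is linear in $d_t$ with non-negative coefficient and the second term satisfies $d_t(d_t-1)\in\{0,0,2,6,\ldots\}$ which is non-decreasing in $d_t\ge 0$. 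Hence $f_t$ is non-decreasing in $b_{K-(t-1)}$.

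\emph{Later variables $r_j,b_j$ for $j>K-(t-1)$.} These enter $f_t$ only through $\delta_{t-1}$, which by the inductive monotonicity is non-increasing in each such variable. When $d_t\ge 2$, the coefficient $d_t(d_t-1)\ge 2$ is positive so $a_t d_t$ is non-increasing, giving $f_t$ non-decreasing; when $d_t\in\{0,1\}$ the term vanishes or cancels ($d_t=0$ makes $a_t d_t=0$, while $d_t=1$ makes $a_t d_t=2m_{t-1}s_t$), so $f_t$ is constant in these variables and the claim is vacuous.

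The main obstacle is the range claim $a_t d_t\in[0,m_t]$, which I need to complete the proof (in particular to reuse the hypothesis in the next inductive layer). Non-negativity follows because if $d_t\ge 1$ then $a_t\ge 0$ directly, and if $d_t=0$ then $a_t d_t=0$. For the upper bound I maximize over $\delta_{t-1}\in[0,m_{t-1}]$ and over $(s_t,d_t)\in\{0,1\}\times\{0,\ldots,2(t-1)\}$: with $\delta_{t-1}=m_{t-1}$ the product becomes $m_{t-1}d_t(2s_t+d_t-1)$, maximized at $s_t=1,d_t=2(t-1)$ with value $m_{t-1}\cdot 2(t-1)\cdot(2t-1)=m_t$. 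Together with (2) and (3), this also shows that the extrema are attained exactly as claimed, closing the induction.
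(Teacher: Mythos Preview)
Your proof is correct and follows essentially the same inductive strategy as the paper: verify the base case $t=1$ directly, and for the inductive step use the recurrence $f_t=m_t-a_td_t$ together with the induction hypothesis $0\le \delta_{t-1}\le m_{t-1}$ to establish (1)--(3). The only stylistic difference is that the paper proves monotonicity in one stroke by comparing two arbitrary comparable configurations and bounding $f_t^1-f_t^2$ directly (using that $x(x-1)$ is increasing on nonnegative integers), whereas you argue coordinate-by-coordinate; these are equivalent and the remaining computations for (2), (3), and the range $a_td_t\in[0,m_t]$ match the paper's.
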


\begin{proof} 
We prove by induction on $t$. When $t=1$, it is straightforward to verify the lemma.
Consider when $t \geq 2$.

For (1), suppose that $r^1_{K-(t-1)} \geq r^2_{K-(t-1)}$, $\cdots$, $r^1_K \geq r^2_K$ and $b^1_{K-(t-1)} \geq b^2_{K-(t-1)}$, $\cdots$, $b^1_K \geq b^2_K$.
For $\ell=1,2$, let $\delta^\ell_{t-1}$, $d^\ell_t$, $s^\ell_t$ and $f^\ell_t$ denote the realizations
 of $\delta_{t-1}$, $d_t$, $s_t$ and $f_t$, based on $\bigcup_{j=K-(t-1)}^K \{r^\ell_j, b^\ell_j\}$.
 By the induction hypothesis, $f_{t-1}$ is monotone, and hence we see $0 \leq \delta^1_{t-1} \leq \delta^2_{t-1}$.
  Furthermore, as $b^1_{K-(t-1)} \geq b^2_{K-(t-1)}$, we have $\hat{b}^1_{K-(t-1)} \geq  \hat{b}^2_{K-(t-1)}$, implying that
 $d^1_t \leq d^2_t$. Now it holds by~\eqref{eq:ft} that
 \begin{align*}
 f^1_t - f^2_t &= \delta^2_{t-1}(d^2_t-1)d^2_t -\delta^1_{t-1}(d^1_t-1)d^1_t + 2m_{t-1}\left(s^2_t d^2_t-s^1_t d^1_t\right)\\
 &\geq \delta^2_{t-1}(d^2_t-1)d^2_t -\delta^1_{t-1}(d^1_t-1)d^1_t \\
 &\geq \delta^1_{t-1}\left( (d^2_t-1)d^2_t -(d^1_t-1)d^1_t\right)\geq 0
\end{align*}
where the first inequality holds because $s^2_t \geq s^1_t$ and $d^2_t \geq d^1_t$, the second inequality is from $\delta^2_{t-1} \geq \delta^1_{t-1}$, and the last inequality follows from the fact that the function $x(x-1)$ is monotonically increasing for a non-negative integer $x$.
(1) is then proved.

For (2), when $r_{K-(t-1)}=r_{K-(t-2)}= \cdots= r_K = 1$, by the induction hypothesis, $f_{t-1}$ takes the maximum $m_{t-1}$, and hence $\delta_{t-1}=0$. Then $a_t=0$, since $s_t=0$ and $\delta_{t-1}=0$.
Hence the $f_t$-value in this case is equal to $m_t$.
In addition, it holds that $m_t$ is an upped bound of $f_t$, since  $a_t d_t$ is non-negative in~\eqref{eq:ft}.
Indeed, $a_t d_t=0$ when $d_t=0$, and when $d_t\geq 1$, $a_t$ is non-negative since $\delta_{t-1}$ is non-negative by the induction hypothesis.
Thus $m_t$ is the maximum value of $f_t$, which implies (2).

For (3), when $r_{K-(t-1)}=r_{K-(t-2)}=\cdots = r_K= b_{K-(t-2)}=\cdots = b_K=0$, the induction hypothesis gives that $\delta_{t-1} = m_{t-1}$.
Since $s_t=1$ and $d_t = 2(t-1)$ when $r_{K-(t-1)}=b_{K-(t-1)}=0$, it holds that
\[
a_t d_t = m_{t-1} (2 s_t + d_t -1)d_t = m_{t-1} (2t-1)(2t-2)= (2t-1)!,
\]
which is exactly $m_t$. Thus $f_t(0,\dots, 0;0,\dots, 0)=0$ by~\eqref{eq:ft}.
Since $f_t$ is monotone by (1), it is minimum.

Therefore, (1)--(3) hold.
\end{proof}

We calculate the function value obtained by an algorithm~(Lemma~\ref{lem:hard-function-partition}~(iii)).

\begin{lemma}\label{lem:matroidFunctionOutput}
For any $t=1,2,\dots, K$, the constructed function $f_t$ satisfies the following.
\[
f_t(0,\dots,0,1; 1,\dots,1,0) = t\cdot (2t-2)!.
\]
\end{lemma}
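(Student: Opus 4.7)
The plan is to prove the identity by induction on $t$, directly unwinding the recursive definition~\eqref{eq:ft}. The base case $t=1$ is immediate: $f_1(1;0) = 1 = 1 \cdot (2 \cdot 1 - 2)!$. For the inductive step, assume $f_{t-1}(0,\dots,0,1;1,\dots,1,0) = (t-1)(2t-4)!$, and I want to evaluate $f_t$ at the point where $r_{K-(t-1)}=0$, the remaining $r_j$'s form the configuration $(0,\dots,0,1)$, and the $b_j$'s are $(1,1,\dots,1,0)$ with $b_{K-(t-1)}=1$.

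First I would read off the relevant quantities. Since $r_{K-(t-1)}=0$, we have $s_t = 1$. Since $b_{K-(t-1)} = 1 \leq 2(t-1)$ for $t \geq 2$, we have $\hat{b}_{K-(t-1)} = 1$, so $d_t = 2(t-1)-1 = 2t-3$. The inductive hypothesis plus $m_{t-1} = (2t-3)!$ yields
\[
\delta_{t-1} = (2t-3)! - (t-1)(2t-4)! = (2t-3)(2t-4)! - (t-1)(2t-4)! = (t-2)(2t-4)!.
\]

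Next I would assemble $a_t$ and compute $a_t d_t$. From the definition,
\[
a_t = 2m_{t-1} s_t + \delta_{t-1}(d_t-1) = 2(2t-3)! + (t-2)(2t-4)!(2t-4),
\]
so using $(2t-3)(2t-4)! = (2t-3)!$, I get
\[
a_t d_t = (2t-3)\bigl[2(2t-3)! + (t-2)(2t-4)(2t-4)!\bigr] = (2t-3)!\bigl[2(2t-3) + (t-2)(2t-4)\bigr].
\]
The bracketed quantity simplifies to $2t^2 - 4t + 2 = 2(t-1)^2$, yielding $a_t d_t = 2(t-1)^2 (2t-3)!$. Finally,
\[
f_t = m_t - a_t d_t = (2t-3)!\bigl[(2t-1)(2t-2) - 2(t-1)^2\bigr] = 2(t-1)(2t-3)!\bigl[(2t-1)-(t-1)\bigr] = 2t(t-1)(2t-3)!,
\]
which equals $t \cdot (2t-2)!$ since $(2t-2)! = 2(t-1)(2t-3)!$, completing the induction.

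The only real subtlety is ensuring that $\hat{b}_{K-(t-1)} = 1$ rather than being truncated, i.e., confirming $1 \leq 2(t-1)$, which holds for all $t \geq 2$ (and the $t=1$ case is handled separately). Beyond that hurdle, the proof is a routine arithmetic verification, but the algebraic collapse $(2t-1)(2t-2) - 2(t-1)^2 = 2t(t-1)$ is the point where all the terms line up and it is worth highlighting explicitly.
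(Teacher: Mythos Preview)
Your proof is correct and follows essentially the same approach as the paper's: induction on $t$, computing $s_t=1$, $d_t=2t-3$, and $\delta_{t-1}=(t-2)(2t-4)!$ from the inductive hypothesis, then substituting into~\eqref{eq:ft}. The paper leaves the final arithmetic as ``easy to verify,'' whereas you carry it out explicitly; your observation about checking $\hat{b}_{K-(t-1)}=1$ (i.e., that no truncation occurs) is a nice point the paper glosses over.
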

\begin{proof}
We prove by induction on $t$. When $t=1$, it is straightforward to verify the lemma.
By the induction hypothesis, $f_{t-1}(0,\ldots,0,1; 1,\ldots,1,0) = (t-1)\cdot (2t-4)!$. Then
$\delta_{t-1}= (t-2)\cdot (2t-4)!$.
As $s_t=1$ and $d_t=2t-3$, it holds by~\eqref{eq:ft} that
\[
f_t(0,\ldots,0,1; 1,\ldots,1,0) =
(2t-1)! - \left( 2 \cdot (2t-3)! + (2t-4)! (t-2) (2t-4) \right) (2t-3) = t\cdot (2t-2)!,
\]
where the last equality is easy to verify.
\end{proof}

We next deal with submodularity of the constructed function.
%

\begin{lemma}\label{lem:matroidFunctionSubmodularity}
For any $t=1,2,\dots, K$, the constructed function $f_t$ satisfies the following:
Suppose that $r^1_{K-(t-1)} \geq r^2_{K-(t-1)}, \ldots, r^1_K \geq r^2_K$ and $b^1_{K-(t-1)}\geq b^2_{K-(t-1)}, \ldots, b^1_K \geq b^2_K$.
Then
\begin{enumerate}
\item For any integer $i$ with $ K-(t-1)\leq i \leq K$ such that $r^1_i=0$,
it holds that
\begin{align}
&f_t(r^1_{K-(t-1)}, \ldots, r^1_i+1,\ldots, r^1_K;b^1_{K-(t-1)},\ldots, b^1_K)  - f_t(r^1_{K-(t-1)}, \ldots, r^1_i,\ldots, r^1_K;b^1_{K-(t-1)},\ldots, b^1_K)\nonumber\\
&\leq f_t(r^2_{K-(t-1)}, \ldots, r^2_i+1,\ldots, r^2_K;b^1_{K-(t-1)},\ldots, b^2_K)  - f_t(r^2_{K-(t-1)}, \ldots, r^2_i,\ldots, r^2_K;b^2_{K-(t-1)},\ldots, b^2_K).\label{eq:subm_matroid_r}
\end{align}
\item For any integer $i$ with $ K-(t-1)\leq i \leq K$,
it holds that
\begin{align}
&f_t(r^1_{K-(t-1)}, \ldots, r^1_K;b^1_{K-(t-1)},\ldots, b^1_i+1,\ldots, b^1_K)  - f_t(r^1_{K-(t-1)}, \ldots, r^1_K;b^1_{K-(t-1)},\ldots, b^1_i,\ldots, b^1_K)\nonumber\\
&\leq f_t(r^2_{K-(t-1)}, \ldots, r^2_K;b^\ell_{K-(t-1)},\ldots, b^2_i+1,\ldots, b^2_K)  - f_t(r^2_{K-(t-1)}, \ldots, r^2_K;b^2_{K-(t-1)},\ldots, b^2_i,\ldots, b^2_K).\label{eq:subm_matroid_b}
\end{align}
\end{enumerate}
\end{lemma}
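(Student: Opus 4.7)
The plan is to prove the lemma by induction on $t$, paralleling the proof of Lemma~\ref{lem:matroidFunctionMonotonicity}. The base case $t=1$ is immediate since $f_1(r_K;b_K) = r_K$ is linear and hence trivially submodular. For the inductive step, the key structural observation I will exploit is that the marginal effect of a single coordinate change at some index $j$ falls into exactly one of two regimes: either $j > K-(t-1)$ (an ``old'' coordinate), in which case the change modifies only $\delta_{t-1}$ while leaving $s_t$ and $d_t$ fixed; or $j = K-(t-1)$ (the ``new'' coordinate), in which case it modifies only $s_t$ or $d_t$ and leaves $\delta_{t-1}$ unchanged. I would split both (\ref{eq:subm_matroid_r}) and (\ref{eq:subm_matroid_b}) according to this dichotomy.

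For a change at an old coordinate $j > K-(t-1)$, the recursive identity $f_t = m_t - a_t d_t$ with $a_t = 2m_{t-1} s_t + \delta_{t-1}(d_t - 1)$ and $\delta_{t-1} = m_{t-1} - f_{t-1}$ yields the clean factorization
\[
(\text{marginal of the change in } f_t) \;=\; (d_t - 1)\,d_t \,\cdot\, (\text{marginal of the change in } f_{t-1}).
\]
The submodularity inequality for $f_t$ then reduces to multiplying three non-negative, componentwise-monotone inequalities: the inductive hypothesis on $f_{t-1}$; the bound $d_t^1 \leq d_t^2$ coming from $\hat b^1_{K-(t-1)} \geq \hat b^2_{K-(t-1)}$; and the fact that $x(x-1)$ is monotone nondecreasing on non-negative integers (with the boundary case $d_t^1 = 0$ handled trivially because the left-hand side is then $0$).

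For a change at the new coordinate $j = K-(t-1)$, I would expand the marginals directly from the definition of $f_t$. Flipping $r_{K-(t-1)}$ from $0$ to $1$ contributes exactly $2m_{t-1} d_t$, which is monotone in $d_t$ alone and immediately gives (\ref{eq:subm_matroid_r}) in this sub-case. Incrementing $b_{K-(t-1)}$ either contributes $0$ (when $\hat b_{K-(t-1)}$ has already saturated at $2(t-1)$, i.e.\ $d_t = 0$) or contributes $2m_{t-1} s_t + 2\delta_{t-1}(d_t - 1)$. In the unsaturated sub-case, (\ref{eq:subm_matroid_b}) follows by adding two term-by-term non-negative monotone inequalities, using $s_t^1 \leq s_t^2$, $d_t^1 \leq d_t^2$, $d_t^1 \geq 1$, and $\delta_{t-1}^1 \leq \delta_{t-1}^2$ (this last coming from the inductive monotonicity of $f_{t-1}$ via Lemma~\ref{lem:matroidFunctionMonotonicity}).

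The main obstacle I anticipate is the bookkeeping around the saturation threshold $\hat b_{K-(t-1)} = 2(t-1)$, where the marginal return of a new-class blue addition drops discontinuously to zero; one must check that the monotonicity $b^1 \geq b^2$ rules out the problematic mixed case in which configuration $1$ is unsaturated while configuration $2$ is saturated (which cannot occur since $d_t^1 \leq d_t^2$). Apart from this case analysis, the argument is uniform: submodularity of $f_t$ reduces to submodularity \emph{and} monotonicity of $f_{t-1}$, combined with the elementary monotonicity of the polynomial factors $d_t$, $s_t$, and $d_t(d_t-1)$ in the new coordinate.
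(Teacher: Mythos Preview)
Your proposal is correct and follows essentially the same approach as the paper's proof: induction on $t$, splitting into the cases $i > K-(t-1)$ (where the marginal factors as $d_t(d_t-1)$ times the $f_{t-1}$-marginal) and $i = K-(t-1)$ (where the marginal is computed directly as $2m_{t-1}d_t$ for the red increment and $2m_{t-1}s_t + 2\delta_{t-1}(d_t-1)$ for the unsaturated blue increment), with the saturation case handled separately. If anything, you are slightly more careful than the paper in distinguishing $s_t^1$ from $s_t^2$ in the new-coordinate blue case, where the paper writes a single $s_t$ on both sides.
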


\begin{proof} We proceed by induction on $t$. The base case $t=1$ is easy to verify. For the induction step where $t\geq 2$, let $d^\ell_t$, $s^\ell_t$ and $\delta^\ell_{t-1}$ denote the realizations of $d_t$, $s_t$, and $\delta_{t-1}$ based on $\bigcup_{j=K-(t-1)}^K \{r^\ell_j, b^\ell_j\}$ for $\ell=1,2$.
 As $b^1_{K-(t-1)} \geq b^2_{K-(t-1)}$, we have $\hat{b}^1_{K-(t-1)} \geq  \hat{b}^2_{K-(t-1)}$, and thus
 $d^1_t \leq d^2_t$.

First assume that $i\geq K-(t-2)$.
Then it follows from~\eqref{eq:ft} that the left-hand side of~\eqref{eq:subm_matroid_r} is equal to
\begin{align*}
d^1_t(d^1_t-1) & ( f_{t-1}(r^1_{K-(t-2)}, \ldots, r^1_i+1,\ldots, r^1_K;b^1_{K-(t-2)},\ldots, b^1_K)  \\
& - f_{t-1}(r^1_{K-(t-2)}, \ldots, r^1_i,\ldots, r^1_K;b^1_{K-(t-2)},\ldots, b^1_K) ).
\end{align*}
Similarly, the right-hand side of~\eqref{eq:subm_matroid_r} is equal to
\begin{align*}
d^2_t(d^2_t-1) & ( f_{t-1}(r^2_{K-(t-2)}, \ldots, r^2_i+1,\ldots, r^2_K;b^2_{K-(t-2)},\ldots, b^2_K) \\
& - f_{t-1}(r^2_{K-(t-2)}, \ldots, r^2_i,\ldots, r^2_K;b^2_{K-(t-2)},\ldots, b^2_K) ).
\end{align*}
Since $d^1_t(d^1_t-1)\leq d^2_t(d^2_t-1)$ as $d^1_t \leq d^2_t$ and both $d^1_t$ and $d^2_t$ are non-negative integers, \eqref{eq:subm_matroid_r} holds by the induction hypothesis.
The argument for~\eqref{eq:subm_matroid_b} is identical.

Next assume that $i = K-(t-1)$. Then~\eqref{eq:subm_matroid_r} holds, because the LHS and RHS of~\eqref{eq:subm_matroid_r} are equal to $2 m_{t-1} d^1_t$ and $2 m_{t-1} d^2_t$, respectively, and $d^1_t\leq d^2_t$.
For~\eqref{eq:subm_matroid_b}, first observe that, if $b^1_{K-(t-1)} \geq 2(t-1)$, then $d^1_t=0$, implying that the LHS of~\eqref{eq:subm_matroid_b} is zero, and hence~\eqref{eq:subm_matroid_b} is trivial since $f_t$ is monotone by Lemma~\ref{lem:matroidFunctionMonotonicity}.
So assume that $b^1_{K-(t-1)} < 2(t-1)$, implying that $d^2_t \geq d^1_t >0$.
Furthermore, by the monotonicity, we have $\delta^1_{t-1} \leq \delta^2_{t-1}$.
It follows from~\eqref{eq:ft} that the LHS and RHS of~\eqref{eq:subm_matroid_b} are equal to
\begin{align*}
2m_{t-1}s_t  + 2\delta^1_{t-1} (d^1_t-1) \mbox{\ and \ }
2m_{t-1}s_t  + 2\delta^2_{t-1} (d^2_t-1),
\end{align*}
respectively. Since $d^1_t\leq d^2_t$, this proves the lemma.
\end{proof}

We are left with proving indistinguishability.
 We begin by proving that when only elements of a particular class are present, the $f_t$-values are
entirely determined by their cardinality.

\begin{lemma}\label{lem:oneColorIndistiguishability}
For any $t=2,3,\dots, K$ and any non-negative integer $b$, it holds that
\[
f_t(1, 0,\ldots, 0; b, 0,\ldots,0) =f_t(0, 0,\ldots, 0; b+1, 0,\ldots, 0).
\]
\end{lemma}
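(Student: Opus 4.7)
The plan is to prove the identity by a direct computation using the recursive definition~\eqref{eq:ft} of $f_t$, without any induction on $t$. Both sides have all coordinates from classes $K-(t-2),\ldots,K$ equal to zero, which makes the recursive term $\delta_{t-1}$ trivial to evaluate.

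First, I would evaluate $\delta_{t-1}$ for both sides. On both the LHS and the RHS, the arguments passed into the recursive function $f_{t-1}$ are $(r_{K-(t-2)},\ldots,r_K;\hat b_{K-(t-2)},\ldots,\hat b_K)=(0,\ldots,0;0,\ldots,0)$. By Lemma~\ref{lem:matroidFunctionMonotonicity}(3) this value equals $0$, so $\delta_{t-1}=m_{t-1}$ on both sides.

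Next I would tabulate the quantities $s_t$ and $d_t$. On the LHS, $r_{K-(t-1)}=1$ gives $s_t=0$, and $b_{K-(t-1)}=b$ gives $d_t^{L}=2(t-1)-\min\{b,2(t-1)\}$. On the RHS, $r_{K-(t-1)}=0$ gives $s_t=1$, and $b_{K-(t-1)}=b+1$ gives $d_t^{R}=2(t-1)-\min\{b+1,2(t-1)\}$. Plugging into the recursion, the claim reduces to showing
\[
m_{t-1}(d_t^{L}-1)\,d_t^{L} \;=\; m_{t-1}\bigl(2+(d_t^{R}-1)\bigr)\,d_t^{R},
\]
equivalently $(d_t^{L}-1)d_t^{L}=d_t^{R}(d_t^{R}+1)$.

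Finally I would verify this last identity by splitting into two cases depending on the saturation of the $\hat{\,}$ operator. When $b\le 2(t-1)-2$ we have $d_t^{L}=2(t-1)-b$ and $d_t^{R}=d_t^{L}-1$, so both sides equal $d_t^{R}(d_t^{R}+1)$. When $b=2(t-1)-1$ we get $d_t^{L}=1$ and $d_t^{R}=0$, and both sides vanish. When $b\ge 2(t-1)$ both $d_t^{L}$ and $d_t^{R}$ are $0$, and both sides vanish again. I do not expect any genuine obstacle here: the definition of $d_t$ via $\hat b$ was designed precisely so that increasing $b_{K-(t-1)}$ by one and simultaneously decreasing $r_{K-(t-1)}$ from $1$ to $0$ shifts $d_t$ by exactly one in the range where it is not yet saturated, which is what makes the pair of consecutive integers $(d_t^{L}-1,d_t^{L})$ agree with $(d_t^{R},d_t^{R}+1)$. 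The only mild subtlety is the boundary case $b=2(t-1)-1$, where the saturation of $\hat b$ on the RHS is what makes the formula still work out.
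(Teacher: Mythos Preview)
Your proposal is correct and follows essentially the same approach as the paper: both compute $\delta_{t-1}=m_{t-1}$ via Lemma~\ref{lem:matroidFunctionMonotonicity}, plug into~\eqref{eq:ft}, and reduce to the polynomial identity $(d_t^L-1)d_t^L=(d_t^R+1)d_t^R$, verified by the relation $d_t^L=d_t^R+1$ in the unsaturated range and by both vanishing otherwise. Your three-case split is a slight refinement of the paper's two cases (the paper folds your case $b=2(t-1)-1$ into the unsaturated case, since $d_t^L=d_t^R+1$ still holds there), but the argument is the same.
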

\begin{proof} 
Let $d^1_t$ and $d^2_t$, $\delta^1_{t-1}$ and $\delta^2_{t-1}$ be the realizations of $d_t$ and $\delta_{t-1}$ based on $(1,0,\dots, 0;b,0,\dots, 0)$ and $(0,0,\dots, 0;b+1,0,\dots, 0)$, respectively.
Then
$\delta^1_{t-1}= \delta^2_{t-1} = m_{t-1}$ holds by Lemma~\ref{lem:matroidFunctionMonotonicity}.
By~\eqref{eq:ft}, the two function values stated in the lemma can be written as
\begin{align*}
f_t(1, 0,\ldots, 0; b\phantom{+1}, 0,\ldots,0) & = m_t - (d^1_t-1)d^1_t m_{t-1},\\
f_t(0, 0,\ldots, 0; b+1, 0,\ldots, 0)& = m_t - (d^2_t+1)d^2_t m_{t-1},
\end{align*}
respectively.
If $b \geq 2(t-1)$, then both $d^1_t$ and $d^2_t$ are 0, which gives the equivalence of the two above expressions.
So assume that $b+1 \leq 2(t-1)$, and hence $d^1_t = d^2_t+1$.
Then, since
\begin{align*}
(d^1_t-1)d^1_t = (d^2_t+1)(d^2_t+1-1) = (d^2_t+1)d^2_t,
\end{align*}
the above two expressions are equivalent. The proof follows.
\end{proof}

The next lemma generalizes the previous one: if we fix the numbers of red and blue elements of classes before class $i$, then
the function value is entirely determined by the cardinality of elements in class $i$.
This proves Lemma~\ref{lem:hard-function-partition}~(iv) by setting $t=K$.

\begin{lemma}\label{lem:ManyColorsIndistiguishability}
For any $i=2,3, \dots, K$ and any integer $t$ such that $K-i+1 \leq t \leq K$, it holds that
\begin{align*}
&f_t(r_{K-(t-1)}, \ldots, r_{i+1}, 1,0,\ldots,  0; b_{K-(t-1)},\ldots, b_{i+1}, b_i, 0,\ldots, 0)  \\
&= f_t(r_{K-(t-1)}, \ldots, r_{i+1}, 0,0,\ldots,  0; b_{K-(t-1)},\ldots, b_{i+1}, b_i+1, 0,\ldots, 0).
\end{align*}
\end{lemma}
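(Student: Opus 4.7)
The plan is to induct on $t$, using Lemma~\ref{lem:oneColorIndistiguishability} as the base case and the recursive structure $f_t = m_t - a_t \cdot d_t$ with $a_t = 2 m_{t-1} s_t + \delta_{t-1}(d_t - 1)$ as the engine of the inductive step.

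For the base case $t = K - i + 1$, we have $K - (t - 1) = i$, so class $i$ is the very first class handled by $f_t$, and the statement of the lemma collapses to
\[
f_t(1, 0, \ldots, 0; b_i, 0, \ldots, 0) = f_t(0, 0, \ldots, 0; b_i + 1, 0, \ldots, 0),
\]
which is exactly Lemma~\ref{lem:oneColorIndistiguishability}. For the inductive step, suppose $t > K - i + 1$, so that the head class $K - (t - 1)$ lies strictly before class $i$. The arguments $r_{K-(t-1)}$ and $b_{K-(t-1)}$ agree on the two sides of the claimed equality, so $s_t = 1 - r_{K-(t-1)}$, $d_t = 2(t-1) - \hat{b}_{K-(t-1)}$, $m_t$, and $m_{t-1}$ are common to both sides. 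The only quantity that can possibly differ is $\delta_{t-1} = m_{t-1} - f_{t-1}(r_{K-(t-2)}, \ldots, r_K; \hat{b}_{K-(t-2)}, \ldots, \hat{b}_K)$. Since $t - 1 \geq K - i + 1$, the induction hypothesis applies to $f_{t-1}$ at the same index $i$, yielding identical values of $f_{t-1}$ on the two configurations. Hence $\delta_{t-1}$ agrees, and therefore so does $f_t$.

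I do not expect a serious obstacle here: the crux is simply the observation that the swap at class $i$ does not touch any ingredient of the recursion at level $t$ except $\delta_{t-1}$, which is then handled inductively. The only bookkeeping is to verify that the range $t - 1 \geq K - i + 1$ required by the induction hypothesis follows from the case distinction $t > K - i + 1$, which it clearly does. Combined with Lemmas~\ref{lem:matroidFunctionMonotonicity}--\ref{lem:matroidFunctionSubmodularity}, specializing this lemma to $t = K$ will yield property~(iv) of Lemma~\ref{lem:hard-function-partition}.
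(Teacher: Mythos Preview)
Your proposal is correct and matches the paper's proof essentially line for line: both induct on $t$, invoke Lemma~\ref{lem:oneColorIndistiguishability} for the base case $t=K-i+1$, and for the inductive step observe that the head-class data $r_{K-(t-1)},b_{K-(t-1)}$ (hence $s_t,d_t$) coincide on the two sides so that only $\delta_{t-1}$ is in question, which the induction hypothesis on $f_{t-1}$ settles via the recurrence~\eqref{eq:ft}. Your write-up is in fact slightly more explicit than the paper's in spelling out why $s_t$ and $d_t$ agree and why the range $t-1\ge K-i+1$ is available.
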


\begin{proof} We prove by induction on $t$. The base case $t=K-i+1$ follows from Lemma~\ref{lem:oneColorIndistiguishability}.
For the induction step when $t > K-i+1$, let $\delta^\ell_{t-1}$ be the realization of $\delta_{t-1}$ based on $\bigcup_{j=K-(t-2)}^{i} \{ r^\ell_j, b^\ell_j\}$ for $\ell =1,2$.
Induction hypothesis then states that $\delta^1_{t-1}=\delta^2_{t-1}$ and
the proof follows by observing how the function is defined by recurrence~\eqref{eq:ft}.
\end{proof}

Lemma~\ref{lem:hard-function-partition} follows from Lemmas~\ref{lem:matroidFunctionMonotonicity},~\ref{lem:matroidFunctionOutput},~\ref{lem:matroidFunctionSubmodularity}, and~\ref{lem:ManyColorsIndistiguishability}.

\subsection{Proof of Observations in Section~\ref{sec:MatroidObs}}\label{sec:ProofObservations}

We first prove Lemma~\ref{lem:obs1}.

\begin{proof}[Proof of Lemma~\ref{lem:obs1}]
Let $t$ be an integer from 2 to $K$.
We first compute $f_t(1,0,\dots, 0; 0,\dots, 0)$.
By Lemma~\ref{lem:matroidFunctionMonotonicity}, $f_{t-1}(0,\dots, 0;0,\dots, 0) =0$ and hence $\delta_{t-1}=m_{t-1}$.
Hence, since $s_t=0$ and $d_t=2(t-1)$, it follows from~\eqref{eq:ft} that
\[
f_t(1,0,\dots, 0; 0,\dots, 0) = m_t - m_{t-1}(2s_t + d_t-1)d_t = m_t - m_{t-1}(2t-2)(2t-3) = 2\cdot (2t-2)!.
\]
On the other hand, since $s_t=1$ and $d_t=2(t-1)-1$ in the case of $f_t(0,\dots, 0; 1,0,\dots, 0)$,
\[
f_t(0,\dots, 0; 1,0,\dots, 0) = m_t - m_{t-1}(2s_t + d_t-1)d_t = m_t - m_{t-1}(2t-2)(2t-3) = 2\cdot (2t-2)!.
\]

Suppose that we are given $r_{K-(t-1)},r_{K-(t-2)},\ldots, r_{K}$ and $b_{K-(t-1)}, b_{K-(t-2)},\ldots, b_{K}$.
Let $f'$ be the value of $f_t(r_{K-(t-1)},r_{K-(t-2)},\ldots, r_{K}; b_{K-(t-1)}, b_{K-(t-2)},\ldots, b_{K})$.
Then
\begin{align*}
f_{t+1}&(0, r_{K-(t-1)},r_{K-(t-2)},\ldots, r_{K}; 0, b_{K-(t-1)}, b_{K-(t-2)},\ldots, b_{K})\\
&= m_{t+1} - (2m_{t} s_{t+1} + (m_{t}-f')(d_{t+1}-1)) d_{t+1} \\
& = (2t)(2t-1)\cdot f'.
\end{align*}
Therefore, when  $r_{K-(t-1)}=1$ and $r_{K-(t-2)}=\cdots=r_{K}=b_{K-(t-1)}= b_{K-(t-2)}=\cdots = b_{K}=0$~($2\leq t\leq K$), it follows that
\begin{align*}
f(0,\dots, 0, 1, 0, \dots, 0; 0, \dots, 0) &= (2K-2)(2K-3)\cdots (2t) (2t-1)f_t(1, 0, \dots, 0; 0, \dots, 0) \\
&  = (2K-2)(2K-3)\cdots (2t) (2t-1)\left(2\cdot (2t-2)!\right) = 2\cdot (2K-2)!.
\end{align*}
Similarly, we have $f(0, \dots, 0;0,\dots, 0, 1, 0, \dots, 0) =2\cdot (2K-2)!$.
Moreover, since $f_1(1;0)=1$ by the definition, it holds that
\[
f(0,\dots,0, 1; 0, \dots, 0)  = (2K-2)(2K-3)\cdots 2 \cdot f_1(1;0) = (2K-2)!.
\]
Thus Lemma~\ref{lem:obs1} holds.
\end{proof}

Below we prove Lemma~\ref{lem:obs2}.
\begin{proof}[Lemma~\ref{lem:obs2}]
It follows from Lemma~\ref{lem:matroidFunctionMonotonicity} that
\[
f(1,\dots, 1;0,\dots, 0) = f_K(1,\dots, 1;0,\dots, 0) = (2K-1)!.
\]
Moreover, Lemma~\ref{lem:obs2} imlplies that
\begin{align*}
&f(1,0,\dots, 0; 0, \dots, 0)+f(0,1,\dots, 0; 0, \dots, 0)+\dots +f(0,\dots,0,1; 0, \dots, 0) \\
&=(K-1)\cdot \left(2\cdot (2K-2)!\right)+(2K-2)! = (2K-1)!.
\end{align*}
Thus the first equality holds.

We next consider computing $f(0,\dots, 0;2(K-1),0,\dots, 0)$.
Since $f_{K-1}(0,\dots, 0;0,\dots, 0)=0$, we have $\delta_{K-1}=m_{K-1}$.
Hence, since $s_K=1$ and $d_K=0$ in this case, we see from~\eqref{eq:ft} that
\[
f_K(0,\dots, 0;2(K-1),0,\dots, 0) = m_K - m_{K-1}(2s_K + (d_K-1))d_K = m_K.
\]
Thus Lemma~\ref{lem:obs2} follows.
\end{proof}


\section{Algorithms}\label{sec:algo}

In this section, we present algorithms for a cardinality constraint and a matroid constraint, respectively.

We first establish some convention here. We call the input problem the \emph{original instance}, in which all elements in the stream
are considered. Furthermore,

\begin{itemize}
\itemsep=0pt
\item$K$ denotes the size constraint for the input cardinality constraint or the rank of the input matroid;
\item $\opt$ is the optimal solution of the original instance;
\item $f\colon 2^E \to \bbZ_+$ is the input submodular function.
\end{itemize}

All algorithms are given in the form of a general procedure with a set of parameters (in particular, a re-defined submodular function
and a modified cardinality/matroid constraint). Each invocation of the general procedure is called a \emph{branch}. A branch
considers all remaining elements that have not arrived so far (i.e., a suffix of the entire stream of elements).
For a branch, we use the following notation.


\begin{itemize}
\itemsep=0pt
\item$k$ denotes the size constraint for a new cardinality constraint or the rank of the new matroid (as a rule $k \leq K$);
\item $\optD$ is the intersection of $\opt$ and the remaining elements considered by this branch;
\item $g$ is a submodular function derived from $f$. More specifically, $g(T)= f(T \mid S) := f(T \cup S) - f(S)$, where $S$ is a subset of elements that have already arrived so far before this branch starts.
\end{itemize}

Whenever $\optD$ is non-empty, we denote by $o_1$ the first element that arrives in the stream among all elements in $\optD$.


In the description of the algorithms, we assume that we are given an approximate $v \in \bbR_+$ so that
$v \leq f(\optD)\leq (1+\varepsilon)v$.
and we will explain in Section~\ref{sec:Implementation} how to implement them without knowing the value $v$.

\subsection{Cardinality Constraint}\label{sec:cardinality_weak}

We start by defining the main procedure $\Call{Cardinality}{}$.
The procedure takes four parameters $k$, $s$, $v$, and $g$; we assume that  $\optD$ satisfies the conditions that $g(\optD) \geq v$ and $|\optD|\;\leq k$, and $s$ is an upper bound  that we impose on the size of the returned solution of $\Call{Cardinality}{k,s,v,f}$.

$\Call{Cardinality}{k,s,v,g}$ is described in Algorithm~\ref{alg:cardinality}. Its basic idea can be simply described as follows.
We target the approximation ratio of $\frac{s}{s+k}$, which, intuitively, states that the ratio should get better as the allowed
solution size $s$ is increased with respect to $k$, the upper bound on the size of $\optD$. In case
that $k$ or $s=1$, we simply choose the element that gives the largest $g$-value. So assume that $k, s \geq 2$.

Depending on the value of $g(o_1)$ (for which we have no prior knowledge), we create two branches:

\begin{itemize}

\item \textbf{Branch} 1: If $g(o_1)$ is sufficiently large (precisely at least $\frac{v}{k+s-1}$), we just take the first element $e$ so that $g(e) \geq \frac{v}{k+s-1}$.
Then $e$ precedes $o_1$ (or is just $o_1$) and the rest of $\optD$.
We define a new submodular function $g'= g(\cdot \mid e)$.
We then invoke $\Call{Cardinality}{k, s-1, v - g(e), g'}$.

\item \textbf{Branch} 2: If $g(o_1)$ is too small, we simply ignore it and invoke $\Call{Cardinality}{k-1, s, \frac{k+s-2}{k+s-1}v, g}$ directly.

\end{itemize}

The output is just the better outcome of the two branches.

\begin{algorithm}[t!]
  \caption{}\label{alg:cardinality}
  \begin{algorithmic}[1]
  \Procedure{Cardinality}{$k,s,v,g$}

  \If{$k \geq 2$ and $s \geq 2$}
    \State{\underline{\textbf{(Branch 1)}}}
    \State{\hspace*{0.15in} Let $e$ be the first element such that $g(e) \geq \frac{v}{k+s-1}$}.
    \State{\hspace*{0.15in} Define $g' = g(\cdot \mid e)$.}
    \State{\hspace*{0.15in} Apply $\Call{Cardinality}{k, s-1, v - g(e), g'}$ on all the elements after $e$.}
    \State{\hspace*{0.15in} Let $S'$ be the returned solution.}
    \State{\hspace*{0.15in} $S_1 := S'+e$.}

    \State{\underline{\textbf{(Branch 2)}}}
    \State{\hspace*{0.15in} Apply $\Call{Cardinality}{k-1, s, \frac{k+s-2}{k+s-1}v, g}$ on all the elements.}
    \State{\hspace*{0.15in} Let $S_2$ be the returned solution.}
    \If{$g(S_1)>g(S_2)$}
    \Return{$S_1$.}
    \Else{}
    \Return{$S_2$.}
    \EndIf{}
  \EndIf{}
  \If{$k =1$ or $s =1$}
      \Return{$\argmax_{e} g(e)$.}
  \EndIf{}
  \EndProcedure{}
  \end{algorithmic}
\end{algorithm}

\begin{lemma}\label{lem:cardinalityBasic}
  Suppose that $k\geq 1$ and $s\geq 1$.
  Suppose that there is a set $\optD$ in the input stream so that $g(\optD)\geq v$ and $|\optD|\;\leq k$.
  Then $\Call{Cardinality}{k,s,v,g}$ returns a solution $S$ such that $g(S) \geq \frac{s}{k+s-1}v$ and $|S|\; \leq s$.
\end{lemma}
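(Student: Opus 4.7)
I would prove the lemma by strong induction on $k+s$.

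First I would dispose of the base case $k=1$ or $s=1$, where Algorithm~\ref{alg:cardinality} simply returns $\argmax_e g(e)$. Since $g$ is submodular and $|\optD|\leq k$, some element of $\optD$ has $g$-value at least $g(\optD)/k \geq v/k$, so the argmax does too; the two sub-cases (target ratio $1/k$ when $s=1$, target ratio $1$ when $k=1$ and $|\optD|\leq 1$) follow immediately. The size bound $|S|=1\leq s$ is trivial.

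For the inductive step with $k,s\geq 2$, I would split on whether $g(o_1)\geq v/(k+s-1)$ and show that in each case at least one of the two branches already meets the claimed bound, so that returning the better of $S_1,S_2$ suffices. In the first case, the element $e$ selected by Branch~1 is well-defined and, by choice of $o_1$, appears in the stream no later than $o_1$; hence $\optD'=\optD\setminus\{e\}$ (interpreted as $\optD$ if $e\notin\optD$) lies entirely in the suffix fed to the recursive call, has $|\optD'|\leq k$, and by submodularity satisfies
\[
g'(\optD') = g(\optD'\cup\{e\})-g(e) \geq g(\optD)-g(e) \geq v-g(e).
\]
Applying the induction hypothesis to $\Call{Cardinality}{k,s-1,v-g(e),g'}$ yields $|S'|\leq s-1$ and $g'(S')\geq \tfrac{s-1}{k+s-2}(v-g(e))$, so
\[
g(S_1)=g(e)+g'(S')\geq \tfrac{s-1}{k+s-2}v+\tfrac{k-1}{k+s-2}g(e);
\]
inserting $g(e)\geq v/(k+s-1)$ and simplifying via the identity $(s-1)(k+s-1)+(k-1)=s(k+s-2)$ gives $g(S_1)\geq \tfrac{s}{k+s-1}v$, with $|S_1|\leq s$. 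In the second case, I would set $\optD'=\optD\setminus\{o_1\}$; then $|\optD'|\leq k-1$ and submodularity gives $g(\optD')\geq v-v/(k+s-1)=\tfrac{k+s-2}{k+s-1}v$, so the induction hypothesis applied to Branch~2 delivers $|S_2|\leq s$ and
\[
g(S_2)\geq \tfrac{s}{(k-1)+s-1}\cdot\tfrac{k+s-2}{k+s-1}v=\tfrac{s}{k+s-1}v.
\]

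The main obstacle is not conceptual but algebraic: the threshold $v/(k+s-1)$ in the algorithm is engineered precisely so that the convex-combination bound in Branch~1 --- namely $g(e)+\tfrac{s-1}{k+s-2}(v-g(e))$ --- dominates $\tfrac{s}{k+s-1}v$ exactly at the threshold, and this is what the identity $(s-1)(k+s-1)+(k-1)=s(k+s-2)$ certifies. A minor care point is that $e$ may or may not lie in $\optD$; in both situations the same definition $\optD'=\optD\setminus\{e\}$ and the same submodularity estimate work, so no explicit case-split is required. The size bound $|S|\leq s$ propagates trivially through both branches.
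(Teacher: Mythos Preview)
Your proposal is correct and follows essentially the same approach as the paper: induction with a case split on whether $g(o_1)\geq v/(k+s-1)$, using Branch~1 in the first case and Branch~2 in the second, with the same submodularity estimates and the same final algebra. The only cosmetic differences are that you induct on $k+s$ rather than doing a nested induction on $k$ then $s$, and you write $\optD'=\optD\setminus\{e\}$ explicitly (which is arguably cleaner than the paper's use of $\optD$ itself in the recursive call when $e=o_1$); you might also add a one-line remark covering the trivial case $\optD=\emptyset$ (then $v\le g(\emptyset)=0$), which the paper handles explicitly.
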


\begin{proof}
  We begin by noting our assumption is that $|\optD|\;\leq k$ and $g(\optD)\geq v$.
  If $\optD = \emptyset$, then $v \leq f(\emptyset)$, implying that
  any solution satisfies the lemma.
  Therefore, in the following, we assume that $\optD \neq \emptyset$.

  We now prove by induction, first on $k$ and then on $s$. In the base case $k=1$, as the algorithm chooses an element $e$ maximizing $g(e)$,
  we have $g(e) \geq g(o_1) = g(\optD) \geq v$.

  For the induction step $k>1$, we apply induction on $s$. In the base case $s=1$, again the algorithm chooses an element $e$ maximizing $g(e)$,
  so $g(e) \geq g(o_1) \geq \frac{g(\optD)}{k} \geq \frac{v}{k}$, where the second inequality follows from submodularity. For the induction step $s>1$, the algorithm
  creates two branches. Now consider two possibilities.

  \begin{itemize}

  \item Suppose that $g(o_1) \geq \frac{v}{k+s-1}$.
  Then \textbf{Branch} 1 is bound to find an element $e$ with $g(e) \geq \frac{v}{k+s-1}$ and $e$ either precedes $o_1$ or is just $o_1$.
  As $g'(\optD) \geq g(\optD) - g(e) \geq v -g(e)$, we can then apply induction hypothesis on $\Call{Cardinality}{k, s-1, v-g(e), g'}$, which returns a solution $S'$ with $g'(S') \geq \frac{s-1}{k+s-2}\bigl(v-g(e)\bigr)$ and $|S'|\;\leq s-1$.
  Then
  \[
  g(S_1) =  g'(S') + g(e)  \geq \frac{s-1}{k+s-2}v + \frac{k-1}{k+s-2}g(e) \geq \frac{s}{k+s-1}v.
  \]
  Clearly, $|S_1|\; \leq s$.

  \item Suppose that $g(o_1) < \frac{v}{k+s-1}$. Then $g(\optD-o_1)\geq g(\optD) -g(o_1) \geq \frac{k+s-2}{k+s-1}v$, due to submodularity. By the induction hypothesis,
  $\Call{Cardinality}{k-1, s, \frac{k+s-2}{k+s-1}v, g}$ in \textbf{Branch} 2 returns a solution $S_2$ with $g(S_2) \geq \frac{s}{k+s-2}\frac{k+s-2}{k+s-1}v = \frac{s}{k+s-1}v$.
  Clearly, $|S_2|\; \leq s$.
  \end{itemize}

  Therefore, one of the two branches gives the desired solution. This finishes the induction step on $s$ and then also on $k$. The proof follows.
\end{proof}


\begin{theorem}\label{thm:cardinalityBasic}
  Suppose that $v \leq f(\opt)$.
  Then,the algorithm $\Call{Cardinality}{K,K,v,f}$ returns a solution $S$ with $f(S) \geq \frac{K}{2K-1}v $.
  The space complexity (for a fixed $v$) is $O\left(K 2^{2K}\right)$.
\end{theorem}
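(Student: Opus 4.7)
The approximation ratio is an immediate consequence of Lemma~\ref{lem:cardinalityBasic}. I will instantiate the lemma with $k=s=K$, $g=f$, and $\overline{\opt}=\opt$. Since $|\opt|\leq K$ and by assumption $f(\opt)\geq v$, both preconditions of the lemma are met, so the lemma yields $f(S)\geq \frac{K}{2K-1}\,v$ with $|S|\leq K$.

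The substantive part of the theorem is the space bound, which will follow from a recursive analysis of how many parallel branches are spawned. Observe that a call $\Call{Cardinality}{k,s,v,g}$ with $k,s\geq 2$ creates exactly two children, one with parameters $(k,s-1)$ and one with $(k-1,s)$; each leaf of the recursion has $k=1$ or $s=1$. Let $N(k,s)$ denote the total number of subproblem instances that are simultaneously alive during the stream; then $N(k,s)=N(k,s-1)+N(k-1,s)$ with boundary values $N(1,s)=N(k,1)=1$. The standard lattice-path interpretation gives the closed form $N(k,s)=\binom{k+s-2}{k-1}$, so in particular $N(K,K)=\binom{2K-2}{K-1}=O\!\left(4^{K}/\sqrt{K}\right)=O(2^{2K})$.

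Each active subproblem maintains its current partial solution (which, by induction on $s$, has size at most $s\leq K$), together with $O(1)$ scalars (the values $k$, $s$, $v$, and possibly the most recently selected element $e$ of Branch 1 awaiting insertion). All $N(K,K)$ branches run in lock-step on the same stream, so when a new element $e_t$ arrives, each branch independently decides whether to accept, discard, or forward it to its own children. The aggregate memory is therefore $O(K)\cdot N(K,K) = O(K\cdot 2^{2K})$, as claimed.

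The only mild subtlety is justifying that Branch 1's ``first element above threshold'' rule can be executed concurrently with Branch 2's recursion: both branches see the same remaining suffix, so Branch 2 is just a fresh subproblem on the same stream from the current position onward, and Branch 1 becomes a fresh subproblem starting from the element \emph{after} the first $e$ it selects. This poses no memory overhead beyond what is already counted by the recurrence, since the only extra state Branch 1 carries until it finds $e$ is the scalar threshold $v/(k+s-1)$, already subsumed by the $O(1)$ per-branch accounting.
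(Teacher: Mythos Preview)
Your proof is correct and follows essentially the same approach as the paper. The paper sets up the recurrence $\Gamma(k,s)\le \Gamma(k-1,s)+\Gamma(k,s-1)+c$ with $\Gamma(k,1)=\Gamma(1,s)=c$ and bounds it crudely by $k\,2^{k+s}c$; you instead count the leaves of the recursion tree exactly as $\binom{k+s-2}{k-1}$ and multiply by the $O(K)$ per-branch storage, which is slightly sharper but lands on the same $O(K\,2^{2K})$ bound.
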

\begin{proof}
  The first part follows from Lemma~\ref{lem:cardinalityBasic}.
  For space requirement, let $\Gamma(k,s)$ denotes the space required for $\Call{Cardinality}{k,s,v,g}$ and let $c$ be some constant.
  Then it follows from the algorithm that (1) $\Gamma(k,s) \leq c$ when $k=1$ or $s=1$, and (2) $\Gamma(k,s) \leq \Gamma(k-1,s) + \Gamma(k, s-1)+c$ when $k>1$ and $s>1$.
  It is easy to verify this recurrence leads to  $\Gamma(k,s) \leq k 2^{k+s}c$. The proof follows.
\end{proof}

Therefore, if we are given $v$ such that $v \leq f(\opt)\leq (1+\varepsilon)v$, $\Call{Cardinality}{K,K,v,f}$ returns a solution $S$ with $f(S) \geq \left(\frac{K}{2K-1}-\varepsilon\right)f(\opt)$.
Our algorithm consists in invoking $\Call{Cardinality}{K,K,v,f}$ for $v$ in some interval, while updating the interval dynamically.
See Section~\ref{sec:Implementation} for details.

\subsection{Matroid Constraint}

Let $\mathcal{M} = (E, \I)$ be a given matroid, whose ground set $E$ is the entire stream of elements.
Assume that the rank of $\mathcal{M}$ is $K$.
We present the procedure $\Call{Matroid}{k,v,g, I}$ as Algorithm~\ref{alg:matroid}.
Again $g$ is the submodular function
defined over the remaining elements, among which $\optD$ satisfies the conditions that $g(\optD) \geq v$ and $|\optD|\; = k \leq K$.
Moreover, an independent set $I \in \I$ is also given as a part of the input; such a set $I$ should guarantee that $I \cup \optD \in \I$.

We now give the intuition of this procedure.
We guess out possible values of $g(o_1)$ in intervals of $\frac{v}{K^4}$~(between $0$ and $0.5v$).
For each possible interval $\left[\frac{bv}{K^4}, \frac{(b+1)v}{K^4}\right]$ of $g(o_1)$, we create a branch, and we could potentially take the first element $e$ so that $I+e \in \I$, $g(e) \geq \frac{bv}{K^4}$, and then define
$g' = g(\cdot \mid e)$ and invoke $\Call{Matroid}{k-1, (1-\frac{1}{K^4})v-2g(e), g', I + e}$ on the elements after $e$.
So far the idea is similar to the previous cardinality case.
The problem of this approach is that there is no guarantee that $(I + e) \cup (\optD-o_1) \in \I$, that is a required condition for the procedure $\Call{Matroid}{k-1, (1-\frac{1}{K^4})v-2g(e), g', I + e}$.

To remedy this issue, we introduce the following idea.
For each possible interval $\left[\frac{bv}{K^4}, \frac{(b+1)v}{K^4}\right]$ of $g(o_1)$, we create a set $T$, initialized as $\emptyset$.
Every time a new element $e$
arrives so that (1) $I \cup T + e \in \I$, and (2) $g(e) \geq \frac{bv}{K^4}$, we add $e$ into $T$ and create a new branch
$\Call{Matroid}{k-1, (1-\frac{1}{K^4})v-2g(e), g', I + e}$, where $g' = g(\cdot \mid e)$.
As we will show (see the proof of Lemma~\ref{lem:matroidBasic}),
at least one of the elements $e \in T$ satisfies the property $(I + e) \cup (\optD-o_1) \in \I$.
Apparently, $|T|\; \leq K$, so in total we have at most $K+1$ branches for each $b$.


\begin{algorithm}[t!]
  \caption{}\label{alg:matroid}
  \begin{algorithmic}[1]
  \Procedure{$\Call{Matroid}{k,v,g,I}$}{}
    \If{$k>1$}
      \State{$\beta := \max_{b \in \bbZ_+} \frac{b}{K^4} \leq \frac{v}{2}$}.
        \For{$0 \leq b  \leq \beta$}
        \State{$T := \emptyset$.}
          \While{$| I \cup T| < K $}
          \State{Let $e$ be the first element in the remaining stream satisfying (1) $I \cup T + e \in \I$, and (2) $g(e) \geq \frac{bv}{K^4}$.}
          \State{\underline{\textbf{Branch $(b, |T|+1)$}}}
            \State{\hspace*{0.15in} Define $g' = g(\cdot \mid e)$.}
            \State{\hspace*{0.15in} Apply $\Call{Matroid}{k-1, (1-\frac{1}{K^4})v-2g(e), g', I + e}$ on all the elements after $e$.}
            \State{\hspace*{0.15in} Let the returned solution be $S'$.}
            \State{\hspace*{0.15in} $S_{b, |T|+1} := e + S'$.}
           \State{$T := T+e$.}
         \EndWhile{}
       \EndFor{}
    \EndIf{}
      \State{\underline{\textbf{Branch $0$}}}
        \State{\hspace*{0.15in} $S_{0} := \argmax_{e, e\in \I} g(e)$.}
      \State{\Return{$\argmax\left\{ g(S)\mid S\in\{S_{b,j}\mid b=0,\dots, \beta, j=1,\dots,k\}\cup S_0\right\}$.}}
  \EndProcedure{}
  \end{algorithmic}
\end{algorithm}

The following fact is well-known.
\begin{proposition}\label{pro:simpleFact}
  Suppose that $C_1$ and $C_2$ are two circuits and $x \in C_1 \cap C_2$. Then, for any $y \in C_1 \backslash C_2$, there exists another circuit $C \subseteq C_1 \cup C_2 - x$ and $C \ni y$.
\end{proposition}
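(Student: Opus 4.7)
The plan is to prove this by contradiction using the closure operator of the matroid, which is the most direct route to the so-called strong circuit elimination axiom. I will first translate ``no circuit containing $y$ lies inside $C_1 \cup C_2 - x$'' into a statement about the closure, namely that $y \notin \mathrm{cl}(C_1 \cup C_2 - \{x,y\})$. Indeed, $y$ lies in some circuit contained in a set $Z$ if and only if $\{y\}$ fails to be independent in the contracted/restricted sense that $y \in \mathrm{cl}(Z - y)$, so the negation of the conclusion gives exactly this closure statement.

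Next, I would exploit the two circuits individually. Since $C_2$ is a circuit containing $x$ and does not contain $y$ (because $y \in C_1 \setminus C_2$), we have $C_2 - x \subseteq C_1 \cup C_2 - \{x,y\}$ and the circuit property gives $x \in \mathrm{cl}(C_2 - x)$. By monotonicity of the closure, this yields
\[
  x \in \mathrm{cl}(C_1 \cup C_2 - \{x,y\}).
\]
By the absorption (idempotence) property of closure, adjoining $x$ to this set does not change its closure, so
\[
  \mathrm{cl}(C_1 \cup C_2 - y) \;=\; \mathrm{cl}(C_1 \cup C_2 - \{x,y\}).
\]
Now I would apply the same circuit reasoning to $C_1$: since $C_1$ is a circuit containing $y$, we have $y \in \mathrm{cl}(C_1 - y)$, and because $C_1 - y \subseteq C_1 \cup C_2 - y$, monotonicity gives $y \in \mathrm{cl}(C_1 \cup C_2 - y)$. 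Combining with the displayed equality yields $y \in \mathrm{cl}(C_1 \cup C_2 - \{x,y\})$, contradicting the first step.

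The main obstacle, as far as I can see, is really only notational: making sure the closure-theoretic reformulation of ``$y$ lies in a circuit contained in $Z$'' is cleanly justified. This equivalence is standard (it is one of the ways the closure operator is characterized from the circuit axioms), but it is the one place where the argument silently invokes a matroid axiom rather than performing a direct manipulation. Once that equivalence is granted, the rest is just two applications of the circuit property together with monotonicity and idempotence of $\mathrm{cl}$, so the whole proof is only a few lines.
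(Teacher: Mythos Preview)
Your proof is correct. The equivalence you flag as the only delicate point---that for $y\in Z$ there is a circuit $C\subseteq Z$ through $y$ if and only if $y\in\mathrm{cl}(Z-y)$---is exactly the standard characterization of the closure via circuits, and once it is granted the two applications of the circuit property together with monotonicity and idempotence of $\mathrm{cl}$ go through as you describe.

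There is nothing to compare against: the paper does not prove this proposition at all. It is introduced with ``The following fact is well-known'' and is simply quoted as the strong circuit elimination axiom, to be used in the proof of Lemma~\ref{lem:matroidFirst}. Your argument is one of the standard derivations of strong circuit elimination from the closure axioms, so it fits seamlessly as a justification the paper chose to omit.
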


\begin{lemma}\label{lem:matroidFirst}
  Let $T= \left\{e_1, \ldots, e_{|T|}\right\}$.
  Suppose that $I \cup T\in \I$ and $I \cup \optD  \in \I$ holds.
  Furthermore, suppose that for each $e_i \in T$, $I \cup \optD + e_i$ contains a circuit $C_i$ so that $C_i \not \ni o_1$.
  Then $I \cup T + o_1 \in \I$.
\end{lemma}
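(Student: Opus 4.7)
The plan is to argue by contradiction. Suppose $I \cup T + o_1 \notin \I$; then it contains a circuit $D$, and since $I \cup T \in \I$, we must have $o_1 \in D$. I would then iteratively apply Proposition~\ref{pro:simpleFact}, using the circuits $C_i$ one at a time to swap out each element of $D \cap T$ while keeping $o_1$ in the circuit, until no element of $T$ remains. At that point the resulting circuit would lie inside $I \cup \optD$, contradicting the hypothesis $I \cup \optD \in \I$.

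First I would extract two bookkeeping consequences of the hypotheses. For each $e_i \in T$, the circuit $C_i \subseteq I \cup \optD + e_i$ must actually contain $e_i$, because otherwise $C_i \subseteq I \cup \optD$ would be a circuit inside an independent set. The same observation shows $e_i \notin I$ and $e_i \notin \optD$, since in either case $I \cup \optD + e_i = I \cup \optD$ is independent and contains no circuit. Hence $T$ is disjoint from $I \cup \optD$; this disjointness is the key property that keeps the exchange argument from stalling.

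Next comes the iteration. Setting $D_0 := D$, I would build a sequence of circuits $D_0, D_1, \ldots$ with $o_1 \in D_j$ and $|D_j \cap T|$ strictly decreasing. Given $D_{j-1}$ containing some $e \in T$, apply Proposition~\ref{pro:simpleFact} to $D_{j-1}$ and $C_e$, with $x := e$ (which lies in both) and $y := o_1 \in D_{j-1} \setminus C_e$. This yields a circuit $D_j \subseteq D_{j-1} \cup C_e - e$ with $o_1 \in D_j$. The crucial observation is that $C_e \subseteq I \cup \optD + e$ contains no element of $T$ other than $e$ itself, by the disjointness step, so $|D_j \cap T| \leq |D_{j-1} \cap T| - 1$. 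After at most $|D \cap T|$ steps I arrive at a circuit $D^*$ containing $o_1$ with $D^* \cap T = \emptyset$.

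Finally, unrolling the inclusions gives $D^* \subseteq D_0 \cup \bigcup_i C_{e_i} \subseteq (I \cup T + o_1) \cup (I \cup \optD) = I \cup T \cup \optD$, and removing $T$ yields $D^* \subseteq I \cup \optD$. Since $I \cup \optD \in \I$, the circuit $D^*$ cannot exist, giving the desired contradiction. I expect the main obstacle to be verifying $T \cap \optD = \emptyset$ at the start, since without it the exchange step could reintroduce elements of $T$ through $C_e$ and the termination argument collapses; but once that disjointness is pinned down, the rest is routine circuit-exchange manipulation.
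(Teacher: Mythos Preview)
Your proposal is correct and follows essentially the same approach as the paper: assume a circuit through $o_1$ exists, then repeatedly apply the strong circuit-exchange axiom (Proposition~\ref{pro:simpleFact}) with the given circuits $C_i$ to strictly decrease the number of $T$-elements in the circuit until one obtains a circuit inside $I\cup\optD$, contradicting independence. The paper phrases the iteration as a one-step claim (``if $C^*\cap T\neq\emptyset$ then there is $\overline{C}$ with smaller $|\overline{C}\cap T|$'') and works inside the ambient set $I\cup\optD\cup T$ from the start, but the mechanics are identical; your explicit verification that $T\cap(I\cup\optD)=\emptyset$ and that $e_i\in C_i$ is exactly the content behind the paper's ``clearly, $|\overline{C}\cap T|<|C^*\cap T|$''.
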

\begin{proof}
  We prove by establishing a more general statement: there is no circuit $C^* \subseteq I \cup \optD \cup T $ so that $C^* \ni o_1$
  (this implies that the lemma as $I \cup T \in \I$).
  We proceed by contradiction. Assume that such a circuit $C^*$ does exist and we will establish the following claim.
  \begin{claim}
    Suppose that $C^* \ni o_1$ and $C^* \subseteq I \cup \optD \cup T$ and $C^* \cap T  \neq \emptyset $.
    Then, there exists another circuit $\overline{C} \subseteq I \cup \optD \cup T$, $\overline{C} \ni o_1$ and $|\overline{C} \cap T| < |C^* \cap T|$.
  \end{claim}
  \begin{proof}
    To prove the claim, assume that $e_i \in C^* \cap T$.
    Then by Proposition~\ref{pro:simpleFact}, we have a circuit $\overline{C} \subseteq C_i \cup C^* - e_i$ and $\overline{C} \ni o_1$
    (recall that $C_i$ is the circuit contained in $I \cup \optD + e_i$ and $C_i \not \ni o_1$).
     Clearly, $|\overline{C} \cap T | < |C^* \cap T|$.
  \end{proof}

  Now by this claim, we conclude that there is a circuit $\overline{C} \subseteq I \cup \optD$ and $\overline{C} \ni o_1$,
  a contradiction to the assumption that $I \cup \optD  \in \I$.
\end{proof}

\begin{lemma}\label{lem:matroidBasic}
  Suppose that there is a set $\optD$ in the input stream so that $g(\optD) \geq v$, $I \cup \optD \in \I$, and $|\optD|\; = k $.
  Then  $\Call{Matroid}{k, v, g, I}$ returns a solution $S$ so that $g(S) \geq \frac{1}{2}\Bigl(1-\frac{1}{2K-k}\Bigr)v$ and $ I \cup S \in \I$.
\end{lemma}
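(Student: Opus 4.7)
The plan is to mimic the inductive structure of Lemma~\ref{lem:cardinalityBasic} but with an additional matroid-exchange argument to handle the fact that the element $e_j$ picked by a branch need not, a priori, form an independent set with $\optD - o_1$. I will proceed by induction on $k$. The base case $k = 1$ is immediate: since $I \cup \optD \in \I$ and $\optD = \{o_1\}$, the element $o_1$ is a feasible candidate for $\textbf{Branch } 0$, which returns the element of maximum $g$-value in $\{e : I + e \in \I\}$; the returned value is at least $g(o_1) = g(\optD) \geq v \geq \tfrac{1}{2}(1-\tfrac{1}{2K-1})v$.

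For the inductive step $k > 1$, I would split on the size of $g(o_1)$. If $g(o_1) \geq v/2$, then $\textbf{Branch } 0$ alone yields an element of $g$-value at least $v/2 \geq \tfrac{1}{2}(1-\tfrac{1}{2K-k})v$. Otherwise there is a unique $b \in \{0, 1, \ldots, \beta\}$ with $\tfrac{bv}{K^4} \leq g(o_1) < \tfrac{(b+1)v}{K^4}$, and the goal is to locate a branch $(b,j)$ whose chosen element $e_j$ satisfies both (i) $e_j$ arrives no later than $o_1$ in the stream (so that $\optD - o_1$ lies entirely in the suffix handed to the recursive call), and (ii) $(I + e_j) \cup (\optD - o_1) \in \I$ (so that $\optD - o_1$ is a valid witness for the recursion).

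The heart of the argument is producing such an $e_j$ using Lemma~\ref{lem:matroidFirst}. Let $T^*$ denote the state of the set $T$ at the moment $o_1$ is processed in the while-loop for this $b$ (or the terminal state, if the loop halts before $o_1$'s turn). If $o_1$ itself has been inserted into $T^*$, pick $e_j = o_1$, for which (i) and (ii) are trivial. Otherwise, in both remaining sub-cases --- the loop halted with $|I \cup T^*| = K$ before $o_1$'s arrival, or $o_1$ was tested and rejected (which, since $g(o_1) \geq bv/K^4$, can only be due to the independence condition failing) --- we conclude $I \cup T^* + o_1 \notin \I$. The contrapositive of Lemma~\ref{lem:matroidFirst} applied to $T^*$ then delivers some $e_j \in T^*$ such that $I \cup \optD + e_j$ is either independent or contains a circuit through $o_1$; in either case $(I + e_j) \cup (\optD - o_1) \in \I$, giving (ii), while membership $e_j \in T^*$ gives (i).

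Once such $e_j$ is fixed, submodularity together with the choice of $b$ yields $g'(\optD - o_1) \geq g(\optD) - g(o_1) - g(e_j) \geq (1-\tfrac{1}{K^4})v - 2g(e_j)$, so the recursive call satisfies its precondition with $\optD' := \optD - o_1$ of size $k-1$. Applying the inductive hypothesis and writing $S_{b,j} = e_j + S'$ gives $g(S_{b,j}) \geq \tfrac{1}{2}\bigl(1 - \tfrac{1}{2K-k+1}\bigr)\bigl((1-\tfrac{1}{K^4})v - 2g(e_j)\bigr) + g(e_j)$. A short calculation, using $m := 2K - k \leq 2K - 2$ and $K \geq k \geq 2$ so that $m < K^2$, shows the gap to $\tfrac{1}{2}(1-\tfrac{1}{2K-k})v$ is at least $\tfrac{v}{2(m+1)}\bigl(\tfrac{1}{m} - \tfrac{m}{K^4}\bigr) \geq 0$, closing the induction. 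The principal obstacle is the matroid-exchange step invoking Lemma~\ref{lem:matroidFirst}; the value-accounting is routine and parallels Lemma~\ref{lem:cardinalityBasic}, with the $1/K^4$-slack in the recursion absorbed by the $1/m^2$-gain of the target approximation ratio.
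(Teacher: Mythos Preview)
Your proof is correct and follows essentially the same approach as the paper: both fix the interval index $b$ determined by $g(o_1)$, take $T$ at the moment $o_1$ arrives, and use Lemma~\ref{lem:matroidFirst} (you via its contrapositive, the paper via the direct statement) to produce an element $e$ with $(I+e)\cup(\optD-o_1)\in\I$, after which the value calculation is identical. Your explicit treatment of the case $g(o_1)\geq v/2$ via \textbf{Branch}~$0$ is in fact more careful than the paper's write-up, which tacitly assumes such a $b\leq\beta$ always exists.
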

\begin{proof}
  We prove by induction on $k$.
  For the base case $k=1$,
  \textbf{Branch} $0$ is bound to get an element $e$ so that $g(e) \geq g(\optD) \geq v$.
  So assume that $k>1$.
  Then there exists $b$ with $0 \leq b \leq \beta$ such that $\frac{bv}{K^4} \leq g(o_1) \leq   \frac{(b+1)v}{K^4}$.
  Let $T= \left\{e_1, \ldots, e_{|T|}\right\}$ be the set collected for this $b$ at the moment immediately before $o_1$ arrives.
  There are two possibilities.
  \begin{itemize}
    \itemsep=0pt
    \item If some element $e_i \in T$
    satisfies the condition that either $ I \cup \optD + e_i$ is independent or contains a circuit $C_i$ and $C_i \ni o_1$, then
    $ I \cup \optD + e_i - o_1 \in \I$.
    Moreover, it holds that $g'(\optD-o_1) \geq g(\optD)-g(o_1) - g(e)\geq \left(1-\frac{1}{K^4}\right)v - 2g(e)$ as $g(o_1)\leq g(e)+\frac{v}{K^4}$.
    \item Otherwise, that is, if every element $e_i \in T$ satisfies the condition that $I \cup \optD + e_i$ contains a circuit $C_i$
    and  $C_i \not \ni o_1$, then by Lemma~\ref{lem:matroidFirst}, $I\cup T+o_1\in \I$, implying that $o_1$ will be added into $T$.
  \end{itemize}
  In both cases, we know that there exists an element $e \in T \cup \{o_1\}$ so that (1) $g(e) \geq \frac{bv}{K^4}$,
  (2) $g'(\optD-e) \geq \left(1-\frac{1}{K^4}\right)v - 2g(e)$, and (3) $ (I + e) \cup  (\optD  - o_1) \in \I$.
  Furthermore, all elements of $\optD - o_1$ are considered by $\Call{Matroid}{k-1, \left(1-\frac{1}{K^4}\right)v - 2g(e), g', I + e}$.
  Then by induction hypothesis, in \textbf{Branch}$(b, j)$ for some $j$, we obtain a solution
  \begin{align*}
  g(S_{b, j}) & \geq g(e) + \left(\left(1-\frac{1}{K^4}\right)v - 2g(e)\right)\frac{1}{2}\left( 1- \frac{1}{2K-k+1} \right) \\
  & \geq \frac{1}{2}\left( 1- \frac{1}{2K-k} \right)v.
  \end{align*}
  This finishes the induction step.

  The fact that the returned solution $S$ satisfies $S \cup I \in \I$ is easy to verify.
\end{proof}


\begin{theorem}\label{thm:matroidBasic}
  Suppose that $v \leq f(\opt)$.
  Then the algorithm $\Call{Matroid}{K, v,f, \emptyset}$ returns a solution $S$ guaranteeing that $f(S) \geq \left(\frac{1}{2}-\frac{1}{2K}\right)v$.
  The space complexity (for a fixed $v$) is $O\left(K^{5K+1}\right)$.
\end{theorem}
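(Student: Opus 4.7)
The plan is to deduce Theorem~\ref{thm:matroidBasic} directly from Lemma~\ref{lem:matroidBasic} together with a routine analysis of the branching tree produced by Algorithm~\ref{alg:matroid}; essentially no new ideas are needed beyond what appears in the lemma and in the proof of Theorem~\ref{thm:cardinalityBasic}.

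For the approximation guarantee, I would invoke Lemma~\ref{lem:matroidBasic} with $k = K$, $g = f$, and $I = \emptyset$. Before doing so, I would note that we may assume $|\opt| = K$ without loss of generality: since $f$ is monotone, any optimal set extends to a basis of $\mathcal{M}$ without decreasing the objective. Then $\opt$ can play the role of $\optD$ in the lemma, since $g(\opt) = f(\opt) \geq v$, $|\opt| = K$, and $\emptyset \cup \opt = \opt \in \mathcal{I}$. The conclusion of Lemma~\ref{lem:matroidBasic} immediately gives
\[
f(S) \;\geq\; \tfrac{1}{2}\!\left(1 - \tfrac{1}{2K - K}\right) v \;=\; \left(\tfrac{1}{2} - \tfrac{1}{2K}\right) v.
\]

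For the space bound, I would set up a recurrence analogous to the one used in the proof of Theorem~\ref{thm:cardinalityBasic}. Let $\Gamma(k)$ denote the total space used by one invocation of $\Call{Matroid}{k,\cdot,\cdot,\cdot}$, inclusive of every branch it spawns. Inspecting the pseudocode, a single call stores the sets $I$ and $T$ together with the current element $e$ and a constant number of counters; since $I$ and $T$ are independent sets in $\mathcal{M}$, we have $|I|, |T| \leq K$, so the local storage contribution is $O(K)$. The call spawns at most $(\beta + 1)\cdot K + 1 = O(K^5)$ recursive calls at level $k-1$, because the outer loop has $\beta + 1 = O(K^4)$ iterations in $b$ and the inner \textbf{while} loop runs at most $K$ times (the condition $|I \cup T| < K$ can fail only after at most $K$ successful insertions). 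Combined with the base case $\Gamma(1) = O(K)$ from \textbf{Branch} $0$, this yields $\Gamma(k) \leq c\,K^5\,\Gamma(k-1) + cK$ for $k > 1$, which unwinds to $\Gamma(K) = O(K^{5K+1})$, as claimed.

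The main obstacle is almost entirely bookkeeping: one must be careful that the number of grid points in the guess for $g(o_1)$ contributes only $O(K^4)$ (so the branching factor per node stays at $O(K^5)$ rather than blowing up in $v$), and that the per-node storage of $O(K)$ does not enlarge the leading factor in $\Gamma(K)$. Once these are verified, both parts of the theorem fall out with no additional structural work beyond Lemma~\ref{lem:matroidBasic}.
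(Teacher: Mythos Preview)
Your proposal is correct and follows essentially the same approach as the paper: invoke Lemma~\ref{lem:matroidBasic} for the approximation bound, and analyze the branching tree (fan-out $O(K^5)$, depth $K$, per-node storage $O(K)$) for the space bound. Your explicit observation that one may extend $\opt$ to a basis so that $|\opt|=K$ is a detail the paper's proof glosses over but is indeed required since Lemma~\ref{lem:matroidBasic} is stated with $|\optD|=k$ rather than $|\optD|\leq k$.
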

\begin{proof}
  The first part follows from Lemma~\ref{lem:matroidBasic}.
  For the space complexity, consider the branching tree with $\Call{Matroid}{K, v,f, \emptyset}$ as the root.
  By the algorithm, each node has at most $O(K^5)$ branches.
  Furthermore, the space required for such a node is $K+c$ for some constant $c$~(for each branch, we need to store an element).
  The depth of such a tree is at most $K$. So the total complexity is at most $O(K^{5K+1})$.
\end{proof}

\subsection{Implementation}\label{sec:Implementation}
We now explain how to implement the algorithms described in the preceding sections without knowing the optimal value $v$.
We adapt the dynamic-update technique in~\cite{Badanidiyuru:2014ib}.

We here explain the cardinality case.
The matroid case is analogous.
We will let $v$ be a number of the form ${\bigl(1+\varepsilon\bigr)}^i$ for some $i \in \bbZ$.
We observe that $\max_e f(e)\leq f(\opt)\leq K \max_e f(e)$.
Let $m$ be the maximum of $f(e)$ among all elements $e$ that
have arrived so far.
The algorithm $\Call{Cardinality}{K,K,v,f}$ is activated only when $\frac{m}{{(1+\varepsilon)}^2} \leq v \leq \frac{K m}{\varepsilon}$.
The critical observation is that, if $f(\opt)>\frac{K m}{\varepsilon}$, then
the set of optimal items $\opt' \setminus \opt$ that have arrived so far has the property that
\[
f(\opt') \leq \sum_{o_i \in \opt'} f(o_i) \leq K m \leq \varepsilon f(\opt).
\]
Therefore, the first time an element $e$ arrives so that $m:= f(e)$ and $m \leq f(\opt)\leq \frac{K m}{\varepsilon}$,
there exists a subset $\optD = \opt \backslash \opt'$ so that $f(\optD) \geq (1- \varepsilon) f(\opt)$.
This means that we can use $\optD$ instead of $\opt$ to perform the algorithm.
Then the interval $\left[\frac{m}{{(1+\varepsilon)}^2}, \frac{K m}{\varepsilon}\right]$ contains $v$ such that $v\leq f(\optD)\leq (1+\varepsilon)v$, since $m \leq f(\opt)\leq \frac{K m}{\varepsilon}$.
It follows from Theorem~\ref{thm:cardinalityBasic} that, using $O\left(K 2^{2K}\right)$ space, $\Call{Cardinality}{K,K,v,f}$ returns a solution $S$ with $f(S) \geq \frac{K}{2K-1}v$, implying that
\[
f(S)\geq \frac{K}{2K-1}(1-\varepsilon)f(\optD)\geq \frac{K}{2K-1}(1- O(\varepsilon)) f(\opt).
\]
The number of guesses for $v$ is equal to $O\left(\log_{1+\varepsilon}\left( \frac{K}{\varepsilon}\right)\right)=O\left(\frac{\log(K/\varepsilon)}{\varepsilon}\right)$.


Theorems~\ref{thm:alg-weak-oracle-cardinality} and~\ref{thm:alg-weak-oracle-matroid} follow from the preceding discussion and Theorems~\ref{thm:cardinalityBasic} and~\ref{thm:matroidBasic}.







\bibliographystyle{abbrv}
\bibliography{main}

\end{document}